\documentclass[12pt]{amsart}
\usepackage{a4}
\usepackage{graphicx}
\usepackage{amsmath,amssymb,amsthm, color, fancyvrb, array, enumerate}
\usepackage{amsaddr}

\theoremstyle{plain}
\newtheorem{thm}{Theorem}[section]
\newtheorem{lem}[thm]{Lemma}
\newtheorem{cor}[thm]{Corollary}
\newtheorem{prop}[thm]{Proposition}
\theoremstyle{definition}
\newtheorem{defn}[thm]{Definition}

\newtheorem{assumpt}[thm]{Assumption}

\newcommand{\R}{{\mathbb R}}

\newcommand{\tr}{\mathrm{tr}}
\newcommand{\di}{\mathrm{div}}
\newcommand{\p}{\mathcal{P}}
\newcommand{\D}{\mathbb{D}}
\newcommand{\N}{\mathbb{N}}
\newcommand{\WMN}{{\partial M_\mathbb{N}}}
\newcommand{\WMD}{{\partial M_\mathbb{D}}}
\newcommand{\MN}{{\partial M_{N}}}
\newcommand{\MD}{{\partial M_{D}}}
\newcommand{\vol}{\textrm{vol}}

\title[Constraint Equations on Compact with Boundary]{The Einstein Constraint Equations on Compact Manifolds with Boundary}
\author{{James Dilts}}
\address{Department of Mathematics, University of Oregon\\ Eugene, OR 97403}
\email{jdilts@uoregon.edu}
\date{\today}

\begin{document}
\maketitle
\begin{abstract} We continue the study of the Einstein constraint equations on compact manifolds with boundary initiated by Holst and Tsogtgerel. In particular, we consider the full system and prove existence of solutions in both the near-CMC and far-from-CMC (for Yamabe positive metrics) cases. We also make partial progress in proving the results of previous ``limit equation" papers by Dahl, Gicquaud, Humbert and Sakovich.
\end{abstract}

\section{Introduction}

General relativity can be considered the study of Lorentzian manifolds $(\widetilde M^n, h)$, called spacetimes, that satisfy the Einstein equations. In the vacuum case, these equations reduce to $\textrm{Ric}_h = 0$, i.e., that $(\widetilde M,h)$ is Ricci flat. One way to approach this study is by considering which $n-1$ Riemannian manifolds $(M,\widetilde g)$ can be isometrically embedded in some spacetime $(\widetilde M, h)$ with a given second fundamental form $K$. A necessary condition for this to occur is that $\widetilde g$ and $K$ satisfy the Einstein constraint equations,
\begin{align}\label{eq:constraints1}
R_{\widetilde g} &= |K|^2_{\widetilde g} - (\tr_{\widetilde g} K)^2\\ \label{eq:constraints2}
0 &= \di_{\widetilde g} K - \nabla \tr_{\widetilde g} K,
\end{align} where $R_{\widetilde g}$ is the scalar curvature of ${\widetilde g}$. Choquet-Bruhat showed in \cite{CB52} that this condition is in fact also sufficient to produce a spacetime into which $(M, \widetilde g)$ embeds.

The main method used in trying to understand the full set of triples $(M, \widetilde g, K)$ that can be thus embedded is called the conformal method. It was developed by Lichnerowicz, Choquet-Bruhat and York. Let $N=\frac{2n}{n-2}$ and let $L$ be the conformal Killing operator,
\[
LW_{ij} = \nabla_i W_j + \nabla_j W_i - \frac{2}{n} \nabla^k W_k g_{ij}.
\] The conformal method then decomposes $\widetilde g$ and $K$ as
\begin{equation}\label{decomposition}
\begin{array}{c}
\widetilde g = \phi^{N-2} g \\
K = \frac\tau{n} \phi^{N-2} g_{ij} + \phi^{-2}(\sigma_{ij} + LW_{ij}),
\end{array}
\end{equation} where $g$ is a Riemannian metric, $\sigma$ is a trace-free, divergence-free symmetric 2-tensor, $\tau$ and $\phi$ are scalar functions and $W$ is a vector field. Note that $\tau$ can be interpreted as the mean curvature of $M$ in the spacetime $\widetilde M$.

Using the decomposition \eqref{decomposition} in the constraint equations \eqref{eq:constraints1}-\eqref{eq:constraints2}, a calculation reduces the constraint equations to
\begin{equation}\label{origLich}
-\frac{4(n-1)}{n-2} \Delta \varphi + R_{g} \varphi +\frac{n-1}{n}\tau^2 \varphi^{N-1} - |\sigma +LW|^2 \varphi^{-N-1} = 0,
\end{equation}
\begin{equation}\label{origVect}
\di LW - \frac{n-1}{n} \varphi^{N} d\tau =0,
\end{equation} where $\Delta$ is the Laplacian with negative eigenvalues. Equation \eqref{origLich} is known as the Lichnerowicz equation, equation \eqref{origVect} is often known as the vector equation, while together they are known as the conformal constraint equations.

With this reformulation, we can study the set of triples $(M, \widetilde g, K)$ by asking which ``seed data" $(M, g, \sigma, \tau)$ lead to solutions $(\phi, W)$ of the conformal constraint equations \eqref{origLich}-\eqref{origVect}.

The simplest case is when $d\tau \equiv 0$, which can be interpreted as $M$ having constant mean curvature (CMC). In this case the conformal constraint equations decouple, making the system much easier to solve. Following work from York, O'Murchadha, Choquet-Bruhat and others, Isenberg completed the classification of seed data on closed manifolds leading to solutions in \cite{Isenberg95}.

Since then much progress has been made, both in considering other types of manifolds and in loosening the restriction on the mean curvature. Hyperbolic \cite{IP97,GS12}, asymptotically Euclidean \cite{CBIY00, DGI13}, asymptotically cylindrical \cite{CM12,CMP12,Leach13} and compact with boundary \cite{HT13} manifolds have now been considered. The case when the mean curvature is near constant (i.e., the near-CMC condition) is well understood for closed manifolds (see \cite{IM96,IOM04,ACI08}), and progress has been made in other cases as well (such as in this paper or \cite{IP97,GS12,DGI13,Leach13}). The far-from-CMC case resists analysis, but limited results have been achieved, originally by Holst, Nagy and Tsogtgerel in \cite{HNT08} and extended by Maxwell in \cite{Maxwell09}. However, these results unfortunately instead require $|\sigma|$ to be sufficiently small. It is currently unknown whether both $|\sigma|$ and $d\tau$ can be large. For a nice review of the constraints, though leaving out the most recent progress, see \cite{BI04}.

In this paper, we consider compact manifolds with boundary. Physically, these can be seen as pieces of larger spacelike slices of a spacetime, since we don't have any reason to suspect the universe has a boundary. Also, compact manifolds with boundary appear naturally in some numerical computations where an exterior boundary condition can be an approximation of an asymptotically flat end, while interior boundary conditions arise from black hole excision, cf. \cite{Gourgoulhon12}.

We extend the results of several different papers to this new situation. In doing this, we are indebted to the groundwork laid by Holst and Tsogtgerel in \cite{HT13}, where they considered the Lichnerowicz equation \eqref{origLich} alone. We will extend the results of Holst, Nagy, Tsogtgerel and Maxwell in finding far-from-CMC solutions (cf. \cite{HNT08,Maxwell09}), as well as extend their methods, such as proving a global sub/supersolution existence theorem and using a Green's function to show that only a supersolution is actually needed in many cases. We also make partial progress in proving the results of previous ``limit equation'' papers such as \cite{DGH11, GS12,DGI13}. We were unable, however, to complete the final step in deriving the limit equation.

Holst, Meier and Tsogtgerel have recently written a paper \cite{HMT13} that is similar to this paper in several ways. It also finds solutions to the constraint equations on compact manifolds with boundary, and with lower regularity than in this paper. They also handle the coupling of the boundary conditions in a slightly different manner. However, they do not include the Green's function results nor the limit equation results.

\section{Setup}

In this section, we set up a general boundary value formulation of the conformal constraint equations. We introduce a number of pieces of notation, list our standard assumptions, and then list the associated boundary value problems for the conformal constraint equations in \eqref{lichSystem}-\eqref{vectorSystem}.

The boundary conditions for solving the Einstein constraint equations on compact manifolds with boundary can be fairly complicated. For instance, the conditions near a black hole in order to have a trapped surface are best represented by a Robin boundary condition. If we are taking a compact piece of an asymptotic manifold, a Dirichlet condition might be better. To allow for greater generality, we split the boundary of the manifold into two pieces in two different ways.

Let $\partial M = \MD \cup \MN$, $\MD \cap \MN = \emptyset$. The scalar field $\phi$ will hold a Dirichlet condition on $\MD$ and a Robin condition on $\MN$. Similarly, let $\partial M = \WMD \cup \WMN$, $\WMD \cap \WMN = \emptyset$. The vector field $W$ will hold a Dirichlet condition on $\WMD$ and a Neumann condition on $\WMN$. Though, in general, we would expect $\WMD = \MD$ and $\WMN = \MN$, we do not require this.

We will be working with functions in Sobolev spaces like $W^{s,p}$. Since these functions are only defined up to a set of measure zero, they do not normally have well defined boundary values. Let $\gamma$ be the trace operator, which gives boundary values for functions in $W^{s,p}$. We will let $\gamma_N$, for instance, be the trace on $\MN$. These maps, $\gamma_N$, $\gamma_D$, $\gamma_\N$ and $\gamma_\D$, are continuous and surjective maps $W^{s,p} \to W^{s-\frac{1}{p},p}(\partial M_i)$ for the appropriate subscript. (Sobolev spaces without specified domains mean over $M$.) Let $\nu$ be the unit (outward) normal on all of $\partial M$. Precomposing the boundary maps with $\partial_\nu$ or other similar derivative operators also gives continuous surjective maps, but to $W^{s-1-\frac{1}{p},p}(\partial M_i)$, as long as $s-1/p$ is not a integer (which can be avoided by reducing $p$ slightly).

We can then formulate the conformal constraint equations in a general way as follows. Let
\[
a_R = \frac{n-2}{4(n-1)} R, \hspace{5mm} a_\tau = \frac{n(n-2)}{4}\tau^2, \hspace{5mm} a_w = \frac{n-2}{4(n-1)}|\sigma+LW|^2.
\] Let $\phi_D>0$ be a function on $\MD$. Let $b_H, b_\theta, b_\tau, b_w$ be functions on $\MN$. We introduce the nonlinear operator
\[
f = \widetilde f \circ \gamma_N
\] where $\widetilde f$ is defined by
\[
\widetilde f(\phi) = b_H \phi + b_\theta \phi^e + b_\tau \phi^{N/2} + b_w\phi^{-N/2}
\] where $e\in\R$. This complicated form for $f$ is to allow a wide range of interesting boundary conditions, and was formulated by Holst and Tsogtgerel in \cite{HT13}. There are interesting boundary conditions that allow the $b_i$ to depend on $LW(\nu,\nu)$, which in turn depends on $\phi$, but this can make solving the combined system much harder since it couples the boundary conditions. Thus, we will assume that the $b_i$ depend only on the seed data $(M, g, \sigma, \tau)$, i.e., the $b_i$ are independent of $\phi$ directly or indirectly.

Let $BW := LW(\nu,\cdot)$ for vector fields $W$. Let $X$ be a vector field on $M$ of the form
\[
X= \sum_i c_i \phi^{k_i},
\] a finite sum, for some vectors $c_i$ of the seed data and some nonnegative real numbers $k_i$. For example, the standard choice is $X= \frac{n-1}{n} \phi^N d\tau$. Let $X_\D$ be a vector field on $\WMD$ and $X_\N$ be a one-form on $\WMN$. For simplicity, we will require that neither $X_\D$ nor $X_\N$ depend on $\phi$, and that $\sigma(\nu, X_\D) = 0$.

Except for Section \ref{sec:LimitEquation} where we need a little more regularity, we will assume the same regularity throughout the paper. Namely, we make the following set of assumptions.
\begin{assumpt} \label{assumptions}(Standard assumptions)

\begin{itemize}
\item $(M^n,g)$ is a smooth compact manifold with boundary with metric $g \in W^{2,p}$ where $p>n\geq 3$. Thus $R\in L^p$.

\item $g$ has no nontrivial conformal Killing fields $W\in W^{2,p}$ with $BW = 0$ on $\WMN$ and $W = 0$ on $\WMD$.

\item $\tau\in W^{1,p}$ which gives that $\tau\in L^\infty$ and $a_\tau \in L^p$.

\item $|\sigma|^2 \in W^{1,p}$, which gives that $a_w \in L^p$ if $|LW|^2 \in L^p$ as well. Also, $\sigma\cdot \nu = 0$ on $\WMN$, where $\nu$ is the normal vector to the boundary. Such tensors are shown to exist in subsection \ref{sec:YorkDecomposition}.

\item $b_H, b_\theta, b_\tau, b_w \in W^{1-\frac{1}{p},p}(\MN)$ and only depend on the seed data.

\item $(e-1)b_\theta \geq 0$ with $e\neq 1$, $b_\tau\geq 0$ and $b_w \leq 0$.

\item $b_\theta$ and $b_H - \frac{n-2}{2}H$ have constant sign on each component of the boundary, where $H$ is the mean curvature of the boundary.

\item $\phi_D \in W^{2-\frac{1}{p},p}(\MD)$ with $\phi_D>0$.

\item The Lichnerowicz problem, as defined below, is conformally covariant, as defined in Section \ref{sec:LichnerowiczProblem}.

\item The coefficients $c_i$ from $X$ are in $L^p$.

\item $X_\N\in W^{1-\frac{1}{p},p}(\WMN)$, $X_\D\in W^{2-\frac{1}{p},p}(\WMD)$ and $\sigma(\nu,X_\D) = 0$.
\end{itemize}
\end{assumpt}
Most of the results in this paper do not require all of the assumptions listed above. For instance, the results from we cite from \cite{HT13} hold for $g\in W^{s,p}$ with $s\geq 1$ and $s>n/p$. However, our main results require all of them, and so for simplicity we assume these assumptions throughout this paper.

Let $[\phi_-,\phi_+]_{2,p}:= \{\phi\in W^{2,p}: 0<\phi_-\leq \phi \leq \phi_+ \text{ a.e.}\}$. The standard regularity conditions give that $f$ is a map from $[\phi_-,\phi_+]_{2,p} \to W^{1-\frac{1}{p},p}(\MN)$.

We then split the conformal constraint equations \eqref{origLich}-\eqref{origVect} into two problems and consider them, at first, separately. The Lichnerowicz problem is to find an element $\phi\in [\phi_-,\phi_+]_{2,p}$ such that
\begin{equation}\label{lichSystem}
F(\phi) := \begin{array}{rlc} -\Delta\phi + a_R\phi + a_\tau \phi^{N-1} - a_w \phi^{-N-1} &\hspace{-2.4mm}= 0\\
\gamma_N\partial_\nu\phi + f(\phi) &\hspace{-2.4mm}=0 &  \textrm{ on } \MN\\
\gamma_D\phi-\phi_D & \hspace{-2.4mm}= 0  &  \textrm{ on } \MD.\end{array}
\end{equation}

The vector problem is then to find an element $W\in W^{2,p}$ such that
\begin{equation}\label{vectorSystem}
\p^{s,p}(W) := \begin{array}{rlc}\di L W &\hspace{-2.4mm}= X & \\
\gamma_\N BW &\hspace{-2.4mm}= X_\N &  \textrm{ on } \WMN \\
\gamma_\D W &\hspace{-2.4mm}= X_\D & \textrm{ on } \WMD. \end{array}
\end{equation}

As in the introduction, if we can simultaneously solve these two problems, we can reconstruct a solution to the Einstein constraint equations \eqref{origLich}-\eqref{origVect}.

Before we discuss previous results about this system, we need a Yamabe classification theorem. Escobar, in \cite{Escobar92}, showed that in many cases, one could conformally transform a metric on a compact manifold with boundary to one with constant scalar curvature and minimal (mean curvature zero) boundary. Brendle and Chen, in \cite{BC09}, expanded the list of allowable manifolds. This general problem remains unsolved. Fortunately, Holst and Tsogtgerel proved a weaker version of this classification that suffices for our needs.

\begin{thm}\cite[Thm 2.2]{HT13}\label{YamabeClassification}
The metric $g$ is in exactly one of $Y^+,Y^0,Y^-$, where $g\in Y^+$ ($\in Y^0, \in Y^-$) means that there is a metric in the conformal class of $g$ whose scalar curvature is continuous and positive (resp. zero or negative), and boundary mean curvature is continuous and has any given sign (resp. is identically zero, has any given sign). ``Any given sign" includes the case that it is identically zero.
\end{thm}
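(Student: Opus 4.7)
The plan is to classify the conformal class of $g$ by the sign of the first eigenvalue of a conformally covariant mixed boundary value problem, then to construct representatives by a two-step conformal change. Let $c_n = 4(n-1)/(n-2)$, let $L_g := -c_n \Delta_g + R_g$ denote the conformal Laplacian, and let $B_g := \tfrac{2}{n-2}\partial_\nu + H_g$ be the associated conformally covariant boundary operator on $\partial M$. Define $\lambda_1(g)$ as the first eigenvalue of
\[
L_g \phi = \lambda \phi \text{ in } M, \qquad B_g \phi = 0 \text{ on } \partial M,
\]
obtained by minimizing the Rayleigh quotient
\[
Q_g(\phi) = \frac{\int_M (c_n|\nabla \phi|_g^2 + R_g \phi^2) \, dV_g + 2(n-1)\oint_{\partial M} H_g \phi^2 \, dA_g}{\int_M \phi^2 \, dV_g}.
\]
By standard compactness of the trace and elliptic regularity, the infimum is attained by some $\phi_1 \in W^{2,p}(M)$; the strong maximum principle together with the Hopf boundary lemma gives $\phi_1 > 0$ on $\overline M$, and a Krein-Rutman argument yields simplicity.

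Next I verify that $\operatorname{sign}\lambda_1$ is a conformal invariant. Using the covariance identities $L_{\widetilde g}(u) = \phi^{1-N}L_g(\phi u)$ and $B_{\widetilde g}(u) = \phi^{-N/2}B_g(\phi u)$ for $\widetilde g = \phi^{N-2}g$, together with $dV_{\widetilde g} = \phi^N dV_g$, a direct calculation shows that for any $v$ with $B_{\widetilde g}v = 0$, the function $u := \phi v$ satisfies $B_g u = 0$ and $\int_M v\, L_{\widetilde g} v \, dV_{\widetilde g} = \int_M u\, L_g u \, dV_g$. Hence positivity (resp.\ negativity, vanishing) of the quadratic form on $\ker B$ is preserved under conformal change. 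Define $Y^+$, $Y^0$, $Y^-$ by $\lambda_1 > 0$, $\lambda_1 = 0$, $\lambda_1 < 0$ respectively; these classes are mutually exclusive and exhaustive.

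To realize the representatives, first replace $g$ by $\widetilde g := \phi_1^{N-2} g$. By covariance, $R_{\widetilde g} = \lambda_1 \phi_1^{2-N}$ and $H_{\widetilde g} \equiv 0$. If $\lambda_1 = 0$ this already realizes $R \equiv 0$ and $H \equiv 0$. Otherwise $R_{\widetilde g}$ is continuous with constant sign equal to $\operatorname{sign}\lambda_1$. To adjust the mean curvature to an arbitrary continuous function $h$ on $\partial M$ (of any prescribed sign on each component, the identically zero case included), I solve the semilinear problem
\[
L_{\widetilde g} \psi = R_{\widetilde g} \psi^{N-1} \text{ in } M, \qquad B_{\widetilde g} \psi = h \psi^{N/2} \text{ on } \partial M,
\]
for $\psi > 0$ close to $1$. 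Then $\bar g := \psi^{N-2} \widetilde g$ satisfies $R_{\bar g} = R_{\widetilde g}$ (sign preserved) and $H_{\bar g} = h$.

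The main obstacle is this last step. For small $\|h\|_{W^{1-1/p,p}(\partial M)}$ one can either apply the implicit function theorem near $\psi \equiv 1$, verifying invertibility of the linearization $\eta \mapsto -c_n \Delta \eta - (N-2) R_{\widetilde g} \eta$ under the homogeneous boundary condition $\partial_\nu \eta = 0$ (where nondegeneracy leverages $\lambda_1 \neq 0$ and the fixed sign of $R_{\widetilde g}$, possibly after a generic small perturbation within the conformal class to avoid exceptional kernels), or construct explicit sub- and supersolutions $\psi_\pm = 1 \pm \epsilon \zeta$ for a carefully chosen $\zeta > 0$ and invoke monotone iteration. Since the classification only asks for a sign pattern rather than a specific boundary value, scaling $h$ down to make these perturbation arguments apply is harmless, and any prescribed sign pattern on the components of $\partial M$ can be realized by combining such small-perturbation solutions.
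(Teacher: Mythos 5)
Note that the paper cites this theorem from Holst--Tsogtgerel \cite{HT13} without giving a proof, so there is no internal argument to compare against. Your strategy --- classify the conformal class by the sign of the first eigenvalue of the conformal Laplacian under the conformally covariant Robin boundary operator, then conformally transform by the first eigenfunction $\phi_1$ to produce $R$ of constant sign and $H\equiv 0$ --- is the standard route and is sound as far as it goes. One small addition: you should record the converse direction identifying the $\lambda_1$-classes with the $Y^\pm$, $Y^0$ of the statement. If some conformal representative has $R>0$ and $H\equiv 0$, the Rayleigh quotient is strictly positive, so $\lambda_1>0$; if $R<0$ and $H\equiv 0$, testing against the constant function gives $\lambda_1<0$; if $R\equiv 0$ and $H\equiv 0$, the quotient is nonnegative and vanishes on constants, so $\lambda_1=0$. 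Combined with your constructions this makes the cases exhaustive and mutually exclusive, including the asymmetric $H\equiv 0$ requirement in $Y^0$.

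The genuine gap is in your final step: you try to adjust $H$ while keeping $R$ \emph{exactly} equal to $R_{\widetilde g}$ by solving a semilinear boundary value problem, and the implicit-function route then needs invertibility of $\eta\mapsto -\frac{4(n-1)}{n-2}\Delta\eta - (N-2)R_{\widetilde g}\eta$ under $\partial_\nu\eta=0$. When $\lambda_1>0$ this operator has a negative zeroth-order coefficient and can easily have a nontrivial kernel, and ``a generic small perturbation within the conformal class'' is not a rigorous remedy, since a conformal change moves $R_{\widetilde g}$ in a constrained way and does not obviously clear the kernel. But the theorem asks only to preserve the \emph{sign} of $R$, which is much cheaper. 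After the first conformal change, $R_{\widetilde g}=\lambda_1\phi_1^{2-N}$ is continuous and bounded away from zero (since $\phi_1>0$ on $\overline M$) and $H_{\widetilde g}\equiv 0$. Pick a smooth $\zeta$ on $M$ whose normal derivative $\partial_\nu\zeta$ has the prescribed sign pattern on $\partial M$, and set $\psi=1+\epsilon\zeta$. Then
\[
R_{\bar g} = \psi^{1-N}\Bigl((1+\epsilon\zeta)R_{\widetilde g} - \epsilon\,\tfrac{4(n-1)}{n-2}\,\Delta\zeta\Bigr), \qquad H_{\bar g} = \tfrac{2}{n-2}\,\psi^{-N/2}\,\epsilon\,\partial_\nu\zeta,
\]
so for $\epsilon$ small, $R_{\bar g}$ retains the sign of $\lambda_1$ and $H_{\bar g}$ has the prescribed sign (including identically zero). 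This elementary construction replaces both your implicit-function-theorem argument and your unstated sub/supersolution alternative, and closes the gap.
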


Following the closed case, we say $g$ is in the positive Yamabe class if $g\in Y^+$, and similarly for the other classes.

\section{Lichnerowicz Problem}\label{sec:LichnerowiczProblem}

Now we can give some results about the Lichnerowicz problem (\ref{lichSystem}) from \cite{HT13}. First, one of the most successful methods of finding solutions to the Lichnerowicz equation has been the method of sub and supersolutions. The appropriate generalization for this problem is as follows.
\begin{thm}\cite[Thm 5.1]{HT13}\label{lichExistence}
Let $\phi_-,\phi_+\in W^{2,p}$ be such that $F(\phi_+)\geq 0$ and $F(\phi_-) \leq 0$ (i.e., are super and subsolutions respectively, see \eqref{lichSystem}), and such that $0<\phi_-\leq \phi_+$. Then there exists a positive solution $\phi \in [\phi_-,\phi_+]_{2,p}$ of the Lichnerowicz problem \eqref{lichSystem}.
\end{thm}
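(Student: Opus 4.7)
The plan is to use the standard monotone iteration (Picard-type) method, adapted to the mixed Dirichlet--Robin setting. The idea is to shift both the interior nonlinearity and the boundary nonlinearity by large linear terms so that the resulting operators are monotone on $[\phi_-,\phi_+]$, then iteratively solve the resulting linear problem. Writing $F_0(\phi) = a_R \phi + a_\tau \phi^{N-1} - a_w \phi^{-N-1}$ for the interior nonlinearity, a direct computation gives
\[
F_0'(\phi) = a_R + (N-1)a_\tau \phi^{N-2} + (N+1)a_w \phi^{-N-2}.
\]
Because $\phi_\pm \in L^\infty$ via Sobolev embedding ($p>n$), $\phi_- > 0$ controls the singular term $\phi^{-N-2}$, $a_\tau \in L^\infty$ (since $\tau \in W^{1,p}$), and $a_w\leq 0$ makes the last contribution favorable, I can choose $s\in L^p$ with $s \geq F_0'(\phi)$ a.e.\ for all $\phi\in[\phi_-,\phi_+]$. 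Then $\phi\mapsto s\phi - F_0(\phi)$ is pointwise non-decreasing on this interval. A parallel choice of $\ell\in W^{1-\frac{1}{p},p}(\MN)$ makes $\phi\mapsto \ell\phi - f(\phi)$ non-decreasing on $[\gamma_N\phi_-,\gamma_N\phi_+]$; the sign assumptions on $b_\theta,b_\tau,b_w$ in Assumption \ref{assumptions} are what make this possible.

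Next, set $\phi_0 := \phi_+$ and define $\phi_{k+1}\in W^{2,p}$ iteratively as the unique solution of the linear mixed BVP
\[
(-\Delta+s)\phi_{k+1} = s\phi_k - F_0(\phi_k), \qquad \gamma_D \phi_{k+1} = \phi_D \text{ on } \MD,
\]
\[
(\partial_\nu + \ell)\gamma_N \phi_{k+1} = \ell\gamma_N\phi_k - f(\phi_k) \text{ on } \MN.
\]
Existence, uniqueness, and $W^{2,p}$ regularity at each step follow from the standard linear theory for second-order elliptic operators with mixed Dirichlet--Robin boundary conditions, once coercivity of the linear operator is verified using $s\geq 0$, the Dirichlet piece, and the sign assumption on $b_H - \tfrac{n-2}{2}H$.

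Third, I would prove by induction that $\phi_- \leq \phi_{k+1} \leq \phi_k \leq \phi_+$. For the base step, set $v := \phi_+ - \phi_1$. Then $(-\Delta+s)v \geq 0$ and the analogous shifted boundary operators applied to $v$ are non-negative, all coming from $F(\phi_+)\geq 0$; the weak maximum principle for the shifted operator gives $v\geq 0$, i.e., $\phi_1\leq \phi_+$. A mean value identity $F_0(\phi_k) - F_0(\phi_{k+1}) = F_0'(\xi)(\phi_k - \phi_{k+1})$ then shows that each $\phi_{k+1}$ is again a supersolution, so the base argument propagates through the induction. The lower bound $\phi_-\leq \phi_{k+1}$ is proved symmetrically using $F(\phi_-)\leq 0$. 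Finally, the monotone bounded sequence converges pointwise a.e.\ to some $\phi\in[\phi_-,\phi_+]$; because the RHS of the linear equation is uniformly bounded in $L^p$ (thanks to $\phi_->0$ bounding the singular term and $\phi_+\in L^\infty$ bounding the superlinear ones), a uniform $W^{2,p}$ estimate $\|\phi_k\|_{W^{2,p}}\leq C$ follows. Rellich--Kondrachov then gives strong $W^{1,p}$ (hence $C^0$) convergence along a subsequence, which is enough to pass to the limit in every nonlinear term. The limit $\phi\in W^{2,p}$ solves \eqref{lichSystem}.

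The main obstacle, in my view, is the linear step: ensuring that the mixed Dirichlet--Robin problem is well-posed in $W^{2,p}$ and obeys the appropriate weak maximum principle, uniformly in $k$. Once that piece of linear theory is in hand, the monotone iteration proceeds cleanly; the sign conditions in Assumption \ref{assumptions} (especially on $b_\theta$, $b_\tau$, $b_w$, and $b_H - \tfrac{n-2}{2}H$) are precisely what is needed to make the shifted Robin boundary operator monotone in the correct direction and to give the linear operator sufficient coercivity.
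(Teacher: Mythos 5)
This theorem is cited from Holst--Tsogtgerel \cite[Thm 5.1]{HT13}, so the present paper does not supply its own proof; the map $T$ used there (and invoked again in Step~3 of Proposition~\ref{combinedExistence}) is precisely the shifted, monotone Picard-type iteration you describe. Your proposal is the same approach as the cited source, with the same key ingredients: shift the interior nonlinearity by $s\phi$ and the Robin nonlinearity by $\ell\phi$ to obtain monotone maps, solve the resulting linear mixed Dirichlet--Robin problem at each step, establish $\phi_-\leq\phi_{k+1}\leq\phi_k\leq\phi_+$ by the weak maximum principle and a mean-value argument, and extract a $W^{2,p}$ limit from a uniform elliptic estimate plus compactness.

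One small slip: you write that ``$a_w\leq 0$'' makes the $\phi^{-N-2}$ contribution to $F_0'$ favorable, but $a_w = \frac{n-2}{4(n-1)}|\sigma+LW|^2\geq 0$ (you may be conflating it with $b_w\leq 0$). The term $(N+1)a_w\phi^{-N-2}$ is therefore nonnegative, not negative; the reason it can still be dominated by a fixed $s\in L^p$ is solely that $\phi\geq\phi_->0$ with $\inf\phi_->0$ on the compact manifold, together with $a_w\in L^p$ --- a point you already list. So the conclusion stands, but for the reason you state earlier in the same sentence, not because of the sign of $a_w$. Everything else in the iteration (the base step from $F(\phi_+)\geq 0$, the propagation of the supersolution property, the symmetric lower bound from $F(\phi_-)\leq 0$, and the passage to the limit via $W^{2,p}\hookrightarrow C^0$ compactness) is sound, assuming the linear mixed BVP theory and the corresponding weak maximum principle, which indeed hold under Assumption~\ref{assumptions}.
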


One nice property of the Lichnerowicz equation, i.e., the first line of \eqref{lichSystem}, is that it is conformally covariant. For example, if we have a supersolution, we can do a conformal transformation in a particular way, and the supersolution multiplied by the conformal factor will still be a supersolution. Similarly, the Dirichlet part of the boundary condition will also be conformally covariant. However, the Neumann/Robin part of the boundary condition will not always be.

Let $\psi$ be a conformal factor, and let hats denote transformed quantities. In particular, we set $\widehat g = \psi^{N-2} g$, with scalar curvature $\widehat R$ and boundary mean curvature $\widehat H$. Let $\widehat \tau = \tau$, $\widehat\sigma_{ij} = \psi^{-2} \sigma_{ij}$, $\widehat{LW}_{ij} = \psi^{-2}LW_{ij}$ and $\widehat\phi_D = \phi_D/\psi$. Since we want the $b_i$ data to be general, simply let there be some transformation rule for them like for the other data. Let $\widehat F$ be the operator $F$ from \eqref{lichSystem}, but with the hatted data.
\begin{defn}\label{conformalCovariance} We say the Lichnerowicz problem \eqref{lichSystem} is conformally covariant if
\begin{align*}
\widehat F(\phi) = 0 & \,\,\,\,\Leftrightarrow  \,\,\,F(\psi\phi) = 0 \\
\widehat F(\phi) \geq 0 & \,\,\,\,\Leftrightarrow  \,\,\,F(\psi\phi) \geq 0 \\
\widehat F(\phi) \leq 0 & \,\,\,\,\Leftrightarrow  \,\,\,F(\psi\phi) \leq 0 \\
\end{align*} for any positive conformal factor $\psi$.
\end{defn}

We will next consider when the Lichnerowicz problem is conformally covariant. Recall that
\[
\widehat R = \psi^{2-N} R - \frac{4(n-1)}{n-2} \psi^{1-N}\Delta\psi,
\]\[
\widehat \Delta \phi = \psi^{2-N} \Delta\phi + 2 \psi^{1-N}\langle d\psi, d\phi\rangle_g.
\] Using these two, we can show that the Lichnerowicz equation, the first line of \eqref{lichSystem}, is conformally covariant. In particular, if we let $F_1$ be the first line of \eqref{lichSystem}, and $\widehat F_1$ be the same operator with the transformed quantities, we have $\widehat F_1(\phi) = \psi^{1-N} F_1(\psi\phi)$.

Let $F_3$ be the Dirichlet boundary condition operator, i.e., the third line of \eqref{lichSystem}. Since the right side is a fixed function, it is particularly easy to show conformal covariance. It is clear that $\widehat F_3(\phi) = \psi^{-1}F_3(\psi\phi)$.

The Neumann boundary condition, the second line of \eqref{lichSystem} is the most difficult, mostly because it was purposely designed to be general. In different cases the coefficients might be very different, and transform differently. Recall that
\[
\widehat H = \psi^{1-N/2} H + \frac{2}{n-2}\psi^{-N/2} \partial_\nu \psi
\] for boundary mean curvature $H$, and
\[
\partial_{\widehat\nu} \phi = \psi^{1-N/2} \partial_{\nu}\phi.
\] Together, these show that
\begin{equation}\label{eq:boundaryConformalCovariance}
\partial_{\widehat\nu} \phi + \frac{n-2}{2} \widehat H \psi = \psi^{-N/2}\left(\partial_\nu (\psi\phi) + \frac{n-2}{2} H (\psi\phi)\right).
\end{equation} Thus, if $b_H = \frac{n-2}{2} H$ we have a good start towards conformal covariance.

Holst and Tsogtgerel in \cite{HT13} list a number of possibly useful boundary conditions. We will consider each one in turn. We will not present the details of each, but describe it briefly and consider if it is conformally covariant.

The first condition represents a Robin condition for compact sections of an asymptotically Euclidean manifold. For this condition, $b_H = (n-2)H$, $b_\theta = -(n-2)H$ with $e=0$, and $b_\tau = b_w = 0$. If we attempt a conformal transformation we get
\[
\partial_{\widehat\nu} \phi + (n-2) \widehat H \phi - (n-2) \widehat H = \psi^{1-N/2} \partial_\nu \phi + (n-2)\psi^{1-N/2} (\phi-1)H + 2(\phi-1)(\psi^{-N/2}\partial_\nu \psi).
\] Thus this boundary condition is not conformally covariant.

Another possibility is a boundary condition that makes the boundary a minimal surface. Here, $b_\theta = b_\tau = b_w = 0$ and $b_H = \frac{n-2}{2}H$. This is exactly the case we've already considered, and so this condition is conformally covariant.

The next condition guarantees the existence of trapped surfaces. Let $b_H = \frac{n-2}{2}H$, $b_\theta = \pm \frac{n-2}{2(n-1)}\theta_{\pm}$, $b_\tau = \mp \frac{n-2}{2}\tau$, and $b_w = \pm \frac{n-2}{2(n-1)} S(\nu, \nu)$, where $\theta_\pm$ are the expansion scalars and $S = \sigma + LW$. Comparing exponents, we see that $\theta_\pm$ must transform as $\widehat\theta_\pm = \theta_\pm \psi^{e-N/2}$. Fortunately, this is exactly the transformation described in \cite{HT13}. Similarly, $\tau$ must transform as $\widehat \tau = \tau$, which is fortunately the same as was required for the main Lichnerowicz equation. Finally, since $S = \sigma+LW$, $\widehat S(\widehat \nu, \widehat \nu ) = \psi^{-N} S(\nu,\nu)$, which is the proper transformation. Thus this boundary condition is conformally covariant.

Lastly, we have a different formulation that also guarantees the existence of trapped surfaces. In particular, $b_H = \frac{n-2}{2} H$ and $b_\theta = (\theta_+ - \theta_-)$ with $e$ arbitrary and $b_w = b_\tau = 0$. This is conformally covariant for the same reasons as the previous boundary condition.

In general, $b_H = \frac{n-2}{2}H$ is required for $F_2$, the second line of \eqref{lichSystem}, to be conformally covariant. This is required so that the $\partial_{\widehat \nu}\phi$ can transform correctly, as in \eqref{eq:boundaryConformalCovariance}. Also, any nonzero quantities of $b_\theta, b_\tau$ and $b_w$ must transform such that something like $\widehat b_\theta \phi^e = \psi^{-N/2} b_\theta (\psi\phi)^e$ holds for functions $\phi$. Thus, the Lichnerowicz problem \eqref{lichSystem} is conformally covariant if either $\MN = \emptyset$ or there is a restriction on the coefficients $b_i$ of the Neumann boundary condition.

The easiest case to solve the Lichnerowicz problem \eqref{lichSystem} is in the so-called ``defocusing case," which restricts the signs of most of the coefficients. In  particular, the defocusing case means that $a_\tau\geq 0$, $a_w\geq 0$, $(e-1)b_\theta \geq 0$ with $e\neq 1$, $b_\tau\geq 0$ and $b_w \leq 0$. While the first two requirements are natural, the other restrictions are made primarily for ease of solving. However, they do include most of the important boundary conditions, including the ones we will care about. In the defocusing case, Holst and Tsogtgerel proved fairly exhaustive existence results for the Lichnerowicz problem in \cite[Thm 6.1,2]{HT13}, as well as uniqueness in \cite[Lem 4.2]{HT13}. They also proved the continuity of the solution map for the Lichnerowicz equation. We include the theorem here for use in Section \ref{sec:LimitEquation}.

\begin{lem}\label{LichnerowiczContinuity}\cite[Thm 8.1]{HT13}
Let $\alpha = (a_\tau, a_w, b_H, b_\tau, b_\theta, b_w, \phi_D)$, with regularity
\[
\alpha \in \left[L^p\right]^2 \times \left[W^{1-\frac{1}{p},p}(\MN)\right]^4 \times W^{2-\frac{1}{p},p}(\MD).
\] Assume moreover that the solution map of the solution map of Lichnerowicz problem \eqref{lichSystem} (as a map $\alpha \mapsto \phi \in W^{2,p}$)  is well defined at $\alpha$. Then the solution map is defined in a neighborhood of $\alpha$ and is (Fr\'echet) differentiable there provided that either $\MD \neq \emptyset$ or at least one of $a_\tau +a_w$, $b_\tau$, $b_\theta$ or $b_w$ is not identically zero.
\end{lem}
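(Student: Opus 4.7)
The natural approach is the implicit function theorem. Consider the nonlinear map
\[
G : W^{2,p} \times \mathcal{A} \longrightarrow L^p \times W^{1-\frac{1}{p},p}(\MN) \times W^{2-\frac{1}{p},p}(\MD), \qquad G(\phi,\alpha) = F_\alpha(\phi),
\]
where $\mathcal{A}$ is the data space with the regularity listed in the lemma. This map is smooth in the combined variables by the standard Sobolev multiplier and composition results (using $p>n$ so that $W^{2,p}$ is an algebra and $\phi\mapsto\phi^k$ is smooth on the open set $\{\phi>0\}$). By hypothesis $G(\phi,\alpha)=0$ at the reference solution. So the conclusion follows as soon as the partial Fr\'echet derivative $D_\phi G(\phi,\alpha)=:L$ is an isomorphism from $W^{2,p}$ onto the target product space; then IFT produces a smooth local inverse $\alpha\mapsto\phi$.

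The operator $L$ is a linear elliptic system of mixed type: interior equation
\[
L[h] = -\Delta h + \bigl(a_R + (N-1)a_\tau\phi^{N-2} + (N+1)a_w\phi^{-N-2}\bigr) h,
\]
with Dirichlet condition $h=0$ on $\MD$ and Robin condition $\partial_\nu h + f'(\phi)h = 0$ on $\MN$. Standard elliptic theory for mixed boundary value problems on compact manifolds gives that $L$ is Fredholm of index zero between the stated spaces; so the whole problem reduces to showing injectivity of $L$.

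For injectivity I would use the substitution $h=\phi u$ (legal since $\phi>0$), which converts $L$ into a divergence-structure operator acting on $u$. A direct computation, using the fact that $\phi$ solves the Lichnerowicz equation, reduces the interior coefficient to
\[
V = (N-2)\,a_\tau\phi^{N-2} + (N+2)\,a_w\phi^{-N-2} \;\geq\; 0
\]
in the defocusing case. A parallel computation on the boundary, using the Robin equation to eliminate $\partial_\nu\phi$, replaces $f'(\phi)$ by
\[
c = (e-1)\,b_\theta\,\phi^{e-1} + \bigl(\tfrac{N}{2}-1\bigr)\,b_\tau\,\phi^{N/2-1} + \bigl(-\tfrac{N}{2}-1\bigr)\,b_w\,\phi^{-N/2-1} \;\geq\; 0,
\]
where the sign assumptions $(e-1)b_\theta\geq 0$, $b_\tau\geq 0$, $b_w\leq 0$ are exactly what makes each term nonnegative. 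Thus $u$ solves a linear elliptic equation with nonnegative zeroth-order coefficient, Dirichlet data on $\MD$, and a Robin condition $\partial_\nu u + c\,u = 0$ with $c\geq 0$ on $\MN$.

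A maximum principle/Hopf lemma argument then finishes injectivity: a nontrivial solution must attain a positive maximum or negative minimum, which the Dirichlet condition rules out on $\MD$ and the Hopf lemma (combined with $c\geq 0$) rules out on $\MN$, leaving only the possibility that $u$ is a nonzero constant. But a nonzero constant requires $V\equiv 0$ and $c\equiv 0$ and the absence of Dirichlet boundary; so the hypothesis that either $\MD\neq\emptyset$ or one of $a_\tau+a_w,b_\tau,b_\theta,b_w$ is not identically zero excludes this possibility. Hence $u\equiv 0$, so $h\equiv 0$, and $L$ is an isomorphism. The main technical obstacle is the cancellation computation verifying that the substitution $h=\phi u$ produces nonnegative coefficients both inside $M$ and on $\MN$; everything after that is a standard IFT and maximum-principle package.
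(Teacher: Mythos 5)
This lemma is quoted in the paper directly from \cite[Thm 8.1]{HT13} without its own proof, so there is no in-paper argument to compare against. That said, your reconstruction is the standard one and, I believe, essentially what Holst and Tsogtgerel do: linearize, check the Fredholm-index-zero property for the mixed boundary-value operator, then kill the kernel. Your substitution $h=\phi u$ is exactly the right device, and your coefficient computations check out: in the interior, using $F_1(\phi)=0$, the $a_R$ term cancels and one is left with $(N-2)a_\tau\phi^{N-2}+(N+2)a_w\phi^{-N-2}\geq 0$; on $\MN$, using $\partial_\nu\phi+\widetilde f(\phi)=0$, the $b_H$ term cancels and one is left with $(e-1)b_\theta\phi^{e-1}+(\tfrac{N}{2}-1)b_\tau\phi^{N/2-1}-(\tfrac{N}{2}+1)b_w\phi^{-N/2-1}\geq 0$ under the defocusing sign conventions. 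The enumerated hypotheses (some $a_\tau+a_w$, $b_\theta$, $b_\tau$, $b_w$ nontrivial, or $\MD\neq\emptyset$) are precisely what forbid the constant $u$ in the residual kernel.

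One small caution about your final step. You argue injectivity via a pointwise maximum principle plus Hopf lemma, but the potential $V$ you land on is only in $L^p$ (since $a_\tau,a_w\in L^p$), and the boundary coefficient $c$ is only in $W^{1-1/p,p}(\MN)$, so the classical strong maximum principle and Hopf lemma need to be invoked in their rough-coefficient form, which is fussier than you acknowledge. It is cleaner to use the energy identity: writing $\phi\,L[\phi u]=-\di(\phi^2\nabla u)+V\phi^2 u$ and pairing with $u$ gives
\[
0=\int_M \phi^2|\nabla u|^2 + \int_M V\phi^2 u^2 + \int_{\MN}\phi^2 c\,u^2,
\]
with $u=0$ on $\MD$ absorbing the remaining boundary term. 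All three summands are nonnegative, so $u$ is constant, $Vu^2\equiv 0$, and $cu^2\equiv 0$ on $\MN$; the hypotheses then force $u\equiv 0$ directly, with no regularity issues. With that substitution your argument is complete and matches the approach one would expect in the cited reference.
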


\section{Vector Problem}

For the vector problem \eqref{vectorSystem}, the following estimate holds.

\begin{thm} \label{schauderEstimateTheorem}
If $W\in W^{2,p}$ and satisfies the system \eqref{vectorSystem}, then $W$ satisfies the estimate
\begin{equation} \label{schauderEstimate}
\|W\|_{2,p} \leq C\left(\|X\|_{p} + \|X_\N \|_{W^{1-\frac{1}{p},p}(\WMN)} + \|X_\D\|_{W^{2-\frac{1}{p},p}(\WMD)}\right)
\end{equation} where $\|\cdot\|_{2,p}$ and $\|\cdot\|_p$ are the $W^{2,p}$ and $L^p$ norms respectively. Recall that $X = \sum c_i \phi^{k_i}$, and assume that $\phi$ has an upper bound $B\geq 1$. Let $k = \sup k_i$. Then $LW$ satisfies the estimate
\begin{equation}\label{schauderEstimateLW}
\|LW\|_{1,p} \leq C(B^k+1)
\end{equation} for some $C$ independent of $W$ and $\phi$.
\end{thm}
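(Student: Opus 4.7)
The plan is to recognize \eqref{vectorSystem} as a standard mixed boundary value problem for the elliptic operator $\di L$ and apply Agmon-Douglis-Nirenberg theory, then remove a lower-order term via a compactness/uniqueness argument that invokes the standing no-conformal-Killing-field hypothesis from Assumption \ref{assumptions}.

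The first step is to verify the ellipticity package. The conformal Killing Laplacian $\di L$ is a second-order elliptic operator on vector fields (its principal symbol equals, up to a factor, the symbol of the rough Laplacian plus a trace-free correction, and is injective), and the pair consisting of the Dirichlet trace $\gamma_\D$ on $\WMD$ and the Neumann-type operator $BW = LW(\nu,\cdot)$ on $\WMN$ satisfies the Lopatinskii complementing condition. This is a classical fact for the conformal Killing Laplacian, and it yields the a priori estimate
\[
\|W\|_{2,p} \leq C\bigl(\|\di LW\|_p + \|\gamma_\N BW\|_{W^{1-\frac{1}{p},p}(\WMN)} + \|\gamma_\D W\|_{W^{2-\frac{1}{p},p}(\WMD)} + \|W\|_p\bigr).
\]

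The second step is to absorb the lower-order term $\|W\|_p$. Suppose \eqref{schauderEstimate} fails; then there is a sequence $W_n$ with $\|W_n\|_{2,p}=1$ but with all three data quantities tending to zero. Weak compactness in $W^{2,p}$ together with Rellich-Kondrachov gives a subsequence converging strongly in $W^{1,p}$ to some $W_\infty \in W^{2,p}$ satisfying the homogeneous system $\di LW_\infty = 0$, $\gamma_\N BW_\infty = 0$, $\gamma_\D W_\infty = 0$. Integrating $\int_M W_\infty \cdot \di LW_\infty$ by parts, the boundary contribution decomposes into an integral over $\WMN$ (which vanishes because $BW_\infty=0$ there) and one over $\WMD$ (which vanishes because $W_\infty=0$ there), leaving $\int_M |LW_\infty|^2 = 0$. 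Thus $W_\infty$ is a conformal Killing field with the prescribed boundary behavior, so by the no-conformal-Killing-field assumption $W_\infty = 0$. Plugging this back into the estimate above (now with vanishing $\|W_n\|_p$ by strong convergence) forces $\|W_n\|_{2,p} \to 0$, contradicting the normalization.

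The third step derives \eqref{schauderEstimateLW}. Since $L \colon W^{2,p}\to W^{1,p}$ is bounded, $\|LW\|_{1,p}\leq C\|W\|_{2,p}$, so it suffices to bound the right-hand side of \eqref{schauderEstimate} by $C(B^k+1)$. For the bulk source, since each $k_i\geq 0$ and $B\geq 1$ we have $\|\phi^{k_i}\|_\infty \leq B^{k_i}\leq B^k$, so $\|X\|_p \leq \bigl(\sum_i \|c_i\|_p\bigr) B^k$. The boundary data $X_\N$ and $X_\D$ are independent of $\phi$ by assumption and thus contribute a constant, which assembles into the $+1$ term.

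The main obstacle is the first step: verifying the Lopatinskii condition for the mixed $(\gamma_\D, B)$ boundary pair on $\di L$. This is standard in the literature on the conformal method but is the one genuinely technical input; everything else is either the routine compactness dichotomy or an elementary estimate of a polynomial in $\phi$.
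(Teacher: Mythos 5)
Your proof is correct. For the main estimate \eqref{schauderEstimate}, the paper simply cites \cite[Prop 4]{Maxwell05b}; your ADN-plus-compactness argument is a self-contained reconstruction of that result, and the integration-by-parts step that feeds the no-conformal-Killing-field hypothesis of Assumption~\ref{assumptions} into the compactness dichotomy mirrors the mechanism the paper uses in proving Theorem~\ref{vectExistence}. Your deduction of \eqref{schauderEstimateLW} --- bounding $\|X\|_p \le \bigl(\sum_i \|c_i\|_p\bigr) B^k$ via $k_i\ge 0$ and $B\ge 1$, and noting that $X_\N$, $X_\D$ are $\phi$-independent so their norms are absorbed into the constant --- is exactly the paper's.
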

\begin{proof}
Estimate \eqref{schauderEstimate} is \cite[Prop 4]{Maxwell05b}. Estimate \eqref{schauderEstimateLW} uses \eqref{schauderEstimate}, $\|LW\|_{1,p} \leq \|W\|_{2,p}$ and that $X_\N$ and $X_\D$ are independent of $\phi$. The $C$ in this inequality depends on the embedding constants, the $c_i$, $X_\N$ and $X_\D$. Note that we use that none of the $k_i$ are negative.
\end{proof}

Let \[
\p^{2,p}: W^{2,p} \to L^{p}\times W^{1-\frac{1}{p},p}(\WMN) \times W^{2-\frac{1}{p},p}(\WMD)\] be the map $W \mapsto (\di L W, \gamma_\N BW, \gamma_\D W)$, as in \eqref{vectorSystem}. A standard result (see \cite[Lem B.5]{HT13}) gives that estimate \eqref{schauderEstimate} implies that $\p^{2,p}$ is semi-Fredholm under our assumptions on $p$.

If $W$ is a vector field on $M$ such that $LW = 0$ on $M$, $BW = 0$ on $\WMN$ and $W = 0$ on $\WMD$, we say that $W$ is a conformal Killing field with zero boundary condition.

\begin{thm}\label{vectExistence}
The operator $\p^{2,p}$ is an isomorphism.
\end{thm}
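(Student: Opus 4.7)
The plan is to establish both injectivity and surjectivity of $\p^{2,p}$; continuity of the inverse then follows from the open mapping theorem. The Schauder estimate \eqref{schauderEstimate} already provides the semi-Fredholm property, but this alone does not yield invertibility without further work.

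For injectivity, I would integrate by parts. If $\p^{2,p}(W) = 0$, then since $LW$ is symmetric and trace-free, a short computation gives $(LW)^{ij}\nabla_i W_j = \tfrac{1}{2}|LW|^2$, so
\[
0 = \int_M \langle \di LW, W\rangle\, dV_g = -\tfrac{1}{2}\int_M |LW|^2\, dV_g + \int_{\partial M} \langle BW, W\rangle\, dA_g.
\]
The boundary integral vanishes on $\WMD$ because $\gamma_\D W = 0$, and on $\WMN$ because $\gamma_\N BW = 0$. Hence $LW \equiv 0$, so $W$ is a conformal Killing field satisfying $W = 0$ on $\WMD$ and $BW = 0$ on $\WMN$. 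Assumption \ref{assumptions} then forces $W \equiv 0$.

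For surjectivity, I would first reduce to homogeneous Dirichlet data by extending $X_\D$ to some $\widetilde X_\D \in W^{2,p}(M)$ via a standard trace extension (the disjointness of $\WMD$ and $\WMN$ makes them clopen in $\partial M$, so there are no corners). Writing $V = W - \widetilde X_\D$, it suffices to find $V \in W^{2,p}$ with $\gamma_\D V = 0$ solving $\di LV = X'$ and $\gamma_\N BV = X'_\N$ for modified but admissible data. I would then pose the weak formulation in the closed subspace $H = \{U \in H^1(M) : \gamma_\D U = 0\}$ and apply Lax--Milgram to the bilinear form $a(V, U) = \tfrac{1}{2}\int_M \langle LV, LU\rangle\, dV_g$ against the natural linear functional. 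Coercivity of $a$ on $H$ follows by a standard contradiction argument combining a Korn-type G\aa{}rding inequality $\|U\|_{H^1}^2 \leq C(\|LU\|_{L^2}^2 + \|U\|_{L^2}^2)$ with Rellich compactness: a normalized sequence with $\|LU_k\|_{L^2} \to 0$ would admit a weak $H^1$-limit $V_\infty \in H$ with $LV_\infty = 0$, hence (by elliptic regularity applied pointwise to the $LV_\infty = 0$ identity) a classical conformal Killing field automatically satisfying $BV_\infty = 0$; Assumption \ref{assumptions} forces $V_\infty = 0$, contradicting the normalization.

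Finally, I would bootstrap the weak $H^1$ solution to $W^{2,p}$. Since $\di L$ with the mixed Dirichlet/Neumann boundary conditions forms an elliptic BVP in the Agmon--Douglis--Nirenberg sense, and the boundary pieces are disjoint components of $\partial M$, standard interior and boundary elliptic regularity upgrades $V \in H^1$ with $L^p$ right-hand side and $W^{1-1/p,p}$ Neumann data to $V \in W^{2,p}$. Setting $W = V + \widetilde X_\D$ yields the desired preimage. I expect the main technical obstacle to be the coercivity/Korn-type inequality under mixed boundary conditions, together with carefully justifying that the weak limit satisfies the natural Neumann condition so that Assumption \ref{assumptions} applies; once these are in place, the remainder is a routine application of Lax--Milgram and elliptic regularity.
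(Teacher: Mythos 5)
Your injectivity argument is essentially the same as the paper's: integrate by parts, note that symmetry and tracelessness of $LW$ give $(LW)^{ij}\nabla_i W_j = \tfrac{1}{2}|LW|^2$, observe that the boundary term vanishes by the mixed boundary conditions, and invoke the no-conformal-Killing-field assumption. Your surjectivity argument, however, is a genuinely different route. The paper first reduces to $p=2$, then proves surjectivity of $\p^{2,2}$ by showing its adjoint is injective: it uses $L^2$-duality, elliptic regularity of the adjoint system, and Lemma \ref{ODE} to produce enough test functions $\theta$ with prescribed $\theta$ and $B\theta$ on $\partial M$ to read off that the adjoint kernel consists of conformal Killing fields. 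You instead go through the variational route---reduce to homogeneous Dirichlet data, pose a weak formulation in $H=\{U\in H^1:\gamma_\D U=0\}$, and apply Lax--Milgram. This is self-contained and avoids the distribution-theoretic machinery, but it silently uses a stronger analytic input: coercivity of $a$ on $H$ rests on the conformal Korn (G\aa{}rding) inequality $\|U\|_{H^1}^2 \leq C(\|LU\|_{L^2}^2 + \|U\|_{L^2}^2)$, which, unlike the classical Korn inequality for the symmetric gradient, is a nontrivial theorem in its own right (valid only for $n\geq 3$, cf.\ Dain, \emph{Calc.\ Var.\ PDE} 25 (2006)); you treat it as a ``standard'' Korn-type inequality, which underplays its role. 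The paper's route, by contrast, only needs the $W^{2,p}$ semi-Fredholm estimate \eqref{schauderEstimate} plus the duality argument, so the elliptic input enters differently. One concrete omission in your proposal: the metric is only assumed $g\in W^{2,p}$, $p>n$, not smooth, so the Korn inequality, the ADN-type regularity bootstrap, and the pointwise elliptic regularity for $LV_\infty=0$ all need to be justified for rough coefficients. The paper handles this via a separate step---approximating $g$ by smooth metrics $g_k$ and using continuity of the Fredholm index to conclude $\mathrm{ind}\,\p^{2,p} = 0$---and you should add something analogous (or verify that the rough-coefficient versions of your lemmas hold directly).
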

\begin{proof}
We proceed as in \cite{Maxwell05b}. We first assume that $g$ is any smooth metric; the desired results will then follow from an index theory argument.

We only need to prove that $\p^{2,2}$ is invertible. Indeed, if something is in the kernel of $\p^{2,2}$, we know by elliptic regularity that it is in $W^{2,p}$, and so must be in the kernel of $\p^{2,p}$ also. Also, if $\p^{2,2}$ is surjective, then its image certainly contains $C^\infty_c\times C^\infty(\WMN)\times C^\infty(\WMD)$. Using elliptic regularity again, the image of $\p^{2,p}$ will also contain that space. Since the image of $\p^{2,p}$ is closed (since it is semi-Fredholm), we also have that $\p^{2,p}$ is surjective by the density of $C^\infty$ in Sobolev spaces.

So we now restrict our attention to $\p = \p^{2,2}$. To show $\p$ is injective, we show that any element of the kernel must be a conformal Killing field. Suppose $u \in \ker \p$. We then integrate by parts and find
\[
0 = -\int_M \langle \di Lu ,u \rangle = \int_M \langle Lu, Lu \rangle + \int_{\partial M} Lu(\nu, u)
\] where $\nu$ is the unit normal to $M$. Since $u$ is in the kernel, either $u$ or $Bu$ is 0 on each component of the boundary and so we get that $Lu \equiv 0$. Thus $u$ is a conformal Killing field with zero boundary condition, and is smooth by elliptic regularity. By Assumption \ref{assumptions} any smooth conformal Killing field with zero boundary condition must be trivial. Thus $\p$ is injective.

To show $\p$ is surjective, we can instead show that the adjoint $P^*$ is injective by \cite[19.2.1]{Hormander}. The dual space of $L^2\times H^{1/2}(\WMN) \times H^{3/2}(\WMD)$ is $L^2\times H^{-1/2}(\WMN) \times H^{-3/2}(\WMD)$. From elliptic regularity and rescaled interior estimates, we know that if $\p^*(f_1, f_2, f_3) = 0$, then in fact the $f_i$ are smooth (cf. \cite[19.2.1]{Hormander}). For smooth $\theta$, integrating by parts gives
\begin{align*}
0 &= \langle \p^*(f_i), \theta\rangle \\
&= \int_M \langle \di Lf_1, \theta\rangle + \int_{\partial M} \left(L\theta(\nu,f_1) - Lf_1(\nu,\theta)\right) + \int_{\WMN} L\theta(\nu,f_2) + \int_{\WMD} f_3 \theta.
\end{align*} By using $\theta$ that are zero on the boundary, we can immediately see that $\di L f_1 = 0$ in $M$. As shown in Lemma \ref{ODE} below, one can readily show that if $\omega$ is a smooth 1-form on $\partial M$ and $\psi$ is a smooth function on $\partial M$ that there exists a $\theta\in C^\infty$ such that $\theta = \psi$ and $B\theta = \omega$ on $\partial M$. Thus it immediately follows that $Bf_1 = 0$, $f_1=-f_2$ on $\WMN$ and $f_1 = 0$, $Bf_1 = f_3$ on $\WMD$.

Since $\di L f_1 =0$ and either $Bf_1 =0$ or $f_1=0$ on each component of the boundary, by integration by parts $f_1$ must be a conformal Killing field. Similar to earlier, this shows that $f_1 \equiv 0$, and so $f_2$ and $f_3$ must also be zero. Thus $\p^*$ is injective and so $\p$ is an isomorphism.

That was all in the smooth metric case. Suppose $g$ is only in $W^{2,p}$ with $p>n$. To show $\p^{2,p}$ is Fredholm of index 0, it is enough to show its index is 0. Since $g$ can be approximated with smooth metrics $g_k$, and since each $\p^{2,p}_{g_k}$ has index 0, so does the limit $\p^{2,p}$. To show that the kernel of $\p^{2,p}$ consists of conformal Killing fields with zero boundary condition, we integrate by parts again using the fact that $u=0$ or $Bu=0$ on the boundary.

\end{proof}

\begin{lem}\label{ODE}
Let $(M,g)$ be a smooth manifold with boundary $\partial M$, with smooth metric. If $\omega$ is a smooth 1-form on $\partial M$ and $\psi$ is a smooth vector field on $\partial M$ (perhaps including a component in the normal direction), then there exists a vector field $\theta\in C^\infty(M)$ such that $\theta = \psi$ and $B\theta = \omega$ on $\partial M$.
\end{lem}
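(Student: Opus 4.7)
The plan is to construct $\theta$ first in a collar neighborhood of $\partial M$ by prescribing both its boundary value and its normal derivative there, and then to extend it smoothly to all of $M$ by a cutoff. The key observation is that $B\theta = LW(\nu,\cdot)|_{\partial M}$ is a first-order differential operator in $\theta$ whose dependence on $\partial_\nu \theta|_{\partial M}$ is given by an invertible pointwise linear map, so once $\theta|_{\partial M} = \psi$ has been fixed, the Robin-type condition $B\theta = \omega$ can be solved algebraically for $\partial_\nu \theta|_{\partial M}$.

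To carry this out, I would first compute the principal symbol of $B$ in the normal direction. For any $\theta \in T_xM|_{\partial M}$, the part of $B\theta$ that comes from normal differentiation has symbol
\[
\sigma_\nu(B)(\theta) \;=\; \theta + \tfrac{n-2}{n}\bigl(g(\nu,\theta)\bigr)\nu,
\]
because $\nu^i(\xi_i \theta_j + \xi_j \theta_i - \tfrac{2}{n}\xi^k\theta_k g_{ij})$ evaluated at $\xi = \nu$ and $|\nu|^2 = 1$ produces exactly this expression. Writing $\theta = \theta^\top + a\nu$ with $\theta^\top$ tangential, one sees
\[
\sigma_\nu(B)(\theta) \;=\; \theta^\top + \tfrac{2(n-1)}{n}\,a\,\nu,
\]
which is invertible for every $n\geq 2$. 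Consequently, once $\psi$ is fixed on $\partial M$, the remaining (tangential) derivatives of $\theta$ appearing in $B\theta$ are determined by $\psi$, and the equation $B\theta = \omega$ on $\partial M$ has a unique smooth solution $V \in \Gamma(TM|_{\partial M})$ for the datum $\partial_\nu \theta|_{\partial M}$.

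With $\psi$ and $V$ now specified, I would build $\theta$ explicitly as follows. Choose a tubular neighborhood $U \cong [0,\epsilon)\times \partial M$ with normal coordinate $t$, and extend $\psi$ and $V$ to smooth sections $\widetilde\psi$, $\widetilde V$ of $TM|_U$ by parallel translating along the normal geodesics issuing from $\partial M$. Pick a smooth cutoff $\chi\colon [0,\infty)\to [0,1]$ with $\chi\equiv 1$ near $0$ and $\chi\equiv 0$ for $t\geq \epsilon/2$, and define
\[
\theta(t,x) \;=\; \chi(t)\bigl(\widetilde\psi(t,x) + t\,\widetilde V(t,x)\bigr) \quad \text{on } U,
\qquad \theta \equiv 0 \quad \text{on } M\setminus U.
\]
Then $\theta\in C^\infty(M)$, $\theta|_{\partial M} = \psi$, and $\partial_\nu \theta|_{\partial M} = V$ by construction, so $B\theta = \omega$ on $\partial M$.

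The only nontrivial step is the symbol computation showing that the normal-derivative part of $B$ is pointwise invertible; the rest is a standard collar-and-cutoff extension. I do not expect any further obstacle, since smoothness of $\psi$, $\omega$, $g$ and $\partial M$ guarantees smoothness of $V$ and hence of $\theta$.
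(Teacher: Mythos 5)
Your proof is correct, but it takes a somewhat different and arguably cleaner route than the paper. Both arguments rest on the same key computation — that the dependence of $B\theta$ on $\nabla_\nu\theta|_{\partial M}$ is through the pointwise invertible linear map $V\mapsto V + \tfrac{n-2}{n}g(\nu,V)\nu$ (equivalently, the identity on tangential components and $\tfrac{2(n-1)}{n}$ times the identity on the normal component) — and you carry this computation out explicitly where the paper leaves it implicit. The paper then uses this invertibility to rewrite $L\theta\cdot\nu = \omega$ in boundary normal coordinates as a transport equation $\nabla_\nu\theta_i = f_i$, extends $\omega$ into a collar, appeals to short-time existence for this first-order linear Cauchy problem with non-characteristic initial hypersurface $\partial M$, and then extends the collar solution smoothly into $M$; this enforces $L\theta\cdot\nu=\omega$ on a full neighborhood of $\partial M$, which is more than the lemma requires. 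You instead note that the lemma only constrains the $1$-jet of $\theta$ along $\partial M$: once $\theta|_{\partial M}=\psi$ is fixed, $B\theta=\omega$ on $\partial M$ is a pointwise linear equation for $V=\nabla_\nu\theta|_{\partial M}$, uniquely solvable by the invertibility of the boundary symbol, and then the usual collar-plus-cutoff $1$-jet extension $\theta = \chi(t)(\widetilde\psi + t\widetilde V)$ produces the desired smooth vector field without solving any PDE. Both arguments are valid; your version makes the role of the boundary symbol transparent and is more elementary.
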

\begin{proof}
Let $\nu$ be the unit inward normal vector to $\partial M$. Use $\nu$ to pick boundary normal coordinates, where $\nu$ geodesics define one of the coordinates. So, for instance, the boundary has $x^\nu = 0$. To show the desired $\theta$ exists, we will express it as the solution to a local PDE. Taking a solution on a neighborhood of the boundary, and then extending it smoothly, we get the desired $\theta$.

We take the initial conditions $\theta = \psi$ on $\partial M$. Then $L\theta\cdot \nu = \omega$ in local coordinates reduces to
\[
\nabla_\nu \theta_i = f_i
\] for some known terms $f_i$ in terms of $\omega$ and $\nabla_j \theta_k$ for $j\neq \nu$. If we extend $\omega$ by making the coordinate components constant (though we could take any smooth extension), this is a standard PDE with smooth short time existence. This completes the theorem.
\end{proof}

\subsection{York Decomposition}\label{sec:YorkDecomposition}

Now that we have a solution of the vector problem \eqref{vectorSystem}, we can talk about the York decomposition of the second fundamental form. In the closed and asymptotically Euclidean cases, the second fundamental form is decomposed into a trace part, a transverse-traceless (i.e., divergence free and trace free) part and a ``longitudinal"-traceless part. One of the useful properties of this decomposition is that it is orthogonal, i.e., it is a direct sum decomposition. This is because two of the terms are traceless, and because
\[
\int_M \sigma \cdot LW = -\int_M \di \sigma \cdot W = 0
\] since the boundary term disappears and since $\sigma$ is divergence free. However, in the general compact with boundary case, this orthogonality is not automatic.

In particular, when we take that same term and integrate by parts, we get
\[
\int_M \sigma \cdot LW = -\int_M \di \sigma \cdot W  + \int_{\partial M} \sigma(\nu, W),
\] where $\nu$ is the normal vector to the boundary. Thus, if we want the decomposition to be a direct sum decomposition, we need to specify either that $\sigma\cdot\nu = 0$ or that $W = 0$ on each component of $\partial M$.

Thus, we need to construct transverse-traceless symmetric 2-tensors $\sigma$ with that boundary condition. Let $S$ be any symmetric traceless 2-tensor. We assume it is traceless since the removing of the trace is well understood. We then solve the following problem for $V$:
\[
\begin{array}{rlc}\di L V &\hspace{-2.4mm}= \di S & \\
 BV &\hspace{-2.4mm}= S\cdot \nu &  \textrm{ on } \WMN \\
 V &\hspace{-2.4mm}= 0 & \textrm{ on } \WMD \end{array}
\] where traces are implied if the data or solutions are not sufficiently regular.

By Theorem \ref{vectExistence}, we know that there is a $V$ solving this system. We then let $\sigma = S-LV$. Thus we get that $\sigma$ is transverse-traceless, as in the standard York decomposition. In addition we get that $\sigma\cdot \nu = 0$ or $V = 0$ on $\partial M$, and so the decomposition is orthogonal on arbitrary compact manifolds with boundary.

\begin{prop}
There is a direct sum decomposition of traceless symmetric 2-tensors $S$ into transverse-traceless tensors $\sigma$ and ``longitudinal"-traceless tensors $LV$ such that either $\sigma\cdot \nu = 0$ or $V=0$ on each component of the boundary.
\end{prop}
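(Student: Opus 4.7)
The plan is to formalize the construction sketched in the paragraphs just before the proposition. Given a traceless symmetric $2$-tensor $S$ (of sufficient regularity, say $S\in W^{1,p}$ so that $\di S\in L^p$ and $S(\nu,\cdot)|_{\WMN}\in W^{1-\frac{1}{p},p}(\WMN)$), I would apply Theorem \ref{vectExistence} to the vector problem with data $X = \di S$, $X_\N = S(\nu,\cdot)|_{\WMN}$ and $X_\D = 0$ to obtain a unique $V\in W^{2,p}$ with $\di LV = \di S$ on $M$, $BV = S\cdot\nu$ on $\WMN$ and $V = 0$ on $\WMD$. I would then set $\sigma := S - LV$.

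The next step is to verify that $(\sigma,V)$ has the required properties. The operator $L$ is trace-free by construction (the $-\tfrac{2}{n}\nabla^k W_k g_{ij}$ subtraction), so $\sigma$ is traceless because $S$ is. The equation $\di LV = \di S$ gives $\di \sigma = 0$, so $\sigma$ is transverse-traceless. On $\WMN$ the identity $\sigma(\nu,\cdot) = S(\nu,\cdot) - BV = 0$ holds, and on $\WMD$ we already have $V=0$, so the boundary condition is met on each component. Orthogonality of the decomposition is exactly the integration-by-parts identity given in the excerpt: since $\di\sigma = 0$ and the boundary integral $\int_{\partial M} \sigma(\nu,V)$ vanishes componentwise (either $\sigma\cdot\nu=0$ or $V=0$), we get $\int_M \sigma\cdot LV = 0$.

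It remains to show the decomposition is in fact a direct sum, i.e., uniqueness. Suppose $S = \sigma_1 + LV_1 = \sigma_2 + LV_2$ are two such decompositions. Then $U := V_1 - V_2$ satisfies $LU = \sigma_2 - \sigma_1$, so $\di LU = 0$ on $M$; on $\WMN$, $BU = (\sigma_2-\sigma_1)(\nu,\cdot) = 0$; and on $\WMD$, $U = 0$. By the injectivity portion of Theorem \ref{vectExistence}, $U = 0$, hence $V_1 = V_2$ and $\sigma_1 = \sigma_2$.

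I do not anticipate a serious obstacle: the entire statement is essentially a packaging of Theorem \ref{vectExistence} together with the divergence-free check and the boundary trace calculation. The only mildly subtle point is the regularity bookkeeping for $S$, which must be good enough for $\di S$ and $S\cdot\nu$ to lie in the data spaces for the vector problem; this is why in Assumption \ref{assumptions} the authors require $|\sigma|^2\in W^{1,p}$ and $\sigma\cdot\nu = 0$ on $\WMN$, which is consistent with the output of this construction.
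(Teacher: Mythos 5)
Your proof follows exactly the construction the paper gives just before the proposition: solve the vector problem with $X=\di S$, $X_\N = S\cdot\nu$, $X_\D = 0$, set $\sigma := S - LV$, and check transversality, tracelessness, and the boundary condition. The only addition is that you spell out the uniqueness/directness step via the injectivity part of Theorem \ref{vectExistence}, which the paper leaves implicit; this is correct and completes the same argument.
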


We note that while we do take such a $\sigma$ for our problem, the $W$ we find by solving the vector problem \eqref{vectorSystem} is not the other half of this decomposition. This is because we allowed $X_\D$ to be orthogonal to the fixed $\sigma\cdot \nu$. However, we still have the property that $\int_M \sigma \cdot LW = 0$.

\section{The Combined System}

Next we will show that given a global sub and supersolution (defined below), the combined system admits a solution essentially under the same conditions as the Lichnerowicz problem does alone, as in Theorem \ref{lichExistence}. To do this, we need Theorem 5 from \cite{HNT09}.

\begin{thm}\cite[Thm 5]{HNT09} \label{fixedPoint}
Let $X$ and $Y$ be Banach spaces, and let $Z$ be a real ordered Banach space having the compact embedding $X \hookrightarrow Z$. Let $[\phi_-, \phi_+] \subset Z$ be a nonempty interval which is closed in the topology of $Z$, and set $U = [\phi_-,\phi_+]\cap \bar B_M \subset Z$ where $\bar B_M$ is the closed ball of finite radius $M>0$ in $Z$ around the origin. Assume $U$ is nonempty, and let the maps
\[
S: U \to \mathcal{R}(S) \subset Y, \hspace{5mm} T: U \times \mathcal{R}(S) \to U\cap X,
\] be continuous maps. Then there exist $\phi \in U \cap X$ and $W \in \mathcal{R}(S)$ such that
\[
\phi =T(\phi, W) \,\,\,\, \textrm{ and } \,\,\,\, W = S(\phi).
\]
\end{thm}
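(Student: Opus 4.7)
The plan is to collapse the coupled system into a single equation in $Z$ and apply Schauder's fixed-point theorem. Define the composite map $G \colon U \to U$ by
\[
G(\phi) := T(\phi, S(\phi)).
\]
Because $T$ takes values in $U \cap X \subset U$, this is well-defined, and continuity of $G$ with respect to the $Z$-topology is inherited from the continuity of $S$, the continuity of $T$, and the continuity of the embedding $X \hookrightarrow Z$. Any fixed point $\phi$ of $G$ automatically lies in $U \cap X$, and then $W := S(\phi) \in \mathcal{R}(S)$ provides the second component of the desired pair, giving $\phi = T(\phi,W)$ and $W = S(\phi)$ simultaneously.

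Next I would verify that $U$ satisfies the classical hypotheses of Schauder's theorem as a subset of the Banach space $Z$. Nonemptiness and closedness in $Z$ are part of the hypothesis; boundedness is immediate from $U \subset \bar B_M$; and convexity follows from the convexity of the closed ball together with the convexity of the order-interval $[\phi_-,\phi_+]$, which holds in any ordered vector space since $\phi_- \leq a,b \leq \phi_+$ implies $\phi_- \leq ta + (1-t)b \leq \phi_+$ for $t \in [0,1]$. Hence $U$ is a nonempty, closed, convex, bounded subset of $Z$, and $G$ is a continuous self-map of $U$.

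The main technical point — and the place where the compact embedding $X \hookrightarrow Z$ is indispensable — is establishing that $G(U)$ is relatively compact in $Z$. Here $G$ factors through $U \cap X$, so it suffices to show that $G(U)$ is bounded in the $X$-norm; the compactness of the embedding then transports this $X$-bounded set into a $Z$-precompact one. Such an $X$-bound is what the hypothesis that $T$ maps continuously into $U\cap X$ is really encoding (in concrete applications it comes from elliptic regularity for the PDEs defining $S$ and $T$, together with a uniform a priori bound on the solution operator over the bounded parameter set $U \times \mathcal{R}(S)$). Once $G(U)$ is precompact in $Z$, Schauder's theorem yields $\phi \in U$ with $\phi = G(\phi)$, and the pair $(\phi, S(\phi))$ solves the coupled system. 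I expect this compactness / $X$-boundedness step to be the main obstacle, while the reduction to a single fixed point and the verification of convexity, closedness, and boundedness are essentially formal.
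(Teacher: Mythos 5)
Your reduction to the self-map $G(\phi) = T(\phi,S(\phi))$ on $U$ and the appeal to Schauder's theorem, with the compact embedding $X\hookrightarrow Z$ supplying compactness, is the right strategy and matches the approach in \cite{HNT09}; the checks that $U$ is nonempty, closed, convex, and bounded in $Z$ and that $G$ is a continuous self-map are fine. The gap is in the compactness step, which you do not so much close as rename. You assert that an $X$-bound on $G(U)$ ``is what the hypothesis that $T$ maps continuously into $U \cap X$ is really encoding.'' It is not: continuity of $T$ into $X$ does not give boundedness of the image in $X$ unless the domain $U\times\mathcal{R}(S)$ is compact, which is not assumed; and $U\cap X$ itself need not be bounded in $X$. (For instance, with $Z=C^0([0,1])$ and $X=C^{0,\alpha}([0,1])$, the order interval $[0,1]$ contains $\tfrac{1}{2}(1+\sin(nx))$ for every $n$, a sequence whose H\"older norms are unbounded.) So from $G(U)\subset U\cap X$ alone one cannot conclude that $G(U)$ is relatively compact in $Z$, and Schauder does not yet apply.

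What is actually needed is the explicit uniform estimate $\sup_{(\phi,W)\in U\times\mathcal{R}(S)}\|T(\phi,W)\|_{X}<\infty$; only then does the compact embedding convert $G(U)$ into a $Z$-precompact set. This estimate is an additional ingredient, not a formal consequence of the stated continuity. In the applications (both here and in \cite{HNT09}) it comes from the elliptic a priori estimate underlying the Picard-type map $T$, together with the $Z$-bound on $U$ and the $Y$-bound on $\mathcal{R}(S)$, but in an abstract proof of the theorem it must be isolated and either imposed as a hypothesis or derived as a lemma, rather than read off from ``$T$ is continuous into $U\cap X$.'' Your proposal correctly identifies the compactness step as the crux, but it leaves precisely that step open.
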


Let $W_\phi$ represent the $W^{2,p}$ solution to the vector problem \eqref{vectorSystem} with $\phi$. In general, we expect the functional $F$ (see \eqref{lichSystem}) to depend on $W$, perhaps on both the interior and boundary of $M$. We denote this dependence by $F_W$. We call $\phi_+$ a global supersolution if $F_{W_\phi}(\phi_+) \geq 0$ for any $\phi \in (0,\phi_+]_{2,p}$. We similarly call $\phi_-$ a global subsolution if $F_{W_\phi}(\phi_-) \leq 0$ for any $\phi \in [\phi_-, \phi_+]_{2,p}$.

We call $W$ admissible for a given supersolution $\phi_+$ if $W$ is the solution of the vector problem for some $\phi \in (0,\phi_+]_{2,p}$, and for a super/subsolution set $\phi_+, \phi_-$ if $W$ is the solution of the vector problem for some $\phi \in [\phi_-,\phi_+]_{2,p}$.

\begin{prop} \label{combinedExistence}
Let $\phi_+\in W^{2,p}$ be a global supersolution. Suppose that for any admissible $W$ there exists a subsolution $\phi_- \in W^{2,p}$, with $\phi_-\leq \phi_+$ but not necessarily global. Suppose that any solution $\phi$ of the Lichnerowicz problem \eqref{lichSystem} with an admissible $W$ is bounded below by a uniform constant $K\leq \phi_+$, which may depend on $\phi_+$ and $\phi_-$. Then there exists a positive solution $\phi \in [K,\phi_+]_{2,p}$ and $W \in W^{2,p}$ of the combined conformal constraint system \eqref{lichSystem}-\eqref{vectorSystem}.
\end{prop}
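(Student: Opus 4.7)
The plan is to apply the Schauder-type fixed point result (Theorem \ref{fixedPoint}) with $X=Y=W^{2,p}$ and $Z=C^{0}(M)$ (ordered pointwise), using the compact Sobolev embedding $W^{2,p}\hookrightarrow C^{0}$ available because $p>n$. I would take $U=[K,\phi_{+}]_{2,p}\cap\bar B_{r}$ in $Z$, where $r$ is chosen large enough that every $W^{2,p}$ solution of the Lichnerowicz problem trapped between $K$ and $\phi_+$ is $C^0$-bounded by $r$; such a bound is implicit in Theorem \ref{lichExistence} applied with the fixed supersolution $\phi_+$.

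Define $S\colon U\to W^{2,p}$ by $S(\phi)=W_{\phi}$, the unique solution of the vector problem \eqref{vectorSystem} with source $X=\sum c_{i}\phi^{k_{i}}$. Theorem \ref{vectExistence} guarantees existence and uniqueness, and the $C^0$ bound on $\phi$ puts $X$ into $L^p$. Continuity of $S$ from $Z$ to $W^{2,p}$ follows from linearity plus the estimate \eqref{schauderEstimate}: if $\phi_{n}\to\phi$ in $C^{0}$, dominated convergence gives convergence of the source in $L^p$, and hence of $W_{\phi_{n}}$ in $W^{2,p}$.

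Define $T\colon U\times\mathcal{R}(S)\to U\cap W^{2,p}$ by letting $T(\phi,W)$ be the solution of the Lichnerowicz problem \eqref{lichSystem} with coefficient $a_{w}=\tfrac{n-2}{4(n-1)}|\sigma+LW|^{2}$. Note that $LW\in W^{1,p}\hookrightarrow L^{\infty}$, so $a_w\in L^p$. The global supersolution $\phi_{+}$ serves as a supersolution for this $a_{w}$, the hypothesis supplies a subsolution $\phi_{-}^{W}\leq\phi_{+}$ since every $W\in\mathcal{R}(S)$ is admissible, and then Theorem \ref{lichExistence} produces a solution, which the uniform lower bound forces into $[K,\phi_{+}]_{2,p}$. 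For $T$ to be single-valued I would invoke uniqueness of the Lichnerowicz solution in the defocusing regime fixed by Assumption \ref{assumptions} (Lem.\ 4.2 of \cite{HT13}). Continuity of $T$ in the $Z$-topology for $\phi$ and $W^{2,p}$-topology for $W$ follows from Lemma \ref{LichnerowiczContinuity}, combined with the fact that $W\mapsto|\sigma+LW|^{2}$ is continuous from $W^{2,p}$ into $L^{p}$ (the latter by the Banach-algebra property of $W^{1,p}$ for $p>n$).

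An application of Theorem \ref{fixedPoint} then produces $\phi\in U\cap W^{2,p}$ and $W\in W^{2,p}$ with $\phi=T(\phi,W)$ and $W=S(\phi)$; unwinding definitions, this pair solves \eqref{lichSystem}--\eqref{vectorSystem} with $\phi\in[K,\phi_{+}]_{2,p}$. The main obstacle is engineering $T$ to be a well-defined continuous single-valued map: single-valuedness leans on defocusing-case uniqueness (which is exactly why the sign hypotheses on $b_\theta,b_\tau,b_w$ are part of Assumption \ref{assumptions}), and continuity must be established in a topology weak enough for the embedding $X\hookrightarrow Z$ to be compact but strong enough to control the quadratic dependence of $a_{w}$ on $LW$.
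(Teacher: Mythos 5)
Your approach differs from the paper's in one important structural choice: you define $T(\phi,W)$ to be the \emph{solution} of the Lichnerowicz problem with the given $a_w$, whereas the paper takes $T$ to be the \emph{Picard iteration map} from \cite[Thm 5.1]{HT13}. Your version makes $T$ independent of its first argument, which is permitted by Theorem~\ref{fixedPoint}, but it forces you to justify that the Lichnerowicz solution map is single-valued and continuous. You correctly identify that single-valuedness comes from the defocusing uniqueness result and continuity from Lemma~\ref{LichnerowiczContinuity}, but both of those carry a nondegeneracy clause (essentially: $\MD\neq\emptyset$, or one of $a_\tau+a_w$, $b_\tau$, $b_\theta$, $b_w$ nonvanishing) that is \emph{not} among the hypotheses of Proposition~\ref{combinedExistence} or Assumption~\ref{assumptions}. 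In the fully degenerate case ($\MD=\emptyset$, $\tau\equiv 0$, $\sigma+LW\equiv 0$, $b_\tau=b_\theta=b_w=0$), the Lichnerowicz operator has a one-parameter family of solutions and the solution map is neither single-valued nor continuous, so your $T$ is not well defined. The paper sidesteps this entirely: the Picard map is the solution operator of a \emph{linear} shifted problem, hence automatically single-valued and continuous, with no uniqueness of the nonlinear problem needed; the remaining technicality (scalar and mean curvatures not of constant sign, needed for the Picard scheme to be monotone and map into $U$) is handled by a conformal change rather than by appealing to the solution-map machinery. If you want your variant to stand on its own, you should either add the nondegeneracy hypotheses needed for the solution map to be defined and continuous, or note (as the paper implicitly does) that those hypotheses hold in every case you actually invoke the proposition. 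Your other departures --- taking $Z=C^0(M)$ instead of $W^{\widetilde s,p}$ with $\widetilde s\in(1,2)$, and choosing the radius $r$ via the $L^\infty$ bound $\|\phi_+\|_\infty$ --- are harmless, since $W^{2,p}\hookrightarrow C^0$ is compact for $p>n$ and $[K,\phi_+]$ is closed in $C^0$.
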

\begin{proof}
\textbf{Step 1.} \emph{Choice of spaces.} We will be using Theorem \ref{fixedPoint}. First, we identify $X = Y = W^{2,p}$ and $Z = W^{\widetilde s,p}$, with $\widetilde s \in \cap (1,2)$ (as in \cite[pg 16]{HT13}). This gives that $X\hookrightarrow Z$ is compact. The ordering on $Z$ is the standard $L^\infty$ ordering, i.e., $f\geq g$ if $f(x)\geq g(x)$ a.e.. Clearly $[K, \phi_+]_{\widetilde s,p}$ is non-empty and closed. Let $U = [K, \phi_+]_{\widetilde s,p} \cap \bar B_M$, with $M$ to be determined in Step 3.

\textbf{Step 2.} \emph{Construction of $S$.} Consider the $X$'s as functions of $\phi$. By Theorem \ref{vectExistence}, $\p^{2,p}$ is an isomorphism.  Let $S = (\p^{2,p})^{-1}\circ (X, X_\N, X_\D): [K, \phi_+]_{\widetilde s,p} \to W^{2,p}$, i.e., the solution map of the vector problem. The continuity of $\phi \mapsto X(\phi)$ is given by Corollary \ref{NormSplitting} since we assumed $X = \sum c_i \phi^{k_i}$. The continuity of $(\p^{2,p})^{-1}$ is given by the estimate \eqref{schauderEstimate}. Thus $S$ is a continuous map.

\textbf{Step 3.} \emph{Construction of $T$.} Let $T(\phi,W)$ be the map $T$ defined in \cite[Thm 5.1]{HT13}, a Picard type map for the Lichnerowicz problem \eqref{lichSystem}. If $W$ is admissible, then $\|a_w\|_{p}$ is bounded by Theorem \ref{schauderEstimateTheorem}. If the scalar and boundary mean curvatures are continuous and of constant sign, the proof of \cite[Thm 5.1]{HT13} then gives the properties for $T$ we need, namely that $T$ is a continuous map in $\phi$ and $W$ and that it maps into $U\cap X$. The choice of $\widetilde s$ is the same as in that proof.

If the curvatures are not continuous and of constant sign, we can use the conformal covariance of the Lichnerowicz problem as in \cite[pg 39]{HNT09} to get the same properties for $T$.

\textbf{Step 4.} \emph{Finish.} We have now fulfilled the hypotheses of Theorem \ref{fixedPoint}, and so there is a solution $\phi \in [K,\phi_+]_{2,p}$ and $W \in W^{2,p}$ to the conformal constraint equations \eqref{lichSystem}-\eqref{vectorSystem}.
\end{proof}

The proof also shows that the same result holds if we don't assume conformal covariance, but instead guarantee that the scalar and mean curvatures are continuous and of constant sign.

\begin{cor}\label{combinedExistenceCor}
Let $\psi$ be a conformal factor independent of $\phi$ and $W$. Suppose the same conditions hold as for Proposition \ref{combinedExistence} except that the global supersolution $\phi_+$, the subsolution(s) $\phi_-$ and uniform lower bound $K$ are for the conformally transformed Lichnerowicz problem $\widehat F$. Then the same existence and regularity holds.
\end{cor}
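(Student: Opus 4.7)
The plan is to invoke Proposition \ref{combinedExistence} directly on the conformally transformed system and then translate the solution back using Definition \ref{conformalCovariance}. Because $\psi$ is assumed independent of $\phi$ and $W$, the change of unknown $\phi = \psi\widehat\phi$ converts a solution of \eqref{lichSystem}-\eqref{vectorSystem} into a solution of a system of the same form with $F$ replaced by $\widehat F$, and vice versa.

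First I would rewrite the system in the new unknown. By conformal covariance, $F_W(\psi\widehat\phi) \gtreqqless 0 \Leftrightarrow \widehat F_W(\widehat\phi) \gtreqqless 0$, so the Lichnerowicz side translates directly. For the vector side,
\[
\di LW = X(\phi) = \sum_i c_i (\psi\widehat\phi)^{k_i} = \sum_i \widehat c_i \widehat\phi^{k_i},
\]
where $\widehat c_i := c_i \psi^{k_i}$. Since $\psi \in W^{2,p}\hookrightarrow L^\infty$ and $c_i\in L^p$, the new coefficients $\widehat c_i$ lie in $L^p$, so the transformed $X$ has exactly the algebraic form required by Assumption \ref{assumptions}. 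The boundary data $X_\N$, $X_\D$ and the operator $\p^{2,p}$ are unchanged, so Theorem \ref{vectExistence} applies to the transformed vector problem verbatim.

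Next I would verify that the hypotheses of Proposition \ref{combinedExistence} are exactly what the corollary gives for the transformed system: the global supersolution $\widehat\phi_+$, the subsolution $\widehat\phi_- \leq \widehat\phi_+$ for every admissible $W$, and the uniform lower bound $\widehat K$. Applying Proposition \ref{combinedExistence} to this transformed system produces $(\widehat\phi, W) \in [\widehat K, \widehat\phi_+]_{2,p} \times W^{2,p}$ solving it. Setting $\phi := \psi\widehat\phi \in W^{2,p}$ then gives, by conformal covariance, $F_W(\phi) = 0$, while $\widehat X(\widehat\phi) = X(\phi)$ shows that the vector equation is satisfied in the original variables as well. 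The regularity is preserved since $\psi \in W^{2,p}$ is an algebra under multiplication for $p>n$.

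The one point that needs a careful check — and the only place where anything could go wrong — is that admissibility transports correctly under $\phi = \psi\widehat\phi$. A $W$ admissible for $\widehat\phi_+$ means $W$ solves the transformed vector problem for some $\widehat\phi \in (0,\widehat\phi_+]_{2,p}$, which is equivalent to $W$ solving the original vector problem for $\phi = \psi\widehat\phi \in (0,\psi\widehat\phi_+]_{2,p}$; this is precisely the class of $W$'s against which the corollary's hypothesis on $\widehat\phi_-$ and $\widehat K$ is postulated. No serious obstacle arises — the corollary is essentially the observation that the conformal-covariance trick used in Step 3 of Proposition \ref{combinedExistence} works just as well when the sub/supersolutions are produced in the transformed frame from the outset.
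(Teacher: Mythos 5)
Your proof is correct and rests on the same mechanism as the paper's: conformal covariance transports sub/supersolutions, lower bounds, and admissibility between the two frames so that Proposition \ref{combinedExistence} can be applied. The only difference is directional — the paper multiplies the given $\phi_+$, $\phi_-$, $K$ by $\psi$ to land in the original frame and applies the proposition there, whereas you apply the proposition in the hatted frame and then push the solution forward via $\phi = \psi\widehat\phi$; your extra check that $X(\psi\widehat\phi) = \sum_i (c_i\psi^{k_i})\widehat\phi^{k_i}$ retains the required form, with $\widehat c_i \in L^p$ since $\psi \in W^{2,p}\hookrightarrow L^\infty$, is a detail the paper leaves implicit.
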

\begin{proof}
By definition of conformal covariance, if $\phi_+$ is the global supersolution, then $\psi\phi_+$ is a global supersolution of the original Lichnerowicz problem, since $\psi$ does not depend on $W$. Similarly, for any solution $\phi$ of $\widehat F (\phi) = 0$, $F(\psi\phi) = 0$. Thus any solution $\psi \phi$ of the original Lichnerowicz problem is uniformly bounded below by $K\psi$.
\end{proof}

Theorem \ref{combinedExistence} reduces the problem of finding solutions $(\phi, W)$ to the combined conformal constraint equations \eqref{lichSystem}-\eqref{vectorSystem} to that of finding global supersolutions and uniform lower bounds. In fact, in many cases we can reduce the problem to just finding global supersolutions, as in \cite{Maxwell09}. We first prove a lemma.

\begin{lem}\label{boundedSolution}
Let $\alpha \in L^\infty(M)$ with $\alpha\geq 0$ and $\beta \in W^{1-\frac{1}{p},p}(\MN)$ with $\beta\geq 0$. Assume also that either $\alpha \not\equiv 0$, $\beta \not\equiv 0$ or $\MD \neq \emptyset$.  Then there exists constants $c_1$ and $c_2$ such that for every $f\in L^p$, $g\in W^{1-1/p,p}(\MN)$ and $h\in W^{2-1/p,p}(\MD)$, with $f,g,h \geq 0$, the solution $v$ of
\begin{equation}\label{eq:boundSubsolution1}
\begin{array}{rll} -\Delta v + \alpha v &= f& \textrm{   on } M \\
                                \gamma_N \partial_\nu v + \beta v &= g & \textrm{   on } \MN \\
                                \gamma_D v &= h& \textrm{   on } \MD \\ \end{array}
\end{equation} satisfies
\begin{equation}\label{eq:boundSubsolution2}
\sup(v) \leq c_1 \left(\|f\|_{p} + \|g\|_{W^{1-1/p,p}(\MN)} + \|h\|_{W^{2-1/p,p}(\MD)}\right)
\end{equation} and
\begin{equation}\label{eq:boundSubsolution3}
\inf(v) \geq c_2 \left(\int_{M\setminus N} f+ \int_{\MN} g + \int_{\MD} h\right)
\end{equation} where $N$ is any neighborhood of the boundary and $c_2$ depends on $N$.
\end{lem}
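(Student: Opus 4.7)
The plan is to combine elliptic invertibility, the maximum principle, and a Green's function / Poisson kernel representation. First I would observe that the hypothesis that at least one of $\alpha\not\equiv 0$, $\beta\not\equiv 0$, or $\MD\neq\emptyset$ holds --- together with $\alpha,\beta\geq 0$ --- rules out nonzero constants from the kernel of the operator $v\mapsto(-\Delta v+\alpha v,\gamma_N\partial_\nu v+\beta v,\gamma_D v)$. As in the proof of Theorem~\ref{vectExistence}, ellipticity of the interior operator together with the Lopatinski--Shapiro condition on the boundary conditions makes the operator semi-Fredholm of index zero, and injectivity then upgrades it to an isomorphism from $W^{2,p}$ onto $L^p\times W^{1-1/p,p}(\MN)\times W^{2-1/p,p}(\MD)$. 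The accompanying elliptic estimate together with the Sobolev embedding $W^{2,p}\hookrightarrow C^0(M)$ (valid since $p>n$) gives the upper bound~\eqref{eq:boundSubsolution2}, and the weak maximum principle --- using $\alpha,\beta\geq 0$ in the interior and on $\MN$, $h\geq 0$ on $\MD$, together with Hopf's lemma to exclude a negative minimum on $\MN$ --- yields $v\geq 0$.

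For the lower bound~\eqref{eq:boundSubsolution3}, I would decompose by linearity $v=v_f+v_g+v_h$ and use the integral representations
\[
v_f(x)=\int_M G(x,y)f(y)\,dV,\qquad v_g(x)=\int_{\MN}G(x,y)g(y)\,dS,\qquad v_h(x)=\int_{\MD}P(x,y)h(y)\,dS,
\]
where $G$ is the Green's function of the homogeneous mixed BVP and $P(x,y)=-\partial_{\nu_y}G(x,y)$ on $\MD$ is the associated Poisson kernel. By the strong maximum principle $G(x,\cdot)$ is strictly positive on $M\setminus\MD$ and, by Hopf's lemma, $P(x,\cdot)>0$ on $\MD$, for each interior $x$. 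Fix a neighborhood $N$ of $\partial M$; the set $\overline{M\setminus N}$ is a compact subset of the interior of $M$, and on each of $\overline{M\setminus N}\times\overline{M\setminus N}$, $\overline{M\setminus N}\times\MN$, and $\overline{M\setminus N}\times\MD$, the kernels $G$ and $P$ are continuous off the diagonal of $G$ and lower semicontinuous overall (the diagonal singularity of $G$ only makes it larger). Compactness then yields a positive constant $c_2(N)$ that lower-bounds $G$ on the first two product sets and $P$ on the third. Integrating these pointwise bounds against the non-negative data produces~\eqref{eq:boundSubsolution3}, where I read $\inf(v)$ as the infimum over $\overline{M\setminus N}$ --- the only interpretation consistent with $c_2$ depending on $N$, since $v$ may vanish on $\MD$ when $h\equiv 0$.

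The main obstacle is the rigorous construction and pointwise positivity of the Green's function at our metric regularity ($g\in W^{2,p}$, hence $C^{1,\alpha}$ by Sobolev embedding), which is classical but not entirely routine in the mixed Robin--Dirichlet setting. A softer route that avoids building $G$ explicitly is a contradiction argument: normalize the data so that $\int_{M\setminus N}f+\int_{\MN}g+\int_{\MD}h=1$, suppose $\inf_{\overline{M\setminus N}}v$ can be made arbitrarily small, extract a weakly convergent subsequence of data (with nontrivial limit by lower semicontinuity of the $L^1$-norms) and a $W^{2,p}$-convergent subsequence of solutions via the estimate from the first paragraph, and derive a contradiction with the strong maximum principle applied to the limiting problem, whose solution must be strictly positive throughout $\overline{M\setminus N}$.
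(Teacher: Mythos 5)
Your proof takes essentially the same route as the paper. For the upper bound you both invoke invertibility of the mixed boundary operator (an isomorphism onto $L^p\times W^{1-1/p,p}(\MN)\times W^{2-1/p,p}(\MD)$, valid because $\alpha,\beta\geq 0$ together with the nondegeneracy hypothesis rules out the constant in the kernel) followed by $W^{2,p}\hookrightarrow L^\infty$. For the lower bound you both use the Green's function representation with the Poisson kernel $-\partial_{\nu_y}G$ on $\MD$ and derive positivity of the kernels from the strong maximum principle and the Hopf lemma; that is exactly the content of parts (d)--(f) of the paper's Theorem~\ref{GreensExistence}, which the paper then uses in the one-line estimate $v(x)\geq\inf_{M\setminus N}G\int_{M\setminus N}f + \inf_{\MN}G\int_{\MN}g+\inf_{\MD}|\partial_\nu G|\int_{\MD}h$.

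Two useful observations from your write-up. First, you correctly flag that as stated $\inf(v)$ must be read as $\inf_{M\setminus N}v$ (or, equivalently, the right-hand infima over $G$ must be over the compact product sets with $x$ restricted to $\overline{M\setminus N}$): if $\MD\neq\emptyset$ and $h\equiv 0$ then $v$ vanishes on $\MD$, so a lower bound for $\inf_M v$ by a positive multiple of $\int f$ would be false. The paper's proof is silent about where the $x$-variable ranges, and this clarification matters when the lemma is used later. Second, the ``main obstacle'' you identify --- constructing the Green's function and establishing its positivity under $g\in W^{2,p}$ with mixed Robin/Dirichlet conditions --- is precisely what the paper's Appendix Theorem~\ref{GreensExistence} is for; you are right that it is not routine, and the paper devotes a page-long parametrix construction (the Giraud-type iterated kernels $\Gamma_i$ plus a $W^{2,p}$ corrector $F$) to it.

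One small caveat on your proposed ``softer'' contradiction alternative: after normalizing $\int_{M\setminus N}f+\int_{\MN}g+\int_{\MD}h=1$ you cannot in general extract a weakly $L^1$-convergent subsequence of the data (unit $L^1$ mass gives no weak $L^1$ compactness), nor will the elliptic estimate give $W^{2,p}$ compactness of $v$ when only an $L^1$ bound on $f$ is assumed. You would instead have to pass to weak-$*$ limits of the data as measures and work with the very weak formulation of the limiting problem, at which point you essentially have to reintroduce the Green's function to make sense of the limit's strict positivity on $\overline{M\setminus N}$. The direct kernel argument the paper (and your main route) uses is cleaner.
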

\begin{proof}
By our assumptions, the operator acting on $v$ in \eqref{eq:boundSubsolution1} is an isomorphism and thus the first inequality \eqref{eq:boundSubsolution2} holds with the left side replaced by the $W^{2,p}$ norm. By Sobolev embedding, $W^{2,p} \subset L^\infty$ (since $p>n$), and so we get the inequality.

The conditions of Theorem \ref{GreensExistence} are fulfilled, and so let $G(x,y)$ be the Green's function for the operator in \eqref{eq:boundSubsolution1}. Then, since $f,g,h\geq 0$,
\begin{align*}
v(x) &= \int_M f G + \int_\MN g G - \int_\MD h \partial_\nu G\\
&\geq \inf_{M\setminus N}G \int_{M\setminus N} f + \inf_\MN G \int_\MN g +\inf_\MD |\partial_\nu G| \int_\MD h.
\end{align*} The first infimum exists and is nonzero because $G$ is positive away from the boundary. The other infima exist and are nonzero by part (f) of Theorem \ref{GreensExistence}.
\end{proof}

We now proceed to prove our main existence theorems.

\begin{thm}\label{noSubsolutionNeeded}
Let $\phi_+ \in W^{2,p}$ be a global supersolution. Assume that, perhaps after a conformal transformation, $a_R +a_\tau \geq 0$ and $b_H +b_\tau \geq 0$. Assume either that one of those inequalities is strict or that $\MD \neq \emptyset$. Also, assume that either $\sigma \not\equiv 0$, $b_w + b_\theta^- \not \equiv 0$ (where $b_\theta^- = \min\{0, b_\theta\}$) or $\MD \neq \emptyset$. Then there exists $\phi\in W^{2,p}$ and $W\in W^{2,p}$ with $0<\phi\leq \phi_+$ of the combined conformal system \eqref{lichSystem}-\eqref{vectorSystem}.
\end{thm}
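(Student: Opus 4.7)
The plan is to verify the hypotheses of Proposition \ref{combinedExistence} (using Corollary \ref{combinedExistenceCor} if a conformal change is needed to arrange $a_R+a_\tau\ge 0$ and $b_H+b_\tau\ge 0$). Since a global supersolution $\phi_+$ is provided, I need only (a) construct, for each admissible $W$, a subsolution $\phi_-\le\phi_+$ of \eqref{lichSystem}, and (b) exhibit a uniform positive lower bound $K$ on every solution of the Lichnerowicz problem coming from an admissible $W$. The central tool is the linear operator
\[
L_0 u = -\Delta u + (a_R+a_\tau)u,\quad \partial_\nu u + (b_H+b_\tau+b_\theta^+)u\ \text{on }\MN,\quad \gamma_D u\ \text{on }\MD,
\]
whose coefficients are nonnegative by the sign hypotheses and the sign convention $(e-1)b_\theta\ge 0$; it satisfies the invertibility/Green's function conclusions of Lemma \ref{boundedSolution} since the theorem requires that one of $a_R+a_\tau\not\equiv 0$, $b_H+b_\tau\not\equiv 0$, or $\MD\neq\emptyset$ holds.

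For (a): given admissible $W$, let $v$ solve $L_0 v = a_w$ in $M$, with boundary data $-b_w-b_\theta^-$ on $\MN$ and $\phi_D$ on $\MD$. All three right-hand sides are nonnegative, and the hypothesis "$\sigma\not\equiv 0$ or $b_w+b_\theta^-\not\equiv 0$ or $\MD\neq\emptyset$" will guarantee that at least one of them is not identically zero for the given $W$. Lemma \ref{boundedSolution} then yields $v\in W^{2,p}$ with a uniform positive lower bound $c>0$ on all of $M$. Set $\phi_- = \epsilon v$ for $\epsilon>0$ small; substituting the equation for $v$ into $F(\epsilon v)$ gives
\[
F(\epsilon v) = a_w\bigl(\epsilon-\epsilon^{-N-1}v^{-N-1}\bigr) - \epsilon a_\tau v\bigl(1-\epsilon^{N-2}v^{N-2}\bigr),
\]
which is pointwise $\le 0$ for small $\epsilon$ since both parenthesised factors have a definite sign once $v$ is bounded above and below. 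The analogous computation on $\MN$, after splitting $b_\theta\phi^e = b_\theta^+\phi^e+b_\theta^-\phi^e$, regroups to
\[
b_w(\epsilon^{-N/2}v^{-N/2}-\epsilon) + b_\tau v(\epsilon^{N/2}v^{N/2-1}-\epsilon) + b_\theta^+\epsilon v(\epsilon^{e-1}v^{e-1}-1) + b_\theta^-(\epsilon^e v^e-\epsilon),
\]
in which each term is nonpositive for small $\epsilon$ (using $b_w\le 0$, $b_\tau\ge 0$ with $N/2-1>0$, the sign convention $b_\theta^+\neq 0\Rightarrow e>1$, and $b_\theta^-\neq 0\Rightarrow e<1$). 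Choosing $\epsilon$ also small enough that $\epsilon v\le \phi_+$ pointwise (possible since $v,\phi_+\in W^{2,p}\hookrightarrow C^0$) completes the subsolution, and the Dirichlet condition $\gamma_D\phi_-=\epsilon\phi_D\le\phi_D$ holds automatically.

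For (b): any solution $\phi$ of the Lichnerowicz problem with admissible $W$ satisfies $\phi\le\phi_+$, so $a_\tau\phi^{N-1}\le a_\tau\|\phi_+\|_\infty^{N-2}\phi$, and hence
\[
-\Delta\phi + \bigl(a_R+a_\tau\|\phi_+\|_\infty^{N-2}\bigr)\phi \ge a_w\|\phi_+\|_\infty^{-N-1}\quad\text{in }M,
\]
with an analogous linear inequality on $\MN$; all coefficients and the right-hand data are uniformly bounded in $W$ by Theorem \ref{schauderEstimateTheorem}. Applying the Green's function of the corresponding (invertible, nonnegative-coefficient) linear operator, together with the uniform lower bounds on the data already used in (a), yields $\phi\ge K$ for a constant $K>0$ independent of $W$. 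Proposition \ref{combinedExistence} then delivers a solution $(\phi,W)$ of the combined system.

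The step I expect to be most delicate is the claim inside (a) that the hypothesis "$\sigma\not\equiv 0$" alone forces $a_w = \frac{n-2}{4(n-1)}|\sigma+LW|^2\not\equiv 0$ for every admissible $W$ (it is the other two cases of the hypothesis that are immediate). This requires ruling out the coincidence $LW\equiv -\sigma$ using $\di\sigma=0$, the tangency $\sigma(\nu,\cdot)=0$ on $\WMN$, the no-conformal-Killing-field assumption in Assumption \ref{assumptions}, and the prescribed form $\di LW = \sum c_i\phi^{k_i}$ of the source. Once this structural argument is in place, the rest is the small-$\epsilon$ scaling and a Green's function comparison of the same flavour as in \cite{Maxwell09}.
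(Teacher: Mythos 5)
Your proposal follows the same structural path as the paper — reduce to Corollary \ref{combinedExistenceCor}, build a subsolution by scaling the solution $v$ of the linear problem with operator $L_0$ and data $(a_w,\,-b_w-b_\theta^-,\,\phi_D)$, and invoke Lemma \ref{boundedSolution} for uniform two-sided bounds on $v$. Your explicit small-$\epsilon$ calculations for $F_1(\epsilon v)$ and $F_2(\epsilon v)$, and the sign bookkeeping for $b_\theta^\pm$, are correct. Your step (b) is a slightly different route than the paper's (the paper simply observes that, once $\beta v$ is a uniformly lower-bounded subsolution, the unique solution between $\beta v$ and $\phi_+$ automatically inherits the bound $K=\beta\inf v$), but your direct comparison argument via a linear inequality on $\phi$ is also sound, provided the same quantitative input on $\int_{M\setminus N}a_w$ is available.

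That quantitative input is exactly where the gap is. You correctly identify the $\sigma\not\equiv 0$ case as the delicate one, but the remedy you sketch — ruling out the coincidence $LW\equiv -\sigma$ using $\di\sigma=0$, the no-CKF assumption, and the source form $\di LW=\sum c_i\phi^{k_i}$ — only yields $a_w\not\equiv 0$ \emph{for each fixed admissible} $W$. Lemma \ref{boundedSolution} gives $\inf v\ge c_2\int_{M\setminus N}a_w$ with a fixed $c_2$, so you need a lower bound on $\int_{M\setminus N}a_w$ that is \emph{uniform over all admissible} $W$. A pointwise non-vanishing statement for each $W$ does not prevent $\int_{M\setminus N}a_w$ from drifting to zero along a sequence of admissible $W$'s, which would destroy the uniform lower bound $K$. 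The paper's argument at this point is genuinely quantitative: it integrates by parts to get
\[
\int_{M\setminus N}|\sigma+LW|^2 \;=\; \int_{M\setminus N}\bigl(|\sigma|^2+|LW|^2\bigr)+\int_{\partial(M\setminus N)}2\,\sigma(\nu,W),
\]
using $\di\sigma=0$, and then chooses the collar width $\epsilon$ once and for all so that $\int_{M\setminus N}|\sigma|^2\ge\tfrac12\int_M|\sigma|^2$ and the boundary term is $\ge -\tfrac14\int_M|\sigma|^2$. The latter choice is uniform in $W$ precisely because (i) admissible $W$ are uniformly bounded in $W^{2,p}\hookrightarrow C^1$ by Theorem \ref{schauderEstimateTheorem}, and (ii) $\sigma\cdot\nu=0$ on $\WMN$ together with $\sigma(\nu,X_\D)=0$ on $\WMD$ forces the surface integral to vanish in the limit $\epsilon\to 0$, uniformly over the admissible family. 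This yields $\int_{M\setminus N}a_w\ge C\int_M|\sigma|^2>0$ with $C$ and $N$ fixed, which is the statement your argument actually needs. Incidentally, the orthogonality $\int_M\sigma\cdot LW=0$ coming from the York decomposition already gives $\int_M a_w\ge c_n\int_M|\sigma|^2$ for free; the work is in localising this to $M\setminus N$, and that is what the integration-by-parts/collar-shrinking argument accomplishes. Without this, your proof as written does not establish the required uniform bound.
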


Note that since $b_\tau\geq 0$ by assumption (since we are in the defocusing case), the condition on $a_R + a_\tau$ and $b_H + b_\tau$ is easily fulfilled in the case $g\in Y^+$ or the case $g\in Y^0$ and $\tau \not\equiv 0$. However, Theorem \ref{noSubsolutionNeeded} also allows the possibility of $g\in Y^-$ if $g$ has the right curvatures.

\begin{proof}
Let $\psi$ be the conformal factor from the hypotheses that makes $a_{\widehat R} + a_{\widehat \tau}$ and $\widehat b_H + \widehat b_\tau$ nonnegative. As in Section \ref{sec:LichnerowiczProblem}, a hat represents transformed quantities. We transform the quantities as in that section. By Corollary \ref{combinedExistenceCor}, we only need to come up with a subsolution for $\widehat F$, the transformed Lichnerowicz problem, for each admissible $W$, and then show that this family is bounded below uniformly.

Let $v\in W^{2,p}$ be a solution to
\begin{equation}\label{eq:noSubsolution1}
\begin{array}{cl} -\Delta_{\widehat g} v + (a_{\widehat{R}}+ a_{\widehat\tau}) v = a_{\widehat{w}}& \textrm{   on } M \\
         \partial_{\widehat\nu} v + (\widehat b_H + \widehat b_\tau +\widehat b_\theta^+) v = - \widehat b_{w} - \widehat b_\theta^- & \textrm{   on } \MN\\
         v = \widehat\phi_D & \textrm{   on } \MD \\ \end{array}
\end{equation} where the traces are assumed, if necessary, and where $\widehat b_\theta^+ = \max\{0, \widehat b_\theta\}$ and $\widehat b_\theta^- = \min\{0, \widehat b_\theta\}$. We note that since the sign of $b_\theta$ is constant on each component of the boundary, only one of $b_\theta^+$ and $b_\theta^-$ will be nonzero. By \cite[Lem B.7,8]{HT13}, such a positive solution exists. In \cite[Thm 6.1]{HT13}, it was shown that $\beta v$ is a subsolution for $\widehat F$ for $\beta$ sufficiently small. Thus $\beta \psi v$ is a subsolution of the original $F$ by conformal covariance.

The factor $\psi>0$ was independent of $W$, so it is automatically bounded. The size of $\beta$ depended only on the max and min of $v$. Thus to show that $\beta v$ has a lower bound for all admissible $W$, we need only show that $v$ is bounded both above and below independent of $W$.

Our choice of differential operator \eqref{eq:noSubsolution1} on $v$ fulfills the requirements for Lemma \ref{boundedSolution}. Thus
\[
\sup(v) \leq C(\|a_{\widehat w}\|_{p} + \|\widehat b_w + \widehat b_\theta^-\|_{W^{1-1/p,p}(\MN)} + \|\widehat \phi_D\|_{W^{2-1/p,p}(\MD)}).
\] The last two terms are bounded above since they are independent of $W$. For the first term, we calculate
\[
\int_M |a_{\widehat w}|^p \leq C \int_M|\sigma+LW|^{2p} \leq C \int_M |\sigma|^{2p} + |LW|^{2p}.
\] We dropped the hat since the conformal factor $\psi$ has an (uniform) upper bound. We need to bound $|LW|^{2p}$ above for any $W$ that is a solution of the vector problem \eqref{vectorSystem} for some $\phi \in (0,\phi_+]_{2,p}$. The Sobolev embedding $\|LW\|_{\infty} \leq \|LW\|_{1,p}$ combined with Theorem \ref{schauderEstimateTheorem} bounds $|LW|^{2p}$. Thus $v$ has a uniform upper bound.

For the lower bound, by Lemma \ref{boundedSolution},
\[
\inf(v) \geq c_2 \left(\|a_{\widehat w}\|_{L^1(M\setminus N)} + \int_{\MN}(-\widehat b_w - \widehat b_\theta^-)  + \int_{\MD} \widehat \phi_D \right)
\] where $N$ is a neighborhood of the boundary and $c_2$ depends on $N$. If $\MD \neq \emptyset$ or if $\widehat b_w + \widehat b_\theta \not\equiv 0$, this clearly has a uniform lower bound since we can drop the $a_{\widehat w}$ term. We assume otherwise, and thus assume that $\sigma \not\equiv 0$.

We then need to show that $c_2 \int_{M\setminus N} a_w$ has a uniform lower bound. We dropped the hat since $\psi$ has a (uniform) lower bound. Let $N$ be an $\epsilon$ wide neighborhood of $\partial M$. We then let $\epsilon$ be sufficiently small such that
\[
\int_{M\setminus N} |\sigma|^2 \geq \frac{1}{2} \int_M |\sigma|^2.
\] Such an $\epsilon$ must exist or else $\sigma$ would be zero on $M$. We also make $\epsilon$ small enough such that
\[
\int_{\partial (M\setminus N)} \sigma(\nu, W) \geq - \frac{1}{4} \int_M |\sigma|^2.
\] Such an $\epsilon$ must exist since $\sigma\in C^0$, $\sigma(\nu, X_\D) = 0$ on $\WMD$ and $\sigma\cdot \nu = 0$ on $\WMN$, and so the integral on the left goes to zero as $\epsilon \to 0$.

We then have
\begin{align*}
\int_{M\setminus N} a_{ w} &\geq C \int_{M\setminus N} |\sigma +LW|^2\\
 &= C\left(\int_{M\setminus N} (|\sigma|^2 +|LW|^2) - 2\int_{M\setminus N} \di \sigma \cdot W + \int_{\partial(M\setminus N)} \sigma(\nu,W)\right)\\
 &\geq C \int_M |\sigma|^2
\end{align*} and so $v$ has a uniform lower bound. This completes the theorem.

\end{proof}

\begin{thm}\label{noSubsolutionNeededNegative}
Let $\phi_+ \in W^{2,p}$ be a global supersolution. In addition, suppose $b_H \leq \frac{n-2}{2}H$ and $g\in Y^-$. Suppose that there exists a positive solution $u \in W^{s,p}$ of the following problem:
\begin{equation}\label{negSubsolution}
\begin{array}{rlc} -\Delta u + a_Ru +a_\tau u^{N-1}  &\hspace{-2.4mm}= 0\\
\gamma_N\partial_\nu u + b_h u + b_\tau u^{N/2} + b_\theta^+ u^e &\hspace{-2.4mm}= 0 &  \textrm{ on } \MN \\
\gamma_D u &\hspace{-2.4mm}= 1 &  \textrm{ on } \MD. \end{array}
\end{equation} Then there exists $\phi \in W^{2,p}$ and $W\in W^{2,p}$ with $0<\phi\leq \phi_+$ of the combined conformal system \eqref{lichSystem}-\eqref{vectorSystem}.
\end{thm}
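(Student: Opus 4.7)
The plan is to mimic the strategy of Theorem \ref{noSubsolutionNeeded}: use the hypothesized $u$ to manufacture a single $W$-independent subsolution and then invoke Proposition \ref{combinedExistence}. Concretely, I would set $\phi_- := \epsilon u$ for a sufficiently small $\epsilon \in (0,1)$. Because $u$ does not depend on $W$, $\phi_-$ is simultaneously a candidate subsolution for every admissible $W$, and $K = \epsilon \inf_M u > 0$ supplies the required uniform lower bound automatically. One fixes $\epsilon$ small enough that $\epsilon u \leq \phi_+$ (possible since both are continuous and $\phi_+$ is bounded below away from $0$) and $\epsilon \leq \min_{\MD} \phi_D$.

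For the interior inequality, I would substitute the first line of \eqref{negSubsolution} in the form $-\Delta u + a_R u = -a_\tau u^{N-1}$ to obtain
\[
F_1(\epsilon u) = a_\tau u^{N-1}(\epsilon^{N-1} - \epsilon) - \epsilon^{-N-1} a_w u^{-N-1},
\]
which is $\leq 0$ for small $\epsilon$ since $a_\tau, a_w \geq 0$ and $N - 1 > 1$, independently of $W$.

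For the Neumann inequality on $\MN$, I read the coefficient $b_h$ appearing in \eqref{negSubsolution} as $\tfrac{n-2}{2}H$, the conformally natural value. Substituting $\partial_\nu u = -\tfrac{n-2}{2}H u - b_\tau u^{N/2} - b_\theta^+ u^e$ into $F_2(\epsilon u)$ yields
\[
F_2(\epsilon u) = \epsilon\bigl(b_H - \tfrac{n-2}{2}H\bigr) u + (\epsilon^{N/2} - \epsilon) b_\tau u^{N/2} + (\epsilon^e b_\theta - \epsilon b_\theta^+) u^e + \epsilon^{-N/2} b_w u^{-N/2}.
\]
The first term is $\leq 0$ by the hypothesis $b_H \leq \tfrac{n-2}{2}H$; the second by $b_\tau \geq 0$ and $\epsilon^{N/2} < \epsilon$; the fourth by $b_w \leq 0$. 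For the third, a short case split on the constant sign of $b_\theta$ using $(e-1)b_\theta \geq 0$ with $e \neq 1$ yields non-positivity (when $b_\theta > 0$ one has $e > 1$ so $\epsilon^e b_\theta < \epsilon b_\theta^+$; when $b_\theta < 0$ one has $b_\theta^+ = 0$ and $\epsilon^e b_\theta < 0$). The Dirichlet condition $\gamma_D(\epsilon u) - \phi_D = \epsilon - \phi_D \leq 0$ holds by the choice of $\epsilon$.

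With $\phi_-$ verified as a subsolution for every admissible $W$ and bounded uniformly below, Proposition \ref{combinedExistence} delivers the desired pair $(\phi, W)$. The delicate step is the Neumann verification: the hypothesis $b_H \leq \tfrac{n-2}{2}H$ is precisely what is needed to absorb the residual linear boundary term arising from the mismatch between $b_H$ and the conformally natural coefficient $\tfrac{n-2}{2}H$ implicit in \eqref{negSubsolution}. In contrast to Theorem \ref{noSubsolutionNeeded}, no conformal transformation enters the argument directly; the Yamabe-negative assumption $g \in Y^-$ enters only implicitly, through the existence of $u$.
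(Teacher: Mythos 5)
Your proposal is correct and follows the same route as the paper: take $\phi_- = \beta u$ as a single $W$-independent subsolution, observe that the uniform lower bound $K = \beta \inf u$ is automatic, and invoke Proposition \ref{combinedExistence}. The paper simply cites the subsolution verification to the proof of Theorem 6.2 of Holst--Tsogtgerel rather than spelling it out as you do; also note that if $b_h$ in \eqref{negSubsolution} is read as the coefficient $b_H$ itself (rather than the conformally natural value $\tfrac{n-2}{2}H$ you substitute), the residual linear boundary term cancels identically and the same conclusion holds without needing $b_H \leq \tfrac{n-2}{2}H$ at that step.
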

\begin{proof}
Note that $u$ does not depend on $W$ or $\phi$. According to the proof of \cite[Thm 6.2]{HT13}, $\beta u$ is a subsolution for small enough $\beta$, and it is easy to see that the $\beta$ does not depend on $W$ or $\phi$. The uniform lower bound on solutions is then $K= \beta \inf u$.
\end{proof}

If $b_h \leq \frac{n-2}{2} H$ and $g\in Y^-$, it was shown in \cite[Thm 6.2]{HT13} that system \eqref{negSubsolution} has a solution if and only if the Lichnerowicz problem \eqref{lichSystem} has a solution. Unfortunately, it is unclear when system \eqref{negSubsolution} has a solution. However, Holst and Tsogtgerel have proven the following partial result.

\begin{lem}\cite[Lem 6.3]{HT13}
Let $g\in Y^-$ and suppose $b_H \leq \frac{n-2}{2}H$. Moreover, assume that there is a constant $c>0$ such that $a_\tau \geq c$ and $b_\tau + b_\theta \geq c $ pointwise almost everywhere. Then there exists a positive solution $u\in W^{2,p}$ to the system \eqref{negSubsolution}.
\end{lem}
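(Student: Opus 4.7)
The plan is to apply Theorem \ref{lichExistence} to \eqref{negSubsolution} with a constant supersolution and a subsolution proportional to a Yamabe conformal factor. Since $g\in Y^-$, Theorem \ref{YamabeClassification} yields a positive $\psi \in W^{2,p}$ such that $\hat g = \psi^{N-2}g$ has continuous, strictly negative scalar curvature $\hat R$ and continuous, strictly negative boundary mean curvature $\hat H$. In particular, there exist constants $c_1, c_2 > 0$ with $\hat a_R \leq -c_1$ on $M$ and $\hat H \leq -c_2$ on $\partial M$.

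For the supersolution I would take $u_+ = C$ constant with $C$ large. The interior inequality $a_R C + a_\tau C^{N-1} \geq 0$ follows from $a_\tau \geq c > 0$ and $N-1 > 1$. On $\MN$, the hypothesis $b_\tau + b_\theta \geq c$ together with $b_\tau \geq 0$ and $(e-1)b_\theta \geq 0$ implies that at each point at least one of $b_\tau \geq c/2$ or $b_\theta^+ \geq c/2$ holds (the latter only when $e > 1$); hence the superlinear term $b_\tau C^{N/2} + b_\theta^+ C^e$ dominates $|b_H| C$ for $C$ large, and the Dirichlet condition $C \geq 1$ is immediate.

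For the subsolution I would take $u_- = \delta \psi$ with $\delta > 0$ small. The conformal identities $-\Delta \psi + a_R \psi = \hat a_R \psi^{N-1}$ and $\partial_\nu \psi + \tfrac{n-2}{2} H \psi = \tfrac{n-2}{2} \hat H \psi^{N/2}$, combined with the hypothesis $b_H \leq \tfrac{n-2}{2} H$, yield
\[
-\Delta u_- + a_R u_- + a_\tau u_-^{N-1} = \delta \psi^{N-1}(\hat a_R + a_\tau \delta^{N-2})
\]
and
\[
\partial_\nu u_- + b_H u_- + b_\tau u_-^{N/2} + b_\theta^+ u_-^e \leq -\frac{(n-2)c_2}{2}\,\delta \psi^{N/2} + b_\tau \delta^{N/2}\psi^{N/2} + b_\theta^+ \delta^e \psi^e.
\]
Since $\hat a_R \leq -c_1 < 0$ and $N/2 > 1$ (with $e > 1$ whenever $b_\theta^+ \not\equiv 0$), both quantities are $\leq 0$ for $\delta$ sufficiently small; the Dirichlet condition $\delta \psi \leq 1$ on $\MD$ also holds for $\delta$ small. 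Choosing $C$ large and $\delta$ small so that $0 < u_- \leq u_+$, Theorem \ref{lichExistence} then produces a positive solution $u \in [u_-, u_+]_{2,p} \subset W^{2,p}$ of \eqref{negSubsolution}.

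The main obstacle will be the boundary subsolution inequality, which requires exploiting $b_H \leq \tfrac{n-2}{2} H$ to absorb the contribution of $\partial_\nu \psi$ and inherit a strictly negative linear coefficient from $\hat H < 0$ that beats the nonlinear positive terms. Without the Yamabe reduction to $\hat R, \hat H < 0$, the constant-times-$\psi$ ansatz for $u_-$ would fail at points where the original $a_R$ or $b_H$ is nonnegative, so the use of Theorem \ref{YamabeClassification} is essential.
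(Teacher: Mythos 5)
The paper does not prove this lemma itself; it cites it from \cite{HT13}. So the evaluation below is of the internal correctness of your argument rather than a comparison with a proof in this paper.

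Your overall strategy (apply Theorem \ref{lichExistence} with a Yamabe conformal factor $\psi$ from Theorem \ref{YamabeClassification}, a subsolution $u_- = \delta\psi$ exploiting the identities $-\Delta\psi + a_R\psi = \hat a_R\psi^{N-1}$ and $\partial_\nu\psi + \tfrac{n-2}{2}H\psi = \tfrac{n-2}{2}\hat H\psi^{N/2}$ together with $b_H \leq \tfrac{n-2}{2}H$) is sound, and the subsolution computation is correct, including the observation that $b_\theta^+ \not\equiv 0$ forces $e>1$ and that $b_\tau + b_\theta \geq c$ combined with $b_\tau \geq 0$ yields $b_\tau + b_\theta^+ \geq c$.

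The gap is in the supersolution. You take $u_+ = C$ constant and assert that the interior inequality $a_R C + a_\tau C^{N-1} \geq 0$ follows from $a_\tau \geq c > 0$ and $N-1>1$. But that inequality, rewritten as $a_R + a_\tau C^{N-2} \geq 0$ a.e., requires $a_R$ to be essentially bounded below, and under the Standard Assumptions $a_R$ is only in $L^p$ (since $g\in W^{2,p}$, $R\in L^p$), so it may be unbounded below and the inequality can fail on a set of positive measure for every finite $C$. The natural repair is to use the same conformal factor you already introduced: take $u_+ = C\psi$. Then
\[
-\Delta u_+ + a_R u_+ + a_\tau u_+^{N-1} = C\psi^{N-1}\bigl(\hat a_R + a_\tau C^{N-2}\bigr),
\]
and since $\hat a_R$ is continuous (hence bounded) on the compact $M$ and $a_\tau \geq c$, the right side is nonnegative once $C^{N-2} \geq \|\hat a_R\|_\infty / c$. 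On $\MN$ the lower-order term is $C(\partial_\nu\psi + b_H\psi)$ whose coefficient lies in $L^\infty(\MN)$ (because $\psi\in W^{2,p}$ with $p>n$ and $b_H\in W^{1-1/p,p}(\MN)\subset L^\infty$), so the superlinear terms $b_\tau(C\psi)^{N/2} + b_\theta^+(C\psi)^e$ with $b_\tau + b_\theta^+ \geq c$ and exponents $>1$ dominate for $C$ large, exactly as in your constant-$C$ argument; the Dirichlet condition $C\psi \geq 1$ on $\MD$ also holds for $C$ large since $\psi$ has a positive minimum. With that change, $0 < u_- = \delta\psi \leq C\psi = u_+$ for $\delta$ small and $C$ large, and Theorem \ref{lichExistence} gives the solution.
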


\section{Supersolutions}

Theorems \ref{noSubsolutionNeeded} and \ref{noSubsolutionNeededNegative} reduce the problem of finding solutions to the full conformal constraint equations \eqref{lichSystem}-\eqref{vectorSystem} to that of finding global supersolutions. In this section we find several global supersolutions, which are analogous to those found in \cite{HNT09}. Remember that for every supersolution that we find, we then have a solution to the full constraints as long as the Assumptions \ref{assumptions} are fulfilled. Also, though we only consider the vacuum case, these supersolutions are easily adaptable to the scaled energy case, as in \cite{HNT09}.

Let a superscript $\wedge$ will mean the supremum of the function on the appropriate domain, while a superscript $\vee$ will similarly be the infimum. In this section, we'll assume $X = \frac{n-1}{n} d\tau \phi^N$. In this case, using $\|LW\|_\infty \leq C \|LW\|_{1,p}$ and Theorem \ref{schauderEstimateTheorem} we get that
\begin{align}
\|LW\|^2_{\infty} &\leq C\left(\|X\|_{p}^2 + \|X_\N\|^2_{W^{1-\frac{1}{p},p}(\WMN)} + \|X_\D\|^2_{W^{2-\frac{1}{p},p}(\WMD)}\right)\notag\\
&\leq C_1 \|d\tau\|_p^2 (\phi^\wedge)^{2N} +C_2 \label{eq:LWboundSquared}
\end{align} since $X_\N$ and $X_\D$ do not depend on $\phi$.

\begin{thm}[$g\in Y^+$, far-from-CMC] \label{FarFromCMC} Suppose that $g\in Y^+$ and that $b_H \geq \frac{n-2}{2} H$. Suppose that
\begin{equation}\label{eq:quantities}
\begin{cases} \|d\tau\|_p\\
        \|\sigma\|_\infty + \|X_\N \|^2_{W^{1-\frac{1}{p},p}(\WMN)} + \|X_\D\|^2_{W^{2-\frac{1}{p},p}(\WMD)}\\
        \|b_w\|_\infty \\
        \|\phi_D\|_\infty \\
        \|b_\theta^-\|_\infty \end{cases}
\end{equation} are sufficiently small, except perhaps one. Then there exists a global supersolution.
\end{thm}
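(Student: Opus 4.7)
The plan is to construct a constant global supersolution in a conformally transformed frame and then invoke Corollary \ref{combinedExistenceCor} to transfer it back. Since $g \in Y^+$, Theorem \ref{YamabeClassification} provides a conformal factor $\psi$ so that $\hat g = \psi^{N-2}g$ has $\hat R$ continuous and strictly positive, with $\hat H$ chosen to have a convenient sign on each boundary component (for concreteness, $\hat H \equiv 0$). In particular $a_{\hat R} \geq \alpha > 0$ for some constant $\alpha$, and, since $\psi$ depends only on the seed metric, norms of all hatted quantities are uniformly comparable to the unhatted ones. Because $b_H \geq \tfrac{n-2}{2}H$ in the original frame, the transformation rule derived from \eqref{eq:boundaryConformalCovariance} yields $\hat b_H \geq \tfrac{n-2}{2}\hat H \geq 0$ on $\partial M_N$.

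I would then try $\hat\phi_+ \equiv c$ for a constant $c > 0$ to be chosen. Since $\partial_{\hat\nu} c = 0$ and $a_{\hat\tau} \geq 0$, the three supersolution conditions of \eqref{lichSystem} (in the hat frame) reduce to
\begin{equation*}
\alpha\, c \geq a_{\hat w}\, c^{-N-1}, \qquad \hat b_H c + \hat b_\theta c^e + \hat b_\tau c^{N/2} + \hat b_w c^{-N/2} \geq 0, \qquad c \geq \hat\phi_D.
\end{equation*}
For the interior inequality, the assumption $X = \tfrac{n-1}{n}d\tau\,\phi^{N}$ together with \eqref{eq:LWboundSquared} applied to any admissible $W$ (i.e., any $W = W_\phi$ with $\phi \in (0,c]_{2,p}$, which gives $\phi^\wedge \leq c$) yields
\begin{equation*}
\|a_{\hat w}\|_\infty \leq C\bigl(\|\sigma\|_\infty^2 + \|X_\N\|^2 + \|X_\D\|^2\bigr) + C\,\|d\tau\|_p^2\, c^{2N},
\end{equation*}
so the interior condition becomes the polynomial inequality
\begin{equation*}
P(c) := \alpha\, c^{N+2} - A\, c^{2N} - B \geq 0, \qquad A \sim \|d\tau\|_p^2, \quad B \sim \|\sigma\|_\infty^2 + \|X_\N\|^2 + \|X_\D\|^2.
\end{equation*}
A standard computation shows $P$ attains its maximum at $c_\star = \bigl(\tfrac{(N+2)\alpha}{2N A}\bigr)^{1/(N-2)}$, with $P(c_\star) > 0$ iff a weighted product $A^{a}B^{b}$ (with $a,b > 0$ determined by the exponents $N+2$ and $2N$) is sufficiently small relative to $\alpha$. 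This inequality tolerates one of $A$ or $B$ being arbitrarily large provided the other is correspondingly small, which is exactly the ``all but one small'' clause.

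For the Neumann inequality, at a constant $c$ only $\hat b_w c^{-N/2}$ and $\hat b_\theta^- c^e$ (the latter only when $e < 1$, by the sign hypothesis $(e-1)b_\theta \geq 0$) can be negative, while $\hat b_H \geq 0$ and $\hat b_\tau \geq 0$ contribute nonnegatively. Smallness of $\|b_w\|_\infty$ and $\|b_\theta^-\|_\infty$ then makes the boundary expression nonnegative for $c$ in a range around $c_\star$; once again the polynomial structure allows one of them to be large provided the others compensate. The Dirichlet inequality $c \geq \|\hat\phi_D\|_\infty$ is handled by the smallness of $\|\phi_D\|_\infty$, with the same one-large-allowed flexibility since $c_\star$ can be enlarged by shrinking $A$. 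The main obstacle is the bookkeeping: one must verify that the \emph{same} $c$ can simultaneously satisfy the interior polynomial inequality, the boundary sign conditions, and the Dirichlet bound, and that the admissible range for $c$ remains nonempty when any single one of the five quantities in \eqref{eq:quantities} is allowed to be unbounded. This is a careful but standard argument once one expresses each inequality in terms of an allowed interval of $c$ and tracks how each interval depends on the parameters.
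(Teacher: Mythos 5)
Your approach is essentially the same as the paper's, just phrased differently: where you perform a conformal transformation to a hat frame with $\hat R$ positive and continuous and then take $\hat\phi_+$ constant, the paper takes $\phi_+ = \beta u$ in the original frame where $u$ is the Yamabe conformal factor and $\Lambda_1 := -\Delta u + a_R u > 0$, $\Lambda_2 := \partial_\nu u + \tfrac{n-2}{2}Hu > 0$. By conformal covariance these are the same supersolution, and the resulting polynomial inequality in $c$ (equivalently $\beta$) has the identical structure $\Lambda_1^\vee \beta^{N+2} - A\beta^{2N} - B \geq 0$, with your discussion of the maximum at $c_\star$ and the resulting trade-off between $A$ and $B$ matching the paper's ``take $\beta$ small if $\|d\tau\|_p$ is large, large otherwise'' argument. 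Your reduction of the Neumann and Dirichlet conditions to separate intervals of allowable $c$, with a final intersection check, is a valid (if slightly more elaborate) way to organize the same bookkeeping.

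There is one concrete gap. You choose $\hat H \equiv 0$ ``for concreteness,'' which gives only $\hat b_H \geq \tfrac{n-2}{2}\hat H = 0$. Then the Neumann supersolution inequality at the constant $c$ reads $\hat b_H c + \hat b_\theta c^e + \hat b_\tau c^{N/2} + \hat b_w c^{-N/2} \geq 0$, and there is no positive term guaranteed to dominate the nonpositive contributions of $\hat b_w c^{-N/2}$ and $\hat b_\theta^- c^e$: you could have $\hat b_H \equiv 0$, $\hat b_\tau \equiv 0$, $\hat b_\theta^+ \equiv 0$, and a small but nonzero $\hat b_w < 0$, in which case the inequality fails for every $c > 0$ even though the theorem's hypotheses (small $\|b_w\|_\infty$) are satisfied. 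The paper avoids this by choosing the Yamabe conformal factor to make the boundary mean curvature \emph{strictly positive}, which by Theorem~\ref{YamabeClassification} is permitted for $g \in Y^+$; this produces a strictly positive $\Lambda_2^\vee$ term (equivalently in your frame $\hat b_H \geq \tfrac{n-2}{2}\hat H \geq \alpha' > 0$ uniformly on $\partial M_N$), which is what absorbs the small negative boundary terms. The fix to your argument is simply to take $\hat H > 0$ rather than $\hat H \equiv 0$; once that is done the rest of your proposal goes through.
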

\begin{proof}
The Yamabe classification Theorem \ref{YamabeClassification} implies that there exist positive functions $u$, $\Lambda_1$ and $\Lambda_2$ such that
\[
\begin{array}{rlc} -\Delta u + a_Ru  &\hspace{-2.4mm}= \Lambda_1\\
\gamma_N\partial_\nu u + \frac{n-2}{2} H u &\hspace{-2.4mm}= \Lambda_2 &  \textrm{ on } \partial M. \end{array}
\] Indeed, $u$ is a conformal factor provided by that theorem that takes $g$ to a metric with positive scalar curvature and positive boundary mean curvature.

Let $\phi_+ = \beta u$. We will set up three expressions that all need to be positive for $\phi_+$ to be a global supersolution. We will then explain why we can pick a $\beta$ to make them all positive. We assume $W$ is admissible for $\phi_+$.

Note that $-\Delta \phi_+ + a_R \phi_+ = \beta \Lambda_1$. We then see that
\begin{align*}
-\Delta \phi_+ &+ a_R\phi_+ + a_\tau \phi_+^{N-1} - a_w \phi_+^{-N-1} \\ &\geq \beta \Lambda_1 + a_\tau (\beta u)^{N-1} - \frac{n-2}{2(n-1)} \left(|\sigma|^2 +|LW|^2\right)(\beta u)^{-N-1} \\
&\geq \beta \Lambda_1 + \left(a_\tau - c_n C_1 \|d\tau\|_p^2 b^{2N}\right) (\beta u)^{N-1} - c_n(|\sigma|^2+C_2)(\beta u)^{-N-1}
\end{align*} where $b = \phi_+^\wedge/\phi_+^\vee = u^\wedge /u^\vee$ and $c_n = \frac{n-2}{2(n-1)}$. We used \eqref{eq:LWboundSquared} for the last line. Thus, for $\phi_+$ to be a supersolution, we need
\begin{equation}\label{eq:FirstSupsolnIneq}
\Lambda_1^\vee - c_n C_1 \|d\tau\|_p^2 b^{2N} \beta^{N-2} (u^\wedge)^{N-1} - c_n((|\sigma|^\wedge)^2+C_2) \beta^{-N-2} (u^\wedge)^{-N-1}\geq 0.
\end{equation}

For the Neumann boundary condition, we similarly need, after dropping the $b_H - \frac{n-2}{2} H$ term,
\begin{equation}\label{eq:SecondSupsolnIneq}
\Lambda_2^\vee - \|b_\theta^-\|_\infty \beta^{e-1}(u^\wedge)^e  - \|b_w\|_\infty \beta^{-N/2-1}(u^\wedge)^{-N/2} \geq 0.
\end{equation} This is because, by Assumption \ref{assumptions}, $b_w\leq 0$. Note that $b_\theta^- \equiv 0$ unless $e-1 < 0$.

For the Dirichlet boundary condition, we need a simpler condition,
\begin{equation}\label{eq:ThirdSupsolnIneq}
\beta u - \phi_D \geq 0.
\end{equation}

Let $\|d\tau\|_p$ be arbitrary, and take $\beta>0$ sufficiently small so that
\[
\Lambda_1^\vee - c_n C_1 \|d\tau\|_p^2 b^{2N} \beta^{N-2} (u^\wedge)^{N-1} > \frac{1}{2}\Lambda_1^\vee >0.
\] Then, if all quantities besides $\|d\tau\|_p$ in \eqref{eq:quantities} are sufficiently small, the desired inequalities \eqref{eq:FirstSupsolnIneq}, \eqref{eq:SecondSupsolnIneq}, and \eqref{eq:ThirdSupsolnIneq} all hold. This establishes the theorem in the case that $\|d\tau\|_p$ is large; the corresponding proof when any other single quantity from \eqref{eq:quantities} is arbitrary is similar, except that we take $\beta$ to be large instead of small.
\end{proof}

This result can be viewed as a far-from-CMC result in two ways. The first is that if we let $\|d\tau\|_p$ be arbitrary, we can clearly construct far-from-CMC solutions to the constraints. On the other hand, the near-CMC assumption is usually of the form $\|d\tau\|_p/|\tau|^\vee$ is sufficiently small. In this theorem, if we take another of the quantities from \eqref{eq:quantities} large, $\|d\tau\|_p$ must be small, but this bound is not dependent on $\tau$. Since $g\in Y^+$, we can make $R>0$, and so $a_R +a_\tau >0$, as required for Theorem \ref{noSubsolutionNeeded}, for any $\tau$. Thus, even if $|\tau|^\vee$ is small or even zero, we can still construct solutions to the conformal constraint equations as long as $d\tau$ is sufficiently small.

The problematic term in the proof of Theorem \ref{FarFromCMC} is the $\|d\tau\|_p$ term. For the rest of the terms, larger $\beta$ makes the desired inequalities \eqref{eq:FirstSupsolnIneq}-\eqref{eq:ThirdSupsolnIneq} more likely to be true. In the proof of the previous theorem we dropped the $a_\tau$ term. If $d\tau$ is sufficiently small, we can use $a_\tau$ to control the $\|d\tau\|_p$ term.

\begin{thm}[$g\in Y^+$, near-CMC]
Suppose that $g\in Y^+$ and that $b_H \geq \frac{n-2}{2} H$. Suppose that $\|d\tau\|_p/ |\tau|^\vee$ is sufficiently small. Then there exists a global supersolution.
\end{thm}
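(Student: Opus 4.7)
The plan is to mimic the construction in Theorem 5.1 (the far-from-CMC case): set $\phi_+ = \beta u$, where $u>0$ is the Yamabe conformal factor from Theorem 2.2 satisfying $-\Delta u + a_R u = \Lambda_1$ in $M$ and $\gamma_N\partial_\nu u + \frac{n-2}{2}Hu = \Lambda_2$ on $\partial M$ with $\Lambda_1^\vee, \Lambda_2^\vee > 0$. Assume $W$ is admissible for $\phi_+$, so that the bound \eqref{eq:LWboundSquared} applies with $\phi^\wedge \leq \beta u^\wedge$. The key difference from Theorem 5.1 is that we will \emph{retain} the $a_\tau \phi^{N-1}$ term in the Lichnerowicz inequality instead of dropping it, and use it to absorb the $\|d\tau\|_p^2$ contribution from $|LW|^2$.

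For the interior inequality, plugging $\phi_+ = \beta u$ into $F_1$ and using $a_\tau \geq \frac{n(n-2)}{4}(|\tau|^\vee)^2$ together with \eqref{eq:LWboundSquared} yields
\begin{align*}
F_1(\phi_+) &\geq \beta \Lambda_1^\vee + \beta^{N-1}\left(\tfrac{n(n-2)}{4}(|\tau|^\vee)^2 (u^\vee)^{N-1} - c_n C_1 \|d\tau\|_p^2 \tfrac{(u^\wedge)^{2N}}{(u^\vee)^{N+1}}\right) \\
&\quad - c_n\bigl((|\sigma|^\wedge)^2 + C_2\bigr)(\beta u^\vee)^{-N-1},
\end{align*}
where $c_n = \frac{n-2}{2(n-1)}$. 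The crucial observation is that the $\beta^{N-1}$-coefficient is nonnegative precisely when $\|d\tau\|_p^2/(|\tau|^\vee)^2$ is smaller than an explicit geometric constant depending only on $u$; this is exactly the near-CMC hypothesis. Under that hypothesis, the $\beta^{N-1}$-term beats the $\beta^{-N-1}$-term for all sufficiently large $\beta$, so the interior inequality holds.

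For the Neumann boundary inequality, the assumption $b_H \geq \frac{n-2}{2}H$ and the signs $(e-1)b_\theta \geq 0$, $b_\tau \geq 0$, $b_w \leq 0$ from Assumption 2.1 give
\[
\gamma_N\partial_\nu\phi_+ + f(\phi_+) \geq \beta\Lambda_2^\vee - \|b_\theta^-\|_\infty (\beta u^\wedge)^e - \|b_w\|_\infty (\beta u^\vee)^{-N/2},
\]
dropping the nonnegative $b_\tau \phi_+^{N/2}$, $b_\theta^+ \phi_+^e$, and $(b_H - \frac{n-2}{2}H)\phi_+$ terms. Since $b_\theta^- \not\equiv 0$ forces $e \leq 1$ (hence $e<1$ by $e\neq 1$) and $-N/2 < 1$, the negative terms are of strictly sublinear order in $\beta$, so the linear term $\beta\Lambda_2^\vee$ dominates for $\beta$ large. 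The Dirichlet condition $\beta u^\vee \geq \phi_D^\wedge$ is trivially satisfied for $\beta$ large. Thus a single choice of $\beta$ sufficiently large makes all three supersolution inequalities hold simultaneously.

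The only step with real content is the sign of the $\beta^{N-1}$-coefficient in the interior inequality; once that is positive, the remaining work is balancing powers of $\beta$ and this is routine since $|LW|^2\phi_+^{-N-1}$ is the unique term with the same $\beta$-scaling as $a_\tau\phi_+^{N-1}$, while every other negative contribution carries a strictly lower power of $\beta$. I expect no additional obstacle: unlike Theorem 5.1, there is no need to take any other quantity in \eqref{eq:quantities} small, because largeness of $\beta$ now suffices for them thanks to the $a_\tau$ term we kept.
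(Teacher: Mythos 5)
Your proposal is correct and follows the same approach as the paper: take $\phi_+ = \beta u$ with $u$ the $Y^+$ Yamabe factor, retain the $a_\tau\phi_+^{N-1}$ term to absorb the $\|d\tau\|_p^2$ contribution from $\|LW\|_\infty^2$ (via \eqref{eq:LWboundSquared}), note that the $\beta^{N-1}$ coefficient is nonnegative exactly under the near-CMC hypothesis, and then take $\beta$ large so the remaining negative terms (all of strictly lower order in $\beta$) are dominated. The only cosmetic differences are that the paper divides its displayed inequalities through by $\beta$, and your choice of $u^\vee$ in the $b_w$ boundary term is slightly more careful than the paper's $u^\wedge$; neither affects the conclusion since both hinge on taking $\beta$ large.
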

\begin{proof}
We proceed as in Theorem \ref{FarFromCMC} but do not get rid of the $a_\tau$ term. Let $u, \Lambda_1, \Lambda_2\in W^{s,p}$ and $\phi_+$ be as before. Thus, analogously to \eqref{eq:FirstSupsolnIneq}, we need
\begin{equation}\label{eq:nearCMC1}
\Lambda_1 + \left(a_\tau^\vee - c_n C_1 \|d\tau\|_p^2 b^{2N}\right) \beta^{N-2} u^{N-1} - c_n(|\sigma|^2+C_2) \beta^{-N-2} u^{-N-1}\geq 0.
\end{equation} If $\|d\tau\|_p/|\tau|^\vee$ is sufficiently small, the second term of \eqref{eq:nearCMC1} is positive. Thus, \eqref{eq:nearCMC1} is implied by
\begin{equation}\label{eq:nearCMC2}
\Lambda_1 - c_n((|\sigma|^\wedge)^2+C_2) \beta^{-N-2} u^{-N-1}\geq 0.
\end{equation}

The other two conditions are the same, namely,
\begin{equation}\label{eq:nearCMC3}
\Lambda_2^\vee - \|b_\theta^-\|_\infty \beta^{e-1}(u^\wedge)^e  - \|b_w\|_\infty \beta^{-N/2-1}(u^\wedge)^{-N/2} \geq 0
\end{equation}\begin{equation}\label{eq:nearCMC4}
\beta u - \phi_D \geq 0.
\end{equation} All three of \eqref{eq:nearCMC2}-\eqref{eq:nearCMC4} hold for $\beta$ large enough. This completes the proof.
\end{proof}

We can similarly show that there is a global supersolution if $g\in Y^0$, though the proof is a little more complicated.

\begin{thm}[$g\in Y^0$, near-CMC] Suppose that $g\in Y^0$ and that $b_H \geq \frac{n-2}{2}H$. Assume that one of the following holds:
\begin{equation}\label{eq:nearCMCassumptions}
\begin{cases} a_\tau \not\equiv 0\\
b_\theta\leq 0 \textrm{ and } b_\tau \not\equiv 0\\
b_\theta\geq 0 \textrm{ and } b_\tau +b_\theta \not\equiv 0\\
\MD \neq \emptyset
\end{cases}\end{equation} In the first three cases we also assume that either $\sigma$ or $b_\theta^-+b_w$ is not identically zero. Finally, suppose that $\|d\tau\|_p/|\tau|^\vee$ is sufficiently small. Then there exists a global supersolution.
\end{thm}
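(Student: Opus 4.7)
The plan is to mimic the preceding $Y^+$ near-CMC proof, producing the global supersolution as a scalar multiple $\widehat\phi_+ = \beta v$ of a positive auxiliary function $v$ in a suitable conformal gauge. By Theorem \ref{YamabeClassification}, since $g \in Y^0$, there is a conformal factor $u$ such that $\widehat g = u^{N-2}g$ has $\widehat R = 0$ and $\widehat H = 0$; by Corollary \ref{combinedExistenceCor}, it suffices to find a global supersolution for the transformed Lichnerowicz problem $\widehat F$. The crucial departure from the $Y^+$ proof is that the ``Yamabe sources'' $\Lambda_1, \Lambda_2$ both vanish here, so they must be replaced by coercivity terms coming from the four alternatives in \eqref{eq:nearCMCassumptions}.

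I would construct $v \in W^{2,p}$ as the positive solution of a linear mixed BVP of the form
\[
 -\widehat\Delta v + a_\tau v = \lambda_1 \text{ in } M, \quad \partial_{\widehat\nu}v + \bigl(b_\tau + b_\theta^+ + (b_H - \tfrac{n-2}{2}H)\bigr)v = \lambda_2 \text{ on } \MN, \quad v = 1 \text{ on } \MD,
\]
with $\lambda_1, \lambda_2 > 0$ constants and $b_\theta^+ = \max\{0, b_\theta\}$. Under any of the alternatives in \eqref{eq:nearCMCassumptions}, at least one of the nonnegative zeroth-order coefficients is not identically zero or $\MD$ is nonempty, so the operator is invertible and Lemma \ref{boundedSolution} provides uniform positive bounds $0 < v^\vee \leq v \leq v^\wedge$ independent of $W$. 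Setting $\widehat\phi_+ = \beta v$ with $\beta \geq 1$, the interior and boundary supersolution inequalities---after invoking the Schauder bound
\[
 |\widehat{LW}|^2 \leq C_1 \|d\tau\|_p^2 (\widehat\phi_+^\wedge)^{2N} + C_2
\]
from Theorem \ref{schauderEstimateTheorem}---reduce to scaling inequalities in $\beta$ analogous to \eqref{eq:nearCMC1}--\eqref{eq:nearCMC4}. The near-CMC hypothesis that $\|d\tau\|_p/|\tau|^\vee$ is small is what allows the $\|d\tau\|_p^2\beta^{2N}$ contribution of $|\widehat{LW}|^2$ to be absorbed into either $a_\tau(\beta v)^{N-1}$ pointwise (Case 1) or into the auxiliary source $\beta\lambda_1$ at the expense of restricting $\beta$ to an interval nonempty precisely when $\|d\tau\|_p$ is small enough (Cases 2--4).

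The main obstacle is the interior inequality in Cases 2 and 3, where $a_\tau$ may vanish on part of $M$ and so cannot dominate $a_{\widehat w}(\beta v)^{-N-1}$ pointwise. There one must lean on $-\widehat\Delta(\beta v) = \beta\lambda_1 - a_\tau\beta v$ together with smallness of $\|d\tau\|_p$: since $|\widehat{LW}|^2$ scales as $\beta^{2N}$ while $\beta\lambda_1$ scales only as $\beta$ and $2N > N+2$, the resulting upper constraint $\beta^{N-2}\|d\tau\|_p^2 \lesssim 1$ must be compatible with the lower bound on $\beta$ needed to absorb the $|\widehat\sigma|^2$, $C_2$, and $b_w$ terms; these are simultaneously satisfiable exactly because of near-CMC smallness. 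For the boundary Neumann inequality, the hypothesis $b_H \geq \frac{n-2}{2}H$ keeps the conformally non-covariant excess $(b_H - \frac{n-2}{2}H)\widehat\phi_+$ nonnegative, and the boundary reserve $(b_\tau + b_\theta^+)(\beta v)^{N/2}$ or the Dirichlet condition $\widehat\phi_+ = 1$ on $\MD$ dominates the negative $b_w\widehat\phi_+^{-N/2}$ and $b_\theta^-\widehat\phi_+^e$ terms in the same range of $\beta$, closing the argument.
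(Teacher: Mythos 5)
Your plan matches the paper's in essentials: both build the supersolution as $\beta$ times a positive auxiliary solution of a linear mixed BVP in a ``null'' conformal gauge, then verify the three supersolution inequalities for large $\beta$ using the Schauder bound on $|LW|^2$. The paper does this implicitly---it sets $\phi_+ = \beta uv$ with $u$ the Yamabe-null conformal factor and $v$ solving the divergence-form problem $-\nabla(u^2\nabla v)+a_\tau v = c_n|\sigma|^2/2$, exploiting the identity $-u\Delta(uv)+a_Ruv\cdot u = -\nabla(u^2\nabla v)$ so that it never has to transform the boundary operator; you transform the metric explicitly and take a constant source $\lambda_1$. Your version is arguably cleaner (positivity of $v$ is automatic, so the casework in \eqref{eq:nearCMCassumptions} about $\sigma$ and $b_\theta^-+b_w$ is not needed to make $v>0$), but it does require handling the conformal non-covariance of the Robin operator when $b_H > \frac{n-2}{2}H$, which the paper sidesteps entirely.

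However, there is a genuine misconception in your handling of the interior inequality. You write that in Cases~2 and~3 ``$a_\tau$ may vanish on part of $M$,'' so that $|\widehat{LW}|^2$ must instead be absorbed into $\beta\lambda_1$, forcing the constraint $\beta^{N-2}\|d\tau\|_p^2 \lesssim 1$ and hence an upper bound on $\beta$. This is not right. The near-CMC hypothesis $\|d\tau\|_p/|\tau|^\vee$ small requires $|\tau|^\vee>0$, which makes $a_\tau = \frac{n(n-2)}{4}\tau^2$ bounded below by a positive constant on all of $M$, regardless of which alternative in \eqref{eq:nearCMCassumptions} one assumes. Since $|\widehat{LW}|^2\widehat\phi_+^{-N-1}$ and $a_\tau\widehat\phi_+^{N-1}$ both scale as $\beta^{N-1}$, the pointwise absorption
\[
a_\tau(\beta v)^{N-1} - C_1\|d\tau\|_p^2(\beta v^\wedge)^{2N}(\beta v)^{-N-1} \geq \beta^{N-1}\bigl[a_\tau^\vee(v^\vee)^{N-1} - C_1\|d\tau\|_p^2(v^\wedge)^{2N}(v^\vee)^{-N-1}\bigr] \geq 0
\]
is $\beta$-independent and holds precisely under the near-CMC smallness. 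There is never an upper constraint on $\beta$; the whole point is that $\beta$ can be taken as large as needed to swallow the $|\sigma|^2$, $C_2$, $b_w$, and $\phi_D$ contributions. If you instead try to absorb $\|d\tau\|_p^2\beta^{N-1}$ into $\beta\lambda_1$ you impose an absolute smallness condition on $\|d\tau\|_p$, which is the far-from-CMC regime, not the near-CMC regime the theorem asserts. So in Cases~2--4 your sketch as written would only prove a weaker statement. The fix is simple: always use the $a_\tau$ absorption, which is available unconditionally under the near-CMC hypothesis; the alternatives in \eqref{eq:nearCMCassumptions} are there only to guarantee invertibility and positivity for the auxiliary $v$-problem (a role that your constant sources $\lambda_1,\lambda_2$ make partially redundant).
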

\begin{proof}
We only consider the case where $b_\theta \leq 0$. The other cases are handled similarly. Let $u,v$ be the solutions of the following equations.
\begin{equation}\label{eq:nearCMCNull1}
\begin{array}{rlc} -\Delta u + a_Ru  &\hspace{-2.4mm}= 0\\
\gamma_N\partial_\nu u + \frac{n-2}{2}H u &\hspace{-2.4mm}= 0 &  \textrm{ on } \partial M \end{array}
\end{equation}\begin{equation}\label{eq:nearCMCNull2}
\begin{array}{rlc} -\nabla(u^2 \nabla v) + a_\tau v  &\hspace{-2.4mm}= c_n |\sigma|^2/2\\
\gamma_N\partial_\nu v + b_\tau v &\hspace{-2.4mm}= -(b_\theta+b_w) &  \textrm{ on } \MN\\
\gamma_D v &\hspace{-2.4mm}= \phi_D &  \textrm{ on } \MD \end{array}
\end{equation} where $c_n = \frac{n-1}{2(n-2)}$ as before.

The Yamabe classification Theorem \ref{YamabeClassification} implies that there exists a positive solution $u\in W^{2,p}$ to \eqref{eq:nearCMCNull1}. Indeed, $u$ is a conformal factor provided by that theorem that takes $g$ to a metric with zero scalar and boundary mean curvatures. A variation of \cite[Lem B.6,7]{HT13} and our assumptions that some of the quantities are not identically zero guarantee that the second system \eqref{eq:nearCMCNull2} has a positive solution $v\in W^{2,p}$. We claim that $\phi_+=\beta uv$ is a global supersolution for sufficiently large $\beta$.

As before, there are three expressions that need to positive in order for $\phi_+$ to be a global supersolution. First note that
\begin{align*}
-u\Delta(\phi_+)+ a_R u \phi_+ &= -\beta u \nabla (v\nabla u + u \nabla v) + \beta uv \Delta u \\
&= -\beta \nabla(u^2 \nabla v) + \beta u\nabla u\nabla v - \beta u \nabla v \nabla u - \beta uv\Delta u + \beta uv\Delta u \\
&= \beta (c_n |\sigma|^2/2 - a_\tau v).
\end{align*}
The first inequality we need is
\begin{equation}\label{eq:nearCMCNull4}
-u\Delta\phi_+ + a_R u\phi_+ + a_\tau u \phi_+^{N-1} - a_w u \phi_+^{-N-1} \geq 0.
\end{equation} We can calculate
\begin{align}
-u\Delta\phi_+ &+ a_R u\phi_+ + a_\tau u \phi_+^{N-1} - a_w u \phi_+^{-N-1}\notag\\ &= \beta (c_n|\sigma|^2/2 - a_\tau v) + a_\tau (\beta v)^{N-1}u^N - a_w (\beta v)^{-N-1} u^{-N}\notag\\
&\geq a_\tau ((\beta v)^{N-1} u^N - \beta v) + \beta c_n|\sigma|^2 - c_n(|\sigma|^2 + |LW|^2)  (\beta v)^{-N-1} u^{-N} \notag\end{align} which simplifies to
\begin{equation}\label{eq:nearCMCNull3}
= a_\tau ((\beta v)^{N-1} u^N - \beta v) - c_n |LW|^2(\beta v)^{-N-1} u^{-N} + c_n |\sigma|^2(\beta/2 - (\beta v)^{-N-1} u^{-N}).
\end{equation} Since $u$ and $v$ are positive, the $|\sigma|^2$ term is positive for large enough $\beta$.

If $W$ is admissible, inequality \eqref{eq:LWboundSquared} holds, i.e., $\|LW\|_\infty^2 \leq C_1 \|d\tau\|_p^2 (\phi_+^\wedge)^{2N} +C_2$. Using this with the first two terms of \eqref{eq:nearCMCNull3}, we get
\begin{multline*}
a_\tau ((\beta v)^{N-1} u^N - \beta v) - c_n |LW|^2(\beta v)^{-N-1} u^{-N} \\ \geq \left[a_\tau^\vee (v^\vee)^{N-1} (u^\vee)^{N}  - C ((uv)^\wedge)^{2N} (u^\vee)^{-N} (v^\vee)^{-N-1} \|d\tau\|_p^2\right] \beta^{N-1} + O(\beta).
\end{multline*} Since $\|d\tau\|_p/|\tau|^\vee$ is sufficiently small, for large enough $\beta$, this quantity is positive. Thus \eqref{eq:nearCMCNull4} holds.

For the Neumann boundary condition, we drop the traces for clarity. We first note that
\[
\partial_\nu(uv) + b_H uv = \left(b_H -\frac{n-2}{2}H\right) uv + u \partial_\nu v
\] and so we can show
\begin{align*}
\partial_\nu \phi_+ + f(\phi_+) &= \left(b_H -\frac{n-2}{2}H\right) \beta uv + \beta u \partial_\nu v + b_\theta \phi_+^e + b_\tau \phi^{N/2}_+ + b_w \phi_+^{-N/2}\\
&\geq -b_\theta (\beta u-\phi_+^e)+b_\tau (\phi_+^{N/2} - \phi_+) - b_w (\beta u - \phi_+^{-N/2}).
\end{align*} Since $b_\theta\leq 0$, $b_\tau \geq 0$ and $b_w\leq 0$ (see Assumptions \ref{assumptions}), this is positive for $\beta$ large enough, and so
\begin{equation}\label{eq:nearCMCNull5}
\partial_\nu \phi_+ + f(\phi_+) \geq 0.
\end{equation}

For the Dirichlet boundary condition, a large $\beta$ clearly gives
\begin{equation}\label{eq:nearCMCNull6}
\gamma_D \phi_+ -\phi_D>0.
\end{equation} The inequalities \eqref{eq:nearCMCNull4}, \eqref{eq:nearCMCNull5} and \eqref{eq:nearCMCNull6} together show that $\phi_+$ is a global supersolution for large enough $\beta$.
\end{proof}

\begin{thm}[$g\in Y^-$, near-CMC] Assume the conditions of Theorem \ref{noSubsolutionNeededNegative} are met, except for the existence of a global supersolution. Suppose that either $\sigma \not\equiv 0$, $b_w+b_\theta^- \not\equiv 0$ or that $\MD \neq \emptyset$. Finally, suppose that $\|d\tau\|_p/|\tau|^\vee$ is sufficiently small. Then there exists a global supersolution.
\end{thm}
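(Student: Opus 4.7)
The plan is to adapt the $Y^0$ near-CMC construction and produce a global supersolution of the form $\phi_+ = \beta u v$, where $\beta$ is a large positive constant, $u$ is a fixed positive conformal factor, and $v$ is a fixed positive auxiliary function. The near-CMC hypothesis will win a leading-order $\beta^{N-1}$ comparison in the interior estimate, exactly as in the $Y^0$ case.

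Using Theorem \ref{YamabeClassification} together with $g \in Y^-$, select a positive $u \in W^{2,p}$ so that the conformally rescaled metric $u^{N-2} g$ has continuous negative scalar curvature and identically zero boundary mean curvature. In terms of $u$ this is equivalent to
\begin{equation*}
-\Delta u + a_R u = -K u^{N-1}\ \text{in }M,\qquad \partial_\nu u + \tfrac{n-2}{2} H u = 0\ \text{on }\partial M,
\end{equation*}
for some continuous $K > 0$. Conformal covariance forces $b_H = \frac{n-2}{2} H$, so the boundary identity above also reads $\partial_\nu u + b_H u = 0$ on $\MN$. In the case $b_\theta \leq 0$, let $v \in W^{2,p}$ be the positive solution of the linear problem
\begin{equation*}
\begin{array}{rll}
-\nabla(u^2 \nabla v) + a_\tau v &= c_n |\sigma|^2/2 & \text{in } M,\\
\partial_\nu v + b_\tau v &= -(b_\theta + b_w) & \text{on } \MN,\\
v &= \phi_D & \text{on } \MD;
\end{array}
\end{equation*}
in the case $b_\theta \geq 0$, move $b_\theta$ to the left-hand side of the Neumann condition. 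The non-triviality hypothesis ($\sigma \not\equiv 0$, $b_w + b_\theta^- \not\equiv 0$, or $\MD \neq \emptyset$) makes the right-hand data nontrivial and nonnegative, so a positive $v$ exists by the arguments of \cite[Lem B.6--B.8]{HT13}.

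The supersolution check then proceeds almost verbatim as in the $Y^0$ near-CMC proof. A direct computation using $u$'s and $v$'s equations gives
\begin{equation*}
u\cdot F_1(\phi_+) = a_\tau u^N \bigl[(\beta v)^{N-1} - \beta v\bigr] + \beta c_n |\sigma|^2/2 - \beta K u^N v - a_w u^{-N} (\beta v)^{-N-1}.
\end{equation*}
Bounding $\|LW\|_\infty^2$ by \eqref{eq:LWboundSquared} extracts a $\beta^{N-1}$-piece of the last term with coefficient proportional to $\|d\tau\|_p^2$; this is beaten by the positive $\beta^{N-1}$-piece $a_\tau^\vee (u^\vee)^N (v^\vee)^{N-1}$ from the first term whenever $\|d\tau\|_p/|\tau|^\vee$ is small enough. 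All remaining terms, including the new $O(\beta)$ contribution $-\beta K u^N v$ (absent in the $Y^0$ case), are of lower order in $\beta$ and are swallowed by the winning $\beta^{N-1}$ term. The Neumann boundary computation mirrors the $Y^0$ case and reduces, after the cancellations coming from $u$'s and $v$'s boundary conditions, to a sum of terms of the form $b_\tau [\phi_+^{N/2} - \phi_+]$, $-b_\theta [\beta u - \phi_+^e]$, $-b_w [\beta u - \phi_+^{-N/2}]$, each nonnegative for $\beta$ large by Assumption \ref{assumptions} and the restriction $e \neq 1$, while the Dirichlet condition is immediate for $\beta$ large. The only genuinely new point over the $Y^0$ argument is the harmless extra $-\beta K u^N v$ term arising because $g \in Y^-$ precludes reaching $\widehat R = 0$; the principal analytical obstacle is the verification of solvability and positivity of $v$'s linear problem across the sign-cases of $b_\theta$ under the stated non-triviality hypothesis.
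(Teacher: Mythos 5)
Your proof is correct and has the same skeleton as the paper's: take $\phi_+ = \beta u v$ with $u$ a conformal factor and $v$ a positive solution of a $\phi$- and $W$-independent linear problem, then verify that the near-CMC hypothesis wins the $\beta^{N-1}$-order comparison coming from the $\|d\tau\|_p^2\phi^{2N}$ contribution in $\|LW\|_\infty^2$. The one genuine difference is the choice of $u$. The paper takes $u$ to be the solution of \eqref{negSubsolution} --- available by hypothesis, since the conditions of Theorem \ref{noSubsolutionNeededNegative} are assumed --- conformally transforms so that $a_{\widehat R} = -a_\tau$ and the Robin coefficient becomes $-(b_\tau + b_\theta u^{e-N/2})$, and then takes $\phi_+ = \beta v$ in the transformed picture with $v$ solving $-\widehat\Delta v + a_\tau v = c_n|\sigma|^2/2$. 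You instead take $u$ from the Yamabe classification (continuous negative scalar curvature, minimal boundary), giving $a_{\widehat R} = -K$ for a continuous $K>0$ with no a priori relation to $a_\tau$, and work directly in the original picture with $v$ solving the weighted equation $-\nabla(u^2\nabla v) + a_\tau v = c_n|\sigma|^2/2$. Your extra $-\beta K u^N v$ term is the analogue of the $-2\beta a_\tau v$ term the paper's computation produces; both are $O(\beta)$ and both are swallowed by the positive $a_\tau(\beta v)^{N-1}u^N$ term once $\beta$ is large, so the near-CMC threshold is the same either way. Your boundary verification (relying on $b_H = \tfrac{n-2}{2}H$, which is indeed forced by the conformal covariance built into Assumption \ref{assumptions}) is also sound; the paper's transformed $b_\theta$ carries the weight $u^{e-N/2}$ that your direct version omits, but either arrangement reduces to the nonnegative groupings $b_\tau(\phi_+^{N/2}-\phi_+)$, $-b_\theta(\beta u-\phi_+^e)$ (or $b_\theta(\phi_+^e-\phi_+)$ when $b_\theta\geq 0$), and $-b_w(\beta u-\phi_+^{-N/2})$, all nonnegative for $\beta$ large by $(e-1)b_\theta\geq 0$, $b_\tau\geq 0$, $b_w\leq 0$. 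In short: same method, slightly different choice of conformal gauge; the paper's choice ties $a_{\widehat R}$ explicitly to $a_\tau$, while yours decouples the supersolution construction from the particular $u$ of \eqref{negSubsolution}.
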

\begin{proof}
We only consider the case where $b_\theta\geq 0$. The other case is handled similarly. Let $u$ be the solution to \eqref{negSubsolution} from Theorem \ref{noSubsolutionNeededNegative}. The function $u$ is a conformal factor that transforms $g$ to a metric with scalar curvature to $-a_\tau$. Since $g\in Y^-$, $-a_\tau$ cannot be identically zero. After the conformal transformation by $u$, the Lichnerowicz problem \eqref{lichSystem} reads
\begin{equation}\label{eq:noSubsolutionNeg1}
\begin{array}{rlc} -\Delta\phi -a_\tau \phi + a_\tau \phi^{N-1} - a_w \phi^{-N-1} &\hspace{-2.4mm}= 0\\
\gamma_N\partial_\nu\phi -(b_\tau + b_\theta u^{e-\frac{N}{2}}) \phi + b_\theta \phi^e + b_\tau \phi^{N/2} + b_w \phi^{-N/2}  &\hspace{-2.4mm}=0 &  \textrm{ on } \MN\\
\gamma_D\phi-\phi_D & \hspace{-2.4mm}= 0  &  \textrm{ on } \MD.\end{array}
\end{equation} Let $v\in W^{2,p}$ be the solution to
\begin{equation}\label{eq:noSubsolutionNeg2}
\begin{array}{rlc} -\Delta v + a_\tau v &\hspace{-2.4mm}= c_n |\sigma|^2/2\\
\gamma_N\partial_\nu v +(b_\tau + b_\theta u^{e-\frac{N}{2}}) v  &\hspace{-2.4mm}= -b_w &  \textrm{ on } \MN\\
\gamma_D\phi-\phi_D & \hspace{-2.4mm}= 0  &  \textrm{ on } \MD.\end{array}
\end{equation} The condition $a_\tau \not\equiv 0$ guarantees that there is a unique solution $v$ to \eqref{eq:noSubsolutionNeg2}. The assumption that either $\sigma$ or $b_w$ are nonzero or that $\MD$ is nonempty gives that $v>0$. One can show that $\phi_+ = \beta v$ is a supersolution for \eqref{eq:noSubsolutionNeg1} for sufficiently large $\beta>0$, as in the previous theorem, under the near-CMC assumption given. Since $v$ does not depend on $W$, this is a global supersolution.
\end{proof}

\section{``Limit Equation" Results and Inequalities}\label{sec:LimitEquation}

In the papers \cite{DGH11,GS12,DGI13} it has been shown that there is a ``limit equation," such that either it or the constraint equations has a solution (or both). As part of the proof they prove several independently useful existence and inequality results that are not clear from their presentation. For instance, in the closed manifold case (in \cite{DGH11}) they prove that $\|\phi^N\|_\infty \leq C \max\{1,\|LW\|_2\}$ for any solution of the constraint equations. This is the opposite direction of the more easily shown inequality $\|LW\|_2 \leq C\|\phi^N\|_\infty$ that is often used.

In our case, the compact with boundary case, it proves difficult to make the last step in order to prove the existence of a solution to the limit equation. However, all of the other results have analogues. Since they may be of independent value, we prove them here.

We assume that $X = \frac{n-1}{n} \phi^{N-\epsilon} d\tau$ for some $\epsilon\in [0,1)$, though we could include a scaled energy term without much difficulty. We also need slightly more regularity for $X_\N$ and $X_\D$, namely we need $X_\N\in {W^{1-\frac{3}{5n},\frac{5n}{3}}(\WMN)}$ and $X_\D\in {W^{2-\frac{3}{5n},\frac{5n}{3}}(\WMD)}$. This may be already satisfied because of our standard assumptions \eqref{assumptions}, depending on our choice of $p$.

Also, we need that $F_2(\Lambda) \geq 0$ for any large constant $\Lambda$, where $F_2$ is the line of the Lichnerowicz problem \eqref{lichSystem}. We assume that this is true. Note that this happens, in particular, in the defocusing case when the $b$ coefficients do not depend on $W$ and particular $b$ coefficients are non-zero. It would be sufficient for the coefficient of the highest power of $\phi$ in $F_2$ to be strictly positive, though that is slightly stronger than we require.

Finally, we require $\inf\tau>0$, where we assume $\tau>0$ rather than $\tau<0$ without loss of generality. This is similar to \cite{DGH11, GS12, DGI13}.

If $\epsilon\neq 0$, we will refer to the conformal constraint equations with these $X$'s as the (conformal) constraint equations with $\epsilon$.

In this section we will prove the following three lemmas.

\begin{lem}\label{subcriticalExistence}
Suppose the conditions of either Theorem \ref{noSubsolutionNeeded} or \ref{noSubsolutionNeededNegative} hold, in both cases except for the existence of a global supersolution. Also suppose that $\epsilon>0$. Then there exists solutions $\phi, W\in W^{2,p}$ to the conformal constraint equations with $\epsilon$.
\end{lem}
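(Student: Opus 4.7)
The plan is to exploit the subcriticality of the vector problem when $\epsilon>0$. For $X = \tfrac{n-1}{n}\phi^{N-\epsilon} d\tau$, Theorem~\ref{schauderEstimateTheorem} with $k=N-\epsilon<N$ yields $\|LW\|_\infty \leq C(B^{N-\epsilon}+1)$ whenever $\phi\leq B$. Combined with $\inf\tau>0$ (so $a_\tau \geq c_\tau := \tfrac{n(n-2)}{4}(\inf\tau)^2>0$ a.e.) and the section's standing assumption $F_2(\Lambda)\geq 0$ for large constants $\Lambda$, this should let us produce a constant global supersolution, after which the lemma follows by direct application of Theorem~\ref{noSubsolutionNeeded} or Theorem~\ref{noSubsolutionNeededNegative}.

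First I would normalise via conformal covariance. In the setting of Theorem~\ref{noSubsolutionNeeded}, I take the conformal factor $\psi$ that its hypotheses provide so that $a_{\widehat R}+a_{\widehat\tau}\geq 0$ and $\widehat b_H + \widehat b_\tau \geq 0$; using Theorem~\ref{YamabeClassification}, I may in addition arrange $a_{\widehat R}$ to be continuous. In the setting of Theorem~\ref{noSubsolutionNeededNegative}, I take $\psi = u$, the positive solution of~\eqref{negSubsolution}, which produces a transformed metric with scalar curvature proportional to $-a_\tau$. In both cases $a_{\widehat R}\in L^\infty$, while $a_{\widehat\tau}=a_\tau\geq c_\tau$ since $\tau$ is conformally invariant.

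I then verify that $\phi_+ = \Lambda$ is a global supersolution of the transformed Lichnerowicz problem $\widehat F$ for every sufficiently large constant $\Lambda$. For any admissible $W$ the Schauder bound gives $a_{\widehat w}\leq C_1(|\sigma|^2 + \Lambda^{2(N-\epsilon)}+1)$, so
\[
\widehat F_1(\Lambda) \,\geq\, a_{\widehat R}\Lambda + \Lambda^{N-1}\bigl(c_\tau - C_2\Lambda^{-2\epsilon}\bigr) - C_2(|\sigma|^2+1)\Lambda^{-N-1} \,\geq\, 0 \quad \text{a.e.}
\]
for $\Lambda$ large, since $\epsilon>0$, $N-1>1$, and $a_{\widehat R}\in L^\infty$. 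The Neumann inequality $\widehat F_2(\Lambda)\geq 0$ is the standing hypothesis transferred through conformal covariance: by Definition~\ref{conformalCovariance}, $\widehat F_2(\Lambda)\geq 0$ is equivalent to $F_2(\psi\Lambda)\geq 0$, and the dominant $\Lambda^{N/2}$-term $b_\tau(\psi\Lambda)^{N/2}$ of $F_2(\psi\Lambda)$ matches the dominant positive term $\widehat b_\tau \Lambda^{N/2}$ of $\widehat F_2(\Lambda)$. The Dirichlet inequality $\Lambda\geq \widehat\phi_D$ is automatic for large $\Lambda$.

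With the global supersolution in hand, Corollary~\ref{combinedExistenceCor} applies and reduces us to the hypotheses of Theorem~\ref{noSubsolutionNeeded} or Theorem~\ref{noSubsolutionNeededNegative}, whose subsolution and uniform-lower-bound constructions require only the supersolution plus a uniform bound on $\|a_w\|_p$ over admissible $W$, which is immediate since $\phi\leq\Lambda$ forces $\|LW\|_\infty \leq C(\Lambda^{N-\epsilon}+1)$. Applying the appropriate theorem yields the desired pair $(\phi,W)\in W^{2,p}\times W^{2,p}$. The main technical obstacle is the bookkeeping under the conformal change — confirming that the standing hypothesis $F_2(\Lambda)\geq 0$ and the sign conditions on the $b_i$ transfer cleanly to the transformed problem — which is handled by the conformal covariance in Assumption~\ref{assumptions}.
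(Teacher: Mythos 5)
Your proposal is correct and follows essentially the same route as the paper: both arguments (i) perform a conformal change to make the scalar curvature continuous/bounded, (ii) use the Schauder estimate of Theorem~\ref{schauderEstimateTheorem} to bound $\|LW\|_\infty^2$ by $C(\Lambda^{2(N-\epsilon)}+1)$ for admissible $W$, (iii) combine $\inf\tau>0$ with $\epsilon>0$ so that the $a_\tau\Lambda^{N-1}$ term dominates the $\Lambda^{N-1-2\epsilon}$ contribution from $a_w\Lambda^{-N-1}$, (iv) invoke the section's standing hypothesis for $F_2$ and the trivial bound for $F_3$, and (v) conclude via Theorem~\ref{noSubsolutionNeeded} or~\ref{noSubsolutionNeededNegative}. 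You are in fact slightly more careful than the paper about tracking which operator ($F$ vs.~$\widehat F$) the standing assumption $F_2(\Lambda)\geq 0$ is applied to after the conformal change; the paper applies it without decoration, while you argue it transfers through conformal covariance.
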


\begin{lem} \label{subcriticalInequality}
Suppose $\phi, W\in W^{2,p}$ are solutions of the conformal constraint equations with $\epsilon \in [0,1)$ under the same conditions as Lemma \ref{subcriticalExistence}. Also suppose $g\in W^{2,q}$, $q \geq \frac{n}{2}\left(2+\frac{np}{p-n}\right)$ (or just $g\in C^2$). Then the following inequality holds, with $C$ independent of $\phi$, $W$ and $\epsilon$:
\begin{align}\label{eq:subcriticalInequality}
\|\phi^{2N}\|_\infty &\leq C \widetilde\gamma
\end{align} where $\widetilde\gamma$ is a constant defined below depending on $\|LW\|_2$ and the boundary values of $\phi$.
\end{lem}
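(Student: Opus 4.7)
The plan is to adapt the Moser-type argument of Dahl--Gicquaud--Humbert \cite{DGH11} from the closed-manifold setting to the present boundary setting. The crucial step is to multiply the Lichnerowicz equation (the first line of \eqref{lichSystem}) by the test function $\phi^{N+1}$. The exponent $N+1$ is chosen precisely so that the singular factor in $a_w\phi^{-N-1}$ is cancelled, yielding
\[
\int_M a_w\phi^{-N-1}\cdot\phi^{N+1} \;=\; \frac{n-2}{4(n-1)}\int_M |\sigma+LW|^2 \;\leq\; C\bigl(\|\sigma\|_2^2 + \|LW\|_2^2\bigr),
\]
which is directly controlled by $\|LW\|_2$. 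Integration by parts produces on the left-hand side the coercive gradient term $\tfrac{4(N+1)}{(N+2)^2}\|\nabla \phi^{(N+2)/2}\|_2^2$ together with a boundary integral $-\int_{\partial M}\phi^{N+1}\partial_\nu\phi$ and the coercive term $\int_M a_\tau\phi^{2N}$, which is bounded below by $\tfrac{n(n-2)}{4}(\tau^\vee)^2\|\phi\|_{2N}^{2N}$ using the hypothesis $\tau^\vee>0$.

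To handle the boundary contribution I would split $\partial M = \MN \cup \MD$. On $\MN$ I substitute the Robin condition $\partial_\nu\phi = -f(\phi) = -b_H\phi - b_\theta\phi^e - b_\tau\phi^{N/2} - b_w\phi^{-N/2}$, producing boundary integrals in powers of $\phi$ weighted by the $b_i$; under the defocusing sign hypotheses ($b_\tau\geq 0$, $b_w \leq 0$, $(e-1)b_\theta \geq 0$) the highest-order terms $b_\tau\phi^{3N/2+1}$ and $-b_w\phi^{N/2+1}$ carry favorable signs and move to the coercive side, while the remaining $b_H$- and $b_\theta$-terms are handled using the constant-sign hypothesis on $b_H - \tfrac{n-2}{2}H$. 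On $\MD$ I would refine the test function to $\phi^{N+1} - \Phi_D^{N+1}$, where $\Phi_D$ is a fixed smooth extension of $\phi_D$ to $M$; this vanishes on $\MD$ and kills the Dirichlet boundary contribution, at the cost of introducing extra interior cross-terms depending only on $\phi_D$.

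Combining these estimates yields an inequality
\[
\|\nabla \phi^{(N+2)/2}\|_2^2 + \|\phi\|_{2N}^{2N} \;\leq\; C\bigl(1 + \|LW\|_2^2 + \mathrm{Bdry}(\phi_D, b_i)\bigr) \;=:\; C\widetilde\gamma.
\]
Applying the Sobolev embedding $W^{1,2}\hookrightarrow L^{2^*}$ with $2^* = N$ to $\phi^{(N+2)/2}$ then gives an $L^{N(N+2)/2}$ bound on $\phi$, a strict integrability gain over $L^{2N}$. I would iterate this (Moser-style), at each step testing the Lichnerowicz equation against an appropriate power $\phi^{2s-1}$, to boost $\phi$ into progressively higher $L^{q_k}$ spaces; the regularity hypothesis $q\geq \tfrac{n}{2}(2 + \tfrac{np}{p-n})$ guarantees that $a_R \in L^q$ with $q$ large enough that the iteration terminates in $L^\infty$ in finitely many rounds, producing the desired bound \eqref{eq:subcriticalInequality}.

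The main obstacle is the Dirichlet piece $\MD$: in the closed case of \cite{DGH11} no boundary is present, whereas the subtraction trick with $\Phi_D$ introduces cross-terms whose dependence on the Dirichlet data must be tracked carefully through every round of the iteration, and absorbed into $\widetilde\gamma$. A secondary delicate point is ensuring that after the first Moser step the test functions $\phi^{2s-1}$ with $s>\tfrac{N+2}{2}$ still give manageable boundary integrals on $\MN$; here the assumption that $F_2(\Lambda)\geq 0$ for large constants $\Lambda$ together with the constant-sign conditions on $b_H - \tfrac{n-2}{2}H$ should suffice, but require care to verify uniformly in the iteration index and in $\epsilon \in [0,1)$.
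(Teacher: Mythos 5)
Your proposal takes a genuinely different route from the paper, but it has a gap that I do not think you can close as written.

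The paper does not run a Sobolev-embedding-based Moser iteration on $\phi$. Instead, it rescales $\widetilde\phi = \widetilde\gamma^{-1/2N}\phi$, $\widetilde W = \widetilde\gamma^{-1/2}W$, $\widetilde\sigma = \widetilde\gamma^{-1/2}\sigma$, then multiplies the rescaled Lichnerowicz equation by $\widetilde\phi^{N+1+Nk_i}$ and integrates as in your first step. Crucially, however, it \emph{drops} the gradient term (it is nonnegative) rather than using it coercively, it controls the scalar curvature term by H\"older (this is where $g\in W^{2,q}$ with $q\geq \tfrac{n}{2}(2+\tfrac{np}{p-n})$ enters), and it absorbs the boundary integral $\int_{\partial M}\widetilde\phi^{N+1+Nk_i}\partial_\nu\widetilde\phi$ directly into the definition of the energy $\gamma(\phi,W) = \int_M|LW|^2 + \sup_{\partial M}\bigl(\phi^{N+1+\frac{np}{p-n}}|\partial_\nu\phi|\bigr)$, rather than via substitution on $\MN$ and subtraction of $\Phi_D^{N+1}$ on $\MD$. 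The heart of the argument is then an \emph{alternating} bootstrap: each improvement of the $L^r$ integrability of $\widetilde\phi$ is fed into the vector equation $\di L\widetilde W = \tfrac{n-1}{n}\widetilde\gamma^{-\epsilon/2N}\widetilde\phi^{N-\epsilon}d\tau$ (together with the elliptic estimate of Theorem \ref{schauderEstimateTheorem}) to improve the $L^{r_i}$ integrability of $L\widetilde W$, and this in turn allows the next step $\int_M|\widetilde\sigma+L\widetilde W|^2\widetilde\phi^{Nk_i}$ to be controlled. Once $L\widetilde W\in L^\infty$ (finitely many rounds), the paper finishes not by Sobolev iteration but by the pointwise maximum principle in the interior and the Hopf lemma plus the $F_2(\Lambda)\geq 0$ hypothesis on $\MN$.

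The gap in your proposal is precisely the point this alternation is designed to fix. After your first Moser step you test against $\phi^{2s-1}$ with $s > (N+2)/2$, and the right-hand side produces $\int_M|\sigma+LW|^2\phi^{2s-N-2}$ with a \emph{positive} power of $\phi$. You only have $\|LW\|_2$ (hence $|LW|^2\in L^1$) from $\widetilde\gamma$; to control this term against the coercive $a_\tau\phi^{2s+N-2}$ by Young's inequality you would need $|LW|\in L^{(2s+N-2)/N}$, and that exponent tends to infinity with $s$. Without using the vector equation to upgrade $LW$'s integrability step by step in tandem with $\phi$'s, the iteration stalls after one round. If you add a step that, after each integrability gain on $\phi$, applies the elliptic estimate for the vector problem to the source $X = \tfrac{n-1}{n}\phi^{N-\epsilon}d\tau$ to gain on $L W$, then your argument becomes structurally the same as the paper's Steps 2--3 and should close. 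A secondary concern is that your subtraction $\phi^{N+1}-\Phi_D^{N+1}$ changes the integration-by-parts identity in the interior and you have to track not only cross-terms in $\phi_D$ but also a boundary integral on $\MN$ coming from $\Phi_D$ (which need not vanish there); the paper sidesteps all of this by building $\sup_{\partial M}\phi^{N+1+np/(p-n)}|\partial_\nu\phi|$ directly into $\widetilde\gamma$, which is simpler and exactly why $\widetilde\gamma$ is defined to depend on ``the boundary values of $\phi$''.
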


\begin{lem}\label{criticalExistence}
Suppose the same conditions as for Lemma \ref{subcriticalInequality} are fulfilled. Let $\epsilon_i$ and $(\phi_i, W_i)$ be a sequence such that $\epsilon_i\geq0$, $\epsilon_i\to 0$ and $(\phi_i, W_i)$ is a solution of the conformal equations with $\epsilon = \epsilon_i$. Also assume that the conditions of Lemma \ref{LichnerowiczContinuity} (the continuity of the Lichnerowicz problem) are fulfilled. If the right side of the inequality \eqref{eq:subcriticalInequality} is uniformly bounded then there exists a subsequence of the $(\phi_i, W_i)$ which converges in $W^{2,p}$ to a solution $(\phi_\infty,W_\infty)$ of the original conformal constraint equations.
\end{lem}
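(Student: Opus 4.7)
The plan is a standard compactness and continuity argument: produce uniform $W^{2,p}$ bounds on the sequence $(\phi_i, W_i)$, extract a subsequence converging in $W^{2,p}$, and then use the isomorphism property of $\p^{2,p}$ (Theorem \ref{vectExistence}) and continuity of the Lichnerowicz solution map (Lemma \ref{LichnerowiczContinuity}) to identify the limit as a solution of the original ($\epsilon = 0$) system. The uniform upper bound $\|\phi_i\|_\infty \leq C$ comes directly from Lemma \ref{subcriticalInequality} together with the hypothesis on $\widetilde\gamma$. Feeding this into the source $X_i = \frac{n-1}{n}\phi_i^{N-\epsilon_i}\, d\tau$ and applying Theorem \ref{schauderEstimateTheorem} gives $\|W_i\|_{2,p} \leq C'$, and hence a uniform bound on $\|LW_i\|_\infty$ via $W^{1,p} \hookrightarrow L^\infty$. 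For the crucial uniform positive lower bound $\phi_i \geq K > 0$, I would revisit the subsolution construction inside the proof of Theorem \ref{noSubsolutionNeeded} (resp.\ \ref{noSubsolutionNeededNegative}): the auxiliary function $v$ of \eqref{eq:noSubsolution1} and the resulting subsolution $\beta\psi v$ depend on $W_i$ only through $\|LW_i\|_\infty$ (and through fixed boundary data), so by the Green's-function estimates of Lemma \ref{boundedSolution} both $\beta$ and $\inf v$ are uniformly bounded below. Uniqueness of the defocusing Lichnerowicz problem (\cite[Lem~4.2]{HT13}) then forces each $\phi_i$ to coincide with the solution built by the sub/supersolution method, giving $\phi_i \geq K > 0$ uniformly. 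Elliptic regularity on the Lichnerowicz equation finally upgrades these $L^\infty$ bounds to a uniform $W^{2,p}$ bound on $\phi_i$.

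Next I would extract a subsequence with $\phi_i \rightharpoonup \phi_\infty$ and $W_i \rightharpoonup W_\infty$ weakly in $W^{2,p}$, hence strongly in $C^1(M)$ since $p > n$; in particular $0 < K \leq \phi_\infty \leq C$. Because $\epsilon_i \to 0$ and $\phi_i$ is uniformly two-sided bounded, $\phi_i^{N-\epsilon_i} \to \phi_\infty^N$ uniformly, so $X_i \to X_\infty := \frac{n-1}{n}\phi_\infty^N\, d\tau$ in $L^p$. The continuity of $(\p^{2,p})^{-1}$ from Theorem \ref{vectExistence} then identifies $W_\infty$ as the solution of the vector problem with source $X_\infty$ and upgrades $W_i \to W_\infty$ to strong convergence in $W^{2,p}$. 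Consequently $a_{w,i} \to a_{w,\infty}$ in $L^p$, and Lemma \ref{LichnerowiczContinuity}, applied at the limiting data, identifies $\phi_\infty$ as the unique $W^{2,p}$ solution of the limit Lichnerowicz problem and yields strong convergence $\phi_i \to \phi_\infty$ in $W^{2,p}$. Thus $(\phi_\infty, W_\infty)$ is the sought solution of the original ($\epsilon = 0$) conformal constraint equations.

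The principal obstacle, as I see it, is securing the uniform positive lower bound on $\phi_i$. Lemma \ref{subcriticalInequality} controls only the $L^\infty$ size of $\phi_i$ from above; a priori a subsequence could degenerate to zero at some point, rendering the singular term $a_w\phi^{-N-1}$ and the subsolution-based argument meaningless. The resolution is that the subsolution produced in Theorem \ref{noSubsolutionNeeded} depends on $W$ only through $\|LW\|_\infty$ and through fixed boundary data, both already controlled uniformly; combined with uniqueness in the defocusing case, this pins every $\phi_i$ above a single positive constant. A secondary bookkeeping point is verifying that the limit data lies where Lemma \ref{LichnerowiczContinuity} applies, but this reduces to the non-degeneracy hypothesis already in force under our standing assumptions.
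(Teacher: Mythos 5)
Your proof is correct and follows essentially the same route as the paper: uniform $L^\infty$ bounds on $\phi_i$ from Lemma \ref{subcriticalInequality}, uniform $W^{2,p}$ bounds on $W_i$ from the vector estimate, a compactness extraction, and then the continuity of the Lichnerowicz solution map (Lemma \ref{LichnerowiczContinuity}) together with the isomorphism of $\p^{2,p}$ to identify the limit as a solution. The one place you diverge is in front-loading an explicit uniform positive lower bound on $\phi_i$ via the subsolution construction and uniqueness, before extracting a subsequence. The paper avoids this: it first extracts a subsequence so that $LW_i \to LW_\infty$ in $L^q$ (using compactness of $W^{2,p} \hookrightarrow W^{1,p} \hookrightarrow L^\infty$), then invokes Lemma \ref{LichnerowiczContinuity}, whose hypothesis of well-definedness of the solution map at the limiting data is part of the statement of Lemma \ref{criticalExistence}; continuity of the solution map into $W^{2,p}$ then delivers $\phi_i \to \phi_\infty$ in $W^{2,p}$ directly, which automatically precludes degeneration to zero. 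Your extra step is sound and arguably more self-contained, but it is not needed given the hypothesis, and it quietly uses that each $\phi_i$ equals the solution produced by the sub/supersolution method, which again relies on the same uniqueness result the paper's shorter argument uses. Either presentation is acceptable.
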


The limit equation appears by considering what happens when $\widetilde\gamma$ is unbounded. However, there are some difficulties that appear in this case that do not appear in other cases, which we will discuss below.

We first prove Lemma \ref{subcriticalExistence}.

\begin{proof}[Proof of Lemma \ref{subcriticalExistence}]
By Theorem \ref{noSubsolutionNeeded} or \ref{noSubsolutionNeededNegative}, all we need to find is a (global) supersolution. We claim there is a constant supersolution. Let $W$ be admissible for the possible constant supersolution $\Lambda$. We want to show that $\Lambda$ is a supersolution to the Lichnerowicz problem for any such $W$ for $\Lambda$ large enough.

First, using Theorem \ref{schauderEstimateTheorem},
\begin{align}\label{eq:subcriticalExistence1}
\|LW\|_{\infty} \leq  C (\Lambda^{N-\epsilon} + 1).
\end{align}

By Theorem \ref{YamabeClassification}, use a conformal transformation to change $g$ to a metric with continuous (and thus bounded) scalar curvature $R$. Using this, we get, where $F_1(\phi)$ is the first line of the Lichnerowicz problem \eqref{lichSystem},
\begin{align*}
F_1(\Lambda) &= a_R\Lambda + a_\tau \Lambda^{N-1} - a_w \Lambda^{-N-1} \\
&\geq C_1\Lambda + C_2 \Lambda^{N-1} - (C_3 |\sigma|^2+C_4) \Lambda^{-N-1} - C_5 \Lambda^{N-1- 2\epsilon}
\end{align*} for constant $C_1$ and positive constants $C_2, C_3, C_4$ and $C_5$. Thus for large enough $\Lambda$ and $\epsilon>0$, $F_1(\Lambda)>0$.

For $F_2$, the second line of the Lichnerowicz problem \eqref{lichSystem},  $F_2(\Lambda)>0$ by assumption, as discussed in the introduction for this section. Clearly $F_3(\Lambda)>0$, where $F_3$ is the third line of the Lichnerowicz problem \eqref{lichSystem}.

Combining these gives that $F(\Lambda)>0$ for large enough $\Lambda$, and so $\Lambda$ is a global supersolution. By Theorem \ref{noSubsolutionNeeded} (or \ref{noSubsolutionNeededNegative}), there is a solution $(\phi_\epsilon,W_\epsilon) \in W^{2,p}\times W^{2,p}$ to the conformal constraint equations with $\epsilon$.
\end{proof}

\subsection{Convergence of subcritical solutions.}

Let $1>\epsilon\geq 0$ and let $(\phi,W)$ be the solution found previously. We define an energy of this solution as
\[
\gamma(\phi,W) := \int_M |LW|^2 + \sup_{\partial M}\left(  \phi^{N+1+\frac{np}{p-n}}|\partial_\nu\phi|\right)
\] and set $\widetilde \gamma = \max\{\gamma,1\}$. We want to show that $\phi$ has an upper bound depending only on $\widetilde \gamma$ but otherwise independent of $\phi$ or $W$. Note that we allow $\epsilon =0$ here.

To do this, we transform the conformal equations by $\widetilde \gamma$. Since for this section we won't need the boundary conditions, we will not write the boundary equations. We rescale $\phi$, $W$ and $\sigma$ as
\begin{equation}\label{eq:subcriticalRescaling}
\widetilde\phi = \widetilde\gamma^{-\frac{1}{2N}} \phi, \,\,\,\, \widetilde W = \widetilde\gamma^{-\frac{1}{2}} W, \,\,\,\, \widetilde\sigma = \widetilde\gamma^{-\frac{1}{2}} \sigma.
\end{equation} The conformal constraint equations with $\epsilon$ can then be renormalized as
\begin{equation}\label{newLich}
\frac{1}{\widetilde\gamma^{1/n}} \left(\frac{-4(n-1)}{n-2}\Delta \widetilde\phi + R\widetilde\phi\right) + \frac{n-1}{n} \tau^2 \widetilde\phi^{N-1} = |\widetilde\sigma+L\widetilde W|^2 \widetilde\phi^{-N-1},
\end{equation}\begin{equation}\label{newVect}
\di L\widetilde W = \frac{n-1}{n} \widetilde\gamma^{-\frac{\epsilon}{2N}}\widetilde\phi^{N-\epsilon} d\tau
\end{equation} by dividing through by an appropriate power of $\widetilde\gamma$. Notice that because of our rescaling,
\[
\int_M |L\widetilde W|^2\, dv \leq 1
\] and
\begin{equation}\label{eq:energyBound}
\frac{1}{\widetilde\gamma^{1/n}}\int_{\partial M} \widetilde{\phi}^{N+1+k} \partial_\nu \widetilde\phi = \frac1{\widetilde\gamma} \int_{\partial M} \phi^{N+1} \partial_\nu \phi \geq -C
\end{equation} for any $0\leq k<\frac{np}{p-n}$ and some constant $C$ independent of $\phi$. The choice of the boundary terms in our energy $\gamma$ was to bound this second quantity \eqref{eq:energyBound}. In order to prove Lemma \ref{subcriticalInequality}, we first need a lemma.

\begin{lem} \label{bound}
Suppose  $g\in W^{2,q}$, $q \geq \frac{n}{2}\left(2+\frac{np}{p-n}\right)$ (or just $g\in C^2$). Then, for any $0 \leq k_i < \frac{np}{p-n}$, the following inequality holds, with $C_i,C>0$ independent of $\epsilon, \phi$ and $W$.
\[
-{C_i} \left( \int_M \widetilde\phi^{2N+Nk_i} dv\right)^{\frac{N+2+Nk_i}{2N+Nk_i}} + \tau_0^2\int_M  \widetilde \phi^{2N+Nk_i} \leq C + \int_M |\widetilde \sigma + L\widetilde W|^2 \widetilde\phi^{Nk_i}
\]
\end{lem}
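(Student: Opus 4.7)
The plan is to test the rescaled Lichnerowicz equation \eqref{newLich} against the weight $\widetilde\phi^{N+1+Nk_i}$ and integrate over $M$. Multiplying \eqref{newLich} by $\widetilde\phi^{N+1+Nk_i}$ converts the right side immediately into $\int_M |\widetilde\sigma+L\widetilde W|^2 \widetilde\phi^{Nk_i}$, exactly the term on the right of the claim, so the whole task is to produce an appropriate lower bound for the left side.

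First, I would integrate by parts on the Laplacian contribution. This produces a nonnegative gradient term $\frac{4(n-1)(N+1+Nk_i)}{(n-2)\widetilde\gamma^{1/n}}\int_M \widetilde\phi^{N+Nk_i}|\nabla\widetilde\phi|^2$, which can be dropped, together with a boundary term $-\frac{4(n-1)}{(n-2)\widetilde\gamma^{1/n}}\int_{\partial M}\widetilde\phi^{N+1+Nk_i}\partial_\nu\widetilde\phi$. The energy bound \eqref{eq:energyBound} is designed exactly to control this kind of expression for $0\le Nk_i < \frac{np}{p-n}$, so the boundary contribution is bounded below by $-C$. The $\tau^2$ piece is already of the desired form: $\frac{n-1}{n}\int_M \tau^2\widetilde\phi^{2N+Nk_i}\ge \tau_0^2\int_M \widetilde\phi^{2N+Nk_i}$ after absorbing $\frac{n-1}{n}$ into $\tau_0^2$.

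The only term that is not immediately signed is the scalar curvature term $\widetilde\gamma^{-1/n}\int_M R\widetilde\phi^{N+2+Nk_i}$, and this is what produces the negative contribution in the claim. I would bound it using H\"older's inequality with conjugate exponents $r_i = \frac{2N+Nk_i}{N-2}$ and $r_i' = \frac{2N+Nk_i}{N+2+Nk_i}$, giving
\[
\left|\int_M R\,\widetilde\phi^{N+2+Nk_i}\right|\le \|R\|_{L^{r_i}}\left(\int_M\widetilde\phi^{2N+Nk_i}\right)^{\frac{N+2+Nk_i}{2N+Nk_i}}.
\]
A short computation using $N=\frac{2n}{n-2}$ shows that $r_i = \frac{n}{2}(2+k_i)$; since $k_i < \frac{np}{p-n}$, we have $r_i < \frac{n}{2}\bigl(2+\frac{np}{p-n}\bigr)$, so the regularity hypothesis $g\in W^{2,q}$ with $q\ge \frac{n}{2}\bigl(2+\frac{np}{p-n}\bigr)$ (or $g\in C^2$) is precisely what makes $R\in L^{r_i}$ uniformly in $i$. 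Since $\widetilde\gamma\ge 1$, the prefactor $\widetilde\gamma^{-1/n}$ is harmless, and this is the term $C_i\bigl(\int_M\widetilde\phi^{2N+Nk_i}\bigr)^{(N+2+Nk_i)/(2N+Nk_i)}$.

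Reassembling: drop the nonnegative gradient contribution, move the bounded boundary piece and the H\"older bound on the $R$-term to the right, and keep the $\tau_0^2\int\widetilde\phi^{2N+Nk_i}$ on the left; the result is the stated inequality, with all constants independent of $\epsilon$, $\phi$, $W$ because the bounds on the boundary term and on $\|R\|_{L^{r_i}}$ depend only on the seed data and the fixed energy normalization. The main delicate point of the argument is matching the H\"older exponent $r_i$ to the Sobolev regularity of $R$; indeed, the explicit threshold $q\ge \frac{n}{2}\bigl(2+\frac{np}{p-n}\bigr)$ in the hypothesis is dictated precisely by this estimate. Everything else is a standard integration-by-parts computation.
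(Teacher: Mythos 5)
Your proof is correct and follows essentially the same route as the paper: multiply \eqref{newLich} by $\widetilde\phi^{N+1+Nk_i}$, integrate by parts, discard the nonnegative gradient term, bound the boundary integral using the energy normalization \eqref{eq:energyBound}, apply H\"older to the $R$-term with exponent $\frac{n}{2}(2+k_i)$, and use $\widetilde\gamma\ge 1$. Your explicit verification that this exponent equals $\frac{2N+Nk_i}{N-2}$ and is dominated by $\frac{n}{2}\left(2+\frac{np}{p-n}\right)$ nicely fills in a step the paper leaves implicit.
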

\begin{proof}
We multiply equation (\ref{newLich}) by $\widetilde\phi^{N+1+Nk_i}$ and integrate over $M$ to get
\begin{multline*}
\frac{1}{\widetilde\gamma^{1/n}}\int_M \left(-c_n\widetilde\phi^{N+1+Nk_i}\Delta \widetilde\phi + R \widetilde\phi^{N+2+Nk_i}\right) dv \\+ \frac{n-1}{n}\int_M \tau^2 \widetilde\phi^{2N+Nk_i} dv = \int_M |\widetilde\sigma + L \widetilde W|^2 \widetilde\phi^{Nk_i}dv
\end{multline*} where $c_n = \frac{4(n-1)}{n-2}$. After integrating by parts, we get
\begin{multline}\label{eq:subcriticalInequality1}
\frac{1}{\widetilde\gamma^{1/n}}\int_M c_n(N+1+Nk_i) \widetilde\phi^{N+Nk_i}|d\widetilde\varphi|^2 dv - \frac{c_n}{\widetilde\gamma^{1/n}}\int_{\partial M} \widetilde{\phi}^{N+1+k_i} \partial_\nu \widetilde\phi dv \\+ \frac{1}{\widetilde\gamma^{1/n}}\int_M R \widetilde\phi^{N+2+Nk_i} dv + \frac{n-1}{n}\int_M \tau^2 \widetilde\phi^{2N+Nk_i} dv \leq \int_M |\widetilde\sigma + L \widetilde W|^2 \widetilde\phi^{Nk_i}dv.
\end{multline} Since the first integral in \eqref{eq:subcriticalInequality1} is positive, we can get rid of it. The second integral is greater than -C by our choice of $\widetilde \gamma$, and so we can replace it by -C and then add it to the right hand side. We use H\"older's inequality on the third integral, with the exponent on $R$ being $\frac{n}{2}(2+k_i)$. Our assumptions on $g$ and $k_i$ guarantee that this integral is finite. Thus we get
\[
-\frac{C_i}{\widetilde\gamma^{1/n}} \left( \int_M \widetilde\phi^{2N+Nk_i} dv\right)^{\frac{N+2+Nk_i}{2N+Nk_i}} + \tau_0^2\int_M  \widetilde \phi^{2N+Nk_i} \leq C + \int_M |\widetilde \sigma + L\widetilde W|^2 \widetilde\phi^{Nk_i}
\] where $C_i$ are some constants depending on $k_i$ and $R$ and where $\tau_0$ is the (positive) infimum of $\tau$. Using $\widetilde\gamma \geq 1$, we get the desired inequality.
\end{proof}

\begin{prop}
For $1>\epsilon\geq 0$,
\[
\phi < C \widetilde\gamma^{1/2N}
\] for some constant $C$ independent of $\epsilon, \phi$ and $W$.
\end{prop}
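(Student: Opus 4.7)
The plan is to reduce the claim to showing $\|\widetilde\phi\|_\infty \leq C$ uniformly in $\epsilon,\phi,W$, which is equivalent to the stated bound via the rescaling \eqref{eq:subcriticalRescaling}. By construction $\|L\widetilde W\|_2 \leq 1$, $\|\widetilde\sigma\|_\infty \leq \|\sigma\|_\infty$, and all boundary data in \eqref{newVect} scaled by $\widetilde\gamma^{-1/2} \leq 1$ are uniformly bounded. I would then execute a Moser-style bootstrap alternating between Lemma \ref{bound} on the rescaled Lichnerowicz equation \eqref{newLich} and Theorem \ref{schauderEstimateTheorem} on the rescaled vector equation \eqref{newVect}.

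First I would run the base case: apply Lemma \ref{bound} with $k_0 = 0$. Since $\|\widetilde\sigma + L\widetilde W\|_2$ is uniformly bounded and the exponent $(N+2)/(2N)<1$ for $n\geq 3$, Young's inequality absorbs the subleading $\|\widetilde\phi\|_{2N}^{N+2}$ term into $\tau_0^2\|\widetilde\phi\|_{2N}^{2N}$, yielding a uniform $L^{2N}$ bound on $\widetilde\phi$.

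Next I would iterate. Given $\|\widetilde\phi\|_{q_i}\leq C$ with $q_i \geq 2N$, Theorem \ref{schauderEstimateTheorem} gives $\|L\widetilde W\|_{W^{1,r_i}}\leq C$ with $r_i = q_i/(N-\epsilon)$ (using $\widetilde\gamma^{-\epsilon/(2N)},\widetilde\gamma^{-1/2} \leq 1$), and Sobolev embedding upgrades this to $L\widetilde W \in L^{s_i}$ uniformly, where $s_i=nr_i/(n-r_i)$ if $r_i<n$ and $s_i=\infty$ otherwise. Applying Lemma \ref{bound} with $k_{i+1} = q_i(s_i-2)/(Ns_i)$ and H\"older with conjugate exponents $s_i/2,\,s_i/(s_i-2)$ bounds the right-hand side by $\|\widetilde\sigma+L\widetilde W\|_{s_i}^2\,\|\widetilde\phi\|_{q_i}^{Nk_{i+1}} \leq C$, producing a uniform $L^{q_{i+1}}$ bound with $q_{i+1} = 2N + Nk_{i+1} > q_i$. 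The restriction $k_{i+1} < np/(p-n)$ from Lemma \ref{bound} is never binding, since Assumption \ref{assumptions} only requires $p>n$ and picking $p$ close to $n$ makes $np/(p-n)$ as large as needed. After finitely many iterations $r_i>n$, so $L\widetilde W\in L^\infty$ uniformly and $\widetilde\phi$ lies in every $L^q$ uniformly.

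The final step is to promote these $L^q$ bounds to an $L^\infty$ bound on $\widetilde\phi$ via the maximum principle on \eqref{newLich}. At an interior maximum $-\Delta\widetilde\phi\geq 0$ together with \eqref{newLich} gives $\tfrac{n-1}{n}\tau_0^2\widetilde\phi^{2N} \leq |\widetilde\sigma+L\widetilde W|^2 + \widetilde\gamma^{-1/n}|R|\widetilde\phi^{N+2}$; since $2N>N+2$, this forces $\widetilde\phi\leq C$ at such a max. A maximum on $\MD$ is bounded trivially by $\widetilde\gamma^{-1/(2N)}\|\phi_D\|_\infty \leq \|\phi_D\|_\infty$. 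The hard part, and the main obstacle, is a maximum on $\MN$: the rescaled Robin-type condition has mixed-sign terms of different scaling in $\widetilde\gamma$, so ruling out boundary blow-up requires combining Hopf's lemma with the sign conditions on the $b_i$ from Assumption \ref{assumptions} and — crucially — the boundary component of $\gamma$, which by construction supplies a uniform bound on $\widetilde\gamma^{-1/n}\sup_{\partial M}(\widetilde\phi^{N+1+k}|\partial_\nu\widetilde\phi|)$ for every $0\leq k<np/(p-n)$, exactly the scale at which the competing Robin terms become comparable.
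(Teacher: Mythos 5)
Your bootstrap scheme follows the paper's proof quite closely in overall structure: start from the $k_0 = 0$ application of Lemma \ref{bound} to get an $L^{2N}$ bound on $\widetilde\phi$, alternate between the rescaled vector equation and Lemma \ref{bound} to push up the integrability of $\widetilde\phi$ until $L\widetilde W$ lands in $L^\infty$, and then finish with a maximum-principle argument split over the interior, $\MD$, and $\MN$. Two of your steps, though, have genuine problems.

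First, the treatment of the exponent constraint $k_i < np/(p-n)$. You argue it is never binding because ``picking $p$ close to $n$ makes $np/(p-n)$ as large as needed,'' but $p$ is the fixed regularity exponent for the seed data and the metric; you cannot shrink it mid-argument without re-verifying that the whole setup (including the norm of $d\tau$ appearing in the vector-equation H\"older step, and the energy $\widetilde\gamma$ itself, whose boundary term depends explicitly on $np/(p-n)$) still behaves, and without re-deriving the solution $(\phi,W)$ in the smaller Sobolev space. What actually controls this in the paper's proof is a termination estimate: the ratio $p_{i+1}/p_i = 1 + 2(1/n - 1/p) > 1$ shows $p_i\to\infty$, and the first $q_i$ exceeding $n$ is bounded by $5n/3$, so only finitely many $k_i$ are ever used and they all sit below $np/(p-n)$, exactly the range permitted by the hypothesis $g\in W^{2,q}$ with $q\geq\tfrac n2\left(2+\tfrac{np}{p-n}\right)$ (or $g\in C^2$). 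Your bookkeeping also drops the $d\tau$ factor: from the vector equation $\di L\widetilde W$ is a product $\widetilde\phi^{N-\epsilon}\,d\tau$ with $d\tau\in L^p$, so the H\"older exponent for $\di L\widetilde W$ is $1/q = (N-\epsilon)/q_i + 1/p$, not $q_i/(N-\epsilon)$; this changes the iteration ratio and is precisely where the $1/n - 1/p$ in the termination estimate comes from.

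Second, and more substantively, the $\MN$ boundary maximum. You correctly identify it as the hard case but attribute the resolution to the boundary component of $\widetilde\gamma$, i.e.\ the uniform bound on $\widetilde\gamma^{-1/n}\sup_{\partial M}\left(\widetilde\phi^{N+1+k}|\partial_\nu\widetilde\phi|\right)$. That term is needed for a different purpose: it controls the boundary integral $-\int_{\partial M}\widetilde\phi^{\,N+1+k_i}\partial_\nu\widetilde\phi$ that appears after integrating by parts in the proof of Lemma \ref{bound}, and plays no role in bounding $\widetilde\phi$ pointwise on $\MN$. The mechanism the paper actually uses at a boundary maximum $x\in\MN$ exceeding the interior bound is the Hopf lemma (which applies because $\Delta\widetilde\phi>0$ near $x$), giving $\partial_\nu\phi(x)>0$, hence from the Robin condition $\widetilde f(\phi)(x) < 0$; the standing assumption from the start of Section \ref{sec:LimitEquation} that $F_2(\Lambda)\geq 0$ for all sufficiently large constants $\Lambda$ then forces an \emph{unrescaled}, absolute bound $\phi(x)\leq C$, which trivially gives $\widetilde\phi(x)\leq C$ since $\widetilde\gamma\geq 1$. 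Without invoking that $F_2$-assumption your sketch does not close; the ``sign conditions on the $b_i$'' alone are not enough, since, e.g., in the defocusing case with $b_\tau\equiv 0$ and $b_\theta\equiv 0$ the condition $\widetilde f(\phi)<0$ imposes no upper bound on $\phi$.

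Otherwise, your Step 1, the alternating Moser iteration structure, and the interior maximum-principle argument (including the observation $2N > N+2$) match the paper's proof.
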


Note that this implies Lemma \ref{subcriticalInequality}.

\begin{proof}
For this proof, ``bounded'' will mean bounded independent of $\epsilon, \phi$ and $W$.

\textbf{Step 1.} \emph{$L^1$ bound on $\widetilde\phi^{2N}$.} Using Lemma \ref{bound} with $k_i = 0$,
\begin{align*}
-{C_i} \left( \int_M \widetilde\phi^{2N} dv\right)^{\frac{N+2}{2N}} + \tau_0^2 \int \widetilde\phi^{2N} &\leq C+ 2\int |\widetilde\sigma|^2\, dv +2\int_M |L\widetilde W|^2\, dv\\
&\leq C+2 + 2\int_M |\widetilde\sigma|^2\, dv.
\end{align*} By the rescaling \eqref{eq:subcriticalRescaling} of $\widetilde \sigma$ and recalling that $\widetilde\gamma\geq 1$, $\int |\widetilde\sigma|^2 \, dv$ is bounded. Since $\frac{N+2}{2N} = \frac{n-1}{n} <1$, this implies the $L^1$ bound on $\widetilde\phi^{2N}$.

\textbf{Step 2.} \emph{Bounds for $LW$.} Suppose by induction $\widetilde\phi^{p_iN}$ is bounded in $L^1$ for some $p_i\geq 2$. Let $\frac{1}{q_i} = \frac{1}{p_i} + \frac{1}{p}$ and $\frac{1}{r_i} = \frac{1}{q_i} - \frac{1}{n}$. If $q_i>n$ we continue on to step 4. We can make it so that $q_i$ is never exactly $n$, as argued at the end of step 3.

Young's inequality gives us
\[
\widetilde\phi^{N-\epsilon} \leq \frac{N-\epsilon}{N} \widetilde\phi^N + \frac{\epsilon}{N}
\] and so
\[
\|\widetilde\phi^{N-\epsilon}\|_{p_i} \leq \frac{N-\epsilon}{N} \|\widetilde\phi^N\|_{p_i} + \frac{\epsilon}{N} \vol(M)^{\frac{1}{p_i}} \leq \|\widetilde\phi^N\|_{p_i} + \frac{1}{N} \max\{1, \vol(M)\}.
\]

Using the rescaled vector equation \eqref{newVect} we get
\begin{align*}
\| \di L\widetilde W\|_{q_i} &\leq C \|\widetilde\phi^{N-\epsilon} d\tau\|_{{q_i}} \\
&\leq C \| \widetilde\phi^{N-\epsilon}\|_{p_i} \|d\tau\|_{p}\\
&\leq C \left(\|\widetilde\phi^{p_i N}\|_{1}^{1/p_i} + \frac{1}{N} \max\{1, \vol(M)\}\right)\|d\tau\|_{p}.
\end{align*} The second line is H\"older's inequality with $p_i$ and $p$.

For $q_i<n$, we then get
\begin{equation} \label{LWbound}
\|L\widetilde W\|_{{r_i}} \leq C\|\widetilde W\|_{2,q_i} \leq C\left(\| \di L\widetilde W\|_{{q_i}} + \|\widetilde X_\N \|_{W^{1-\frac{1}{q_i},q_i}(\WMN)} + \|\widetilde X_\D\|_{W^{2-\frac{1}{q_i},q_i}(\WMD)}\right)
\end{equation} where $C$ changes from term to term. This is a variation of Theorem \ref{schauderEstimateTheorem}. The first inequality is by Sobolev embedding since $q_i<n$. The first term is bounded by the previous set of inequalities. As shown at the end of step 3, $q_i <5n/3$, and so $W^{1-\frac{3}{5n},\frac{5n}{3}} \subset W^{1-\frac{1}{q_i},q_i}$ by Sobolev embedding. Thus $\|X_\N\|$ and $\|X_\D\|$ are bounded.

Thus $\|L\widetilde W\|_{r_i}$ is bounded.

\textbf{Step 3.} \emph{Induction on $p_i$.} By Lemma \ref{bound}, $\widetilde\phi^{2N+Nk_i}$ is bounded in $L^1$ as long as
\[
\int_M (|\widetilde \sigma|^2 + |L\widetilde W|^2) \widetilde\phi^{Nk_i}
\] is bounded. Choose $k_i$ by $\frac{2}{r_i} + \frac{k_i}{p_i} = 1$. Using H\"older's inequality with exponents $\frac{r_i}{2}$ and $\frac{p_i}{k_i}$, we get
\[
\int_M (|\widetilde \sigma|^2 + |L\widetilde W|^2) \widetilde\phi^{Nk_i} \leq (\|\widetilde\sigma\|_{{r_i}} + \|L\widetilde W \|_{{r_i}}) \|\widetilde\phi^{p_i N}\|_{1}^{1/p_i}.
\] Since $\sigma \in W^{1,p} \subset L^\infty$, by our induction assumption and by step 2, this quantity is bounded.

Thus $\widetilde \phi^{2N+Nk_i}$ is bounded in $L^1$. Let $p_{i+1} = 2+k_i$. We see that
\[
\frac{p_{i+1}}{p_i} = 1 + 2\left(\frac{1}{n} - \frac{1}{p}\right) >1
\] and so $p_i \to \infty$. Since $p>n$, there is an $i_0$ such that $q_{i_0}\geq n$ and $q_{i_0-1} < n$. If $q_i = n$, we reduce the power $p_i$ somewhat to prevent this, since $\widetilde\phi^{p_i N}$ will still be bounded in $L^1$. If $q_i >n$, we continue to step 4. Note that this definition of $p_{i+1}$ guarantees that $q_{i_0}$, the first $q>n$, is less than $\frac{5n}{3}$. This can be seen by a straightforward calculation that we omit for brevity.

\textbf{Step 4.} \emph{Finishing.} Since $q_i >n$, similar to equation \eqref{LWbound} from step 2,
\[
\|L\widetilde W\|_{\infty} \leq C\|\widetilde W\|_{{2,q_i}} \leq C\left(\| \di LW\|_{{q_i}} + \|X_\N \|_{W^{1-\frac{1}{q_i},q_i}(\WMN)} + \|X_\D\|_{W^{2-\frac{1}{q_i},q_i}(\WMD)}\right)
\] which is bounded as before. Thus $|L\widetilde W|$ has an upper bound.

From standard elliptic regularity, $\widetilde \phi$ is in $C^{1}\supset W^{2,p}$, since all the coefficients in the rescaled Lichnerowicz equation \eqref{newLich} are at least $L^{p}$. Suppose that $\widetilde \phi$ has an internal maximum at some point $x\in M$. At such a point,
\[
\frac{1}{\widetilde \gamma^{1/n}} R \widetilde\phi + \frac{n-1}{n} \tau^2 \widetilde\phi^{N-1} \leq |\widetilde\sigma + L\widetilde W|^2 \widetilde\phi^{-N-1}
\] which simplifies to
\begin{equation}\label{phiBoundInequality}
\frac{1}{\widetilde \gamma^{1/n}} R \widetilde\phi^{N+2} + \frac{n-1}{n} \tau^2 \widetilde\phi^{2N} \leq |\widetilde\sigma + L\widetilde W|^2.
\end{equation} After a conformal change to make $R$ continuous, we see that $\widetilde \phi$ must be bounded.

However, there still could be a larger value on the boundary $\partial M$. If $\sup_M \widetilde\phi$ is located on $\MD$, $\widetilde \phi$ is bounded, since $\phi_D$ is independent of $\phi$ and $W$ and is thus bounded. If the maximum is located on $\MN$, we need to show $\widetilde\phi$ is bounded there too.

Since $\widetilde\phi \in W^{2,p}$, $\gamma_N \partial_\nu \widetilde\phi \in C^0(\MN)$. Also, since $\MN$ is a closed manifold, $\gamma_N\widetilde\phi$ has a maximum on $\MN$. We drop the $\gamma_N$ for the rest of this discussion. Suppose that the maximum of $\widetilde\phi$ is at $x\in\MN$ and is larger than the bound implied by the inequality \eqref{phiBoundInequality}. Then $\Delta \widetilde \phi>0$ in some neighborhood of $x$. As discussed in the proof of Theorem \ref{GreensExistence}, the Hopf lemma applies. In particular, we get $\partial_\nu \widetilde\phi(x) >0$ and thus $\partial_\nu \phi(x) >0$.

Since $\partial_\nu\phi + \widetilde f(\phi) = 0$, we see that
\[
b_H \phi +  b_\theta \phi^e + b_\tau \phi^{N/2} + b_w \phi^{-N/2} <0
\] at $x$. However, this sets a different upper bound on $\phi$ by our assumption in the introduction of this section that $F_2(\Lambda)>0$ for large enough constants $\Lambda$.

By recalling that $\widetilde\phi = \widetilde\gamma^{-\frac{1}{2N}} \phi$, we have proven the proposition.
\end{proof}

Now that we have an upper bound on solutions, we will consider what happens as $\epsilon\to 0$.

\begin{proof}[Proof of Lemma \ref{criticalExistence}]
From Lemma \ref{subcriticalInequality}, we know that the $\phi_i$ are uniformly bounded in $L^\infty(\overline{M})$. From the vector problem \eqref{vectorSystem}, the sequence $W_i$ is uniformly bounded in $W^{2,p}$, after using
\[
\phi^{N-\epsilon} \leq \frac{N-\epsilon}{N} \phi^N + \frac{\epsilon}{N} \leq \phi^N + \frac{1}{N}
\] if $\epsilon>0$. By Sobolev embeddings, the map $L:W^{2,p} \to L^\infty$ is compact. Thus, up to selecting a subsequence, we can assume that the sequence $LW_i$ converges in $L^q$ for any $q\geq 1$ to some $LW_\infty$. Thus by the continuity of the solution map for the Lichnerowicz problem (Lemma \ref{LichnerowiczContinuity}), the functions $\phi_i$ converge in $W^{2,p}$ (and thus in $L^\infty$) to some $\phi_\infty$. Then using the vector problem again, we get that the sequence $W_i$ converges in the $W^{2,p}$ norm. Since $(\phi_i, W_i)$ converge in $W^{2,p}$, $(\phi_\infty, W_\infty)$ are solutions to the conformal constraint equations \eqref{lichSystem}-\eqref{vectorSystem}. Note that convergence in $W^{2,p}$ in the interior gives the appropriate convergence on the boundary since, for instance, $\|\gamma_D\phi\|_{W^{2-1/p,p}(\MD)} \leq C\|\phi\|_{{2,p}}$. Thus $\phi_\infty, W_\infty$ also fulfill the boundary conditions.
\end{proof}

In other cases (cf. \cite{DGH11}) assuming that the energies $\widetilde\gamma$ are unbounded leads to a nontrivial solution of a PDE called the ``limit equation.'' Thus, either this PDE has a solution or the conformal constraint equations do. However, finding the solution to the limit equation is harder in the compact with boundary case. Finding the solution relies on finding a sub/supersolution to the modified Lichnerowicz equation \eqref{newLich}. While the proof that it is a sub/supersolution in the interior of $M$ goes through exactly the same, the same type of argument on $\partial M$ does not work. On the Dirichlet portion $\MD$, for instance, $\widetilde\phi = \widetilde\phi_D \to 0$ as the energy goes to infinity. Thus any subsolution must be non-positive, which makes the subsolution we would normally take not work. Similar problems occur on Neumann part $\MN$. We were not able to resolve these difficulties. However, the other results may prove useful, and so we included this section in the paper.

\section{Acknowledgements}

We would like to thank James Isenberg for useful discussions on the subject. This research was partially supported by the NSF grant DMS-1263431. This material is based upon work supported by the National Science Foundation under Grant No. 0932078 000, while the author was in residence at the Mathematical Sciences Research Institute in Berkeley, California, during the Fall of 2013.

\appendix

\section{Multiplication, Composition and a Green's Function}

\begin{lem} \label{multLemma} \cite[Lem 28]{HNT09}
Let $s_i\geq s$ with $s_1 + s_2 \geq 0$, and $1 \leq p, p_i \leq \infty$ ($i = 1,2$) be real numbers satisfying
\[
s_i - s \geq n \left(\frac{1}{p_i} - \frac{1}{p}\right), \,\,\,\,\, s_1 + s_2 - s > n \left( \frac{1}{p_1} + \frac{1}{p_2} - \frac{1}{p}\right),
\] where the strictness of the inequalities can be interchanged if $s \in \N_0$. In case $\min(s_1,s_2)<0$, in addition let $1<p,p_i < \infty$, and let
\[
s_1 + s_2 \geq n \left( \frac{1}{p_1}+ \frac{1}{p_2} -1\right).
\] Then, the pointwise multiplication of functions extends uniquely to a continuous (and thus bounded for $s_i,s\geq 0$) bilinear map
\[
W^{s_1,p_1}(M) \otimes W^{s_2,p_2}(M) \to W^{s,p}(M).
\]
\end{lem}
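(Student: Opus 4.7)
The plan is to reduce the statement to a Euclidean estimate via a partition of unity and then combine Sobolev embedding with H\"older's inequality, extending to fractional orders by interpolation and to negative orders by duality. Since $M$ is compact, pick a finite atlas $\{(U_j, \varphi_j)\}$ and a subordinate partition of unity $\{\chi_j\}$; for $u \in W^{s_i,p_i}(M)$ the function $(\chi_j u)\circ\varphi_j^{-1}$ has compact support in $\mathbb{R}^n$, and the $W^{s_i,p_i}(M)$ norm is equivalent to $\sum_j \|(\chi_j u)\circ\varphi_j^{-1}\|_{W^{s_i,p_i}(\mathbb{R}^n)}$. Because multiplication preserves supports, it suffices to establish the multiplication estimate on $\mathbb{R}^n$ for compactly supported inputs.

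On $\mathbb{R}^n$, I would first treat the case where $s \in \mathbb{N}_0$ and $s_1, s_2$ are non-negative integers with $s_i \geq s$. By the Leibniz rule, every derivative $\partial^\alpha(fg)$ with $|\alpha| \leq s$ is a finite linear combination of terms $\partial^\beta f \, \partial^\gamma g$ with $|\beta| + |\gamma| \leq s$. The hypothesis $s_i - s \geq n(1/p_i - 1/p)$ provides, for every such splitting, exponents $q_1, q_2$ with $1/q_1 + 1/q_2 = 1/p$ such that the Sobolev embeddings $W^{s_1 - |\beta|, p_1} \hookrightarrow L^{q_1}$ and $W^{s_2 - |\gamma|, p_2} \hookrightarrow L^{q_2}$ both hold; the strict scaling inequality $s_1 + s_2 - s > n(1/p_1 + 1/p_2 - 1/p)$ provides the slack needed to avoid the $L^\infty$ endpoint in at least one factor. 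H\"older then gives the $L^p$ bound on each derivative term, and summing yields the $W^{s,p}$ bound.

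To pass to non-integer indices $s, s_1, s_2 \geq 0$, I would invoke complex interpolation: the pointwise product is a bilinear map, so once it is bounded on an appropriate family of integer-index triples, bilinear complex interpolation between Bessel potential spaces $W^{s,p} = H^{s,p}$ (valid for $1 < p < \infty$) gives the fractional estimate. For $\min(s_1,s_2) < 0$ I would dualize: if $s_2 < 0$ and $1 < p_2 < \infty$, identifying $W^{s_2, p_2}$ with the dual of $W^{-s_2, p_2'}$ reduces the estimate $\|fg\|_{W^{s,p}} \leq C \|f\|_{W^{s_1,p_1}} \|g\|_{W^{s_2,p_2}}$ to a non-negative-index estimate for multiplication $W^{s_1,p_1} \otimes W^{-s,p'} \to W^{-s_2, p_2'}$, which falls into the range already handled; the extra assumption $s_1 + s_2 \geq n(1/p_1 + 1/p_2 - 1)$ is exactly the scaling condition needed for the dualized Sobolev hypothesis.

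The main obstacle I anticipate is the bookkeeping at the critical scaling, i.e., when the inequalities are saturated. The clause allowing strictness to be interchanged when $s \in \mathbb{N}_0$ reflects the familiar fact that the integer Sobolev embedding $W^{k,p} \hookrightarrow L^q$ fails at the endpoint $q = \infty$, whereas the Leibniz rule redistributes derivatives with enough flexibility to tolerate a nonstrict upper inequality there. In the fractional scale, handling the borderline case cleanly typically requires paraproducts (Bony decomposition) or working directly within Besov/Triebel--Lizorkin spaces, rather than a brute-force H\"older argument. That paraproduct/interpolation step is the technically heaviest piece of the argument.
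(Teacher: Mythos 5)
The paper does not prove this lemma at all: it is stated verbatim as a citation of \cite[Lem.~28]{HNT09}, which in turn rests on classical multiplication theorems for Sobolev spaces (see also the detailed treatment in work of Behzadan and Holst). So there is no internal proof in this paper to compare against; the question is simply whether your sketch is a plausible route to the cited result.

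Your outline is the standard strategy and the pieces fit together correctly in broad strokes: localization by a partition of unity reduces to $\R^n$, the nonnegative-integer case is Leibniz plus H\"older plus Sobolev embedding, duality handles the case $\min(s_1,s_2)<0$ (note that since $s\leq\min(s_1,s_2)$, having $\min(s_1,s_2)<0$ forces $s<0$, so the dualized triple $(s_1,-s,-s_2)$ really does have all nonnegative indices and the reduction is not circular), and the remaining fractional cases are the technically heavy part. The genuine gap is the one you flag yourself: for non-integer $s$, a bare bilinear complex interpolation between integer-index endpoints does not immediately produce the full two-parameter family of estimates in the lemma, because both the smoothness and the integrability indices vary and because interpolation will not reach the borderline (equality) cases. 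The rigorous proofs in the literature go through Littlewood--Paley theory and paraproduct (Bony) decompositions, often in the Triebel--Lizorkin or Besov scales, and that machinery is doing real work rather than being an optional refinement. So your proposal is a correct roadmap but is missing the paraproduct argument that carries the load in the fractional and critical-scaling cases, and one should also say a word about why the product of Schwartz functions is dense so that ``extends uniquely'' is justified.
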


\begin{cor} \label{boundedMult}
If $p >1$ and $s>n/p$, then $W^{s,p}$ is a Banach algebra. Moreover, if in addition $q>1$ and $k\in [-s,s]$ satisfy $k-\frac{n}{q} \in [-n-s+\frac{n}{p}, s- \frac{n}{p}]$, then
\[
\|fg\|_{k,q} \leq C \|f\|_{k,q} \|g\|_{s,p}
\] for any $f\in W^{k,q}$, $g\in W^{s,p}$ and some constant $C$ independent of $f$ and $g$.
\end{cor}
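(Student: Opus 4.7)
Both parts of the corollary should follow as direct applications of Lemma \ref{multLemma} after choosing the exponents carefully, so my plan is to reduce everything to that lemma and then track the endpoint conditions on $k-n/q$. There are no analytic obstructions here; the work is purely in matching parameters.

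For the Banach algebra claim I would apply Lemma \ref{multLemma} with $s_1 = s_2 = s$, $p_1 = p_2 = p$ and target space $W^{s,p}$. The hypotheses $s_i \geq s$, $s_1 + s_2 \geq 0$, and the first dimensional inequality $s_i - s \geq n(1/p_i - 1/p)$ then reduce to trivialities, while the strict inequality $s_1 + s_2 - s > n(1/p_1 + 1/p_2 - 1/p)$ becomes exactly $s > n/p$, which is assumed. Since $\min(s_1, s_2) = s \geq 0$, the auxiliary condition in the lemma is vacuous, and one concludes that pointwise multiplication extends to a bounded bilinear map $W^{s,p} \otimes W^{s,p} \to W^{s,p}$.

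For the weighted estimate I would apply the same lemma with $s_1 = k$, $p_1 = q$, $s_2 = s$, $p_2 = p$, and target space $W^{k,q}$ (so the role of ``$s$'' in Lemma \ref{multLemma} is played by $k$). The hypothesis $k \in [-s,s]$ supplies $s_i \geq k$ and $s_1 + s_2 = k + s \geq 0$ at once. The first dimensional inequality is vacuous for $i=1$ and, for $i=2$, reads $s - k \geq n(1/p - 1/q)$, i.e.\ $k - n/q \leq s - n/p$, which is the upper endpoint of the interval in the hypothesis. The strict inequality again reduces to the standing $s > n/p$.

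The only step requiring attention, and the one place where the specific interval for $k-n/q$ enters, is the case $k < 0$: then $\min(s_1, s_2) < 0$ and Lemma \ref{multLemma} imposes the additional constraint $s_1 + s_2 \geq n(1/p_1 + 1/p_2 - 1)$. I would rearrange this to $k - n/q \geq -n - s + n/p$ and observe that this is precisely the lower endpoint of the hypothesis on $k - n/q$. (The finiteness of $p$ and $q$ needed here is implicit in our standing use of the exponents; if one is worried about $p=\infty$ or $q=\infty$, those cases either do not arise or are handled by a trivial H\"older estimate.) This endpoint matching is the ``main obstacle'' in the sense that it explains the specific form of the interval in the statement, but it is a one-line computation rather than a genuine analytic difficulty.
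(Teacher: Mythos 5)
Your proposal is correct and is exactly the intended derivation: both parts are obtained by specializing Lemma \ref{multLemma}, first with $s_1=s_2=s$, $p_1=p_2=p$ (for the algebra property), and then with $s_1=k$, $p_1=q$, $s_2=s$, $p_2=p$ and target $W^{k,q}$, with the three numerical hypotheses of the lemma becoming respectively $k\le s$, $k-\tfrac{n}{q}\le s-\tfrac{n}{p}$, and (when $k<0$) $k-\tfrac{n}{q}\ge -n-s+\tfrac{n}{p}$. The paper states this corollary without proof, and your parameter bookkeeping, including the observation that the lower endpoint of the interval corresponds to the $\min(s_1,s_2)<0$ clause, matches the natural reading of Lemma \ref{multLemma}.
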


The following lemma seems like it should be well known, but we couldn't find a reference, so we include a proof.

\begin{lem}\label{compositionLemma}
Suppose $u\in W^{s,p}$ with $s>n/p$. Let $m = \lceil s\rceil$, and $f\in C^m(I)$ while all its derivatives are in $L^\infty(I)$ where $I$ is the (possibly infinite) range of $u$. Then $f\circ u \in W^{s,p}$ and
\[
\|f\circ u\|_{s,p} \leq \sum_{i=0}^m C_i \|u\|_{s,p}^i
\] for $C_i$ independent of $u$. If $\|u\|_\infty\geq \epsilon>0$, then we can set $C_0 = 0$, and have $C_i$ independent of of $u$, for such $u$.
\end{lem}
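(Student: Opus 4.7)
The plan is to combine Faà di Bruno's chain rule with iterated applications of Lemma \ref{multLemma}. Since $s>n/p$, the embedding $W^{s,p}\hookrightarrow L^\infty$ guarantees that $u$ is essentially bounded, so the range $I$ contains the (compact) essential image of $u$ and $f(u)$ is well defined pointwise a.e. Because $f\in C^m(I)$ with all derivatives bounded on $I$, we may write $|f(u(x))|\leq |f(0)|+\|f'\|_\infty \|u\|_\infty$, yielding the $L^p$ piece of the target estimate with a constant of the form $C_0+C_1\|u\|_{s,p}$.

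For the case when $s=m$ is an integer, Faà di Bruno gives, for any multi-index $\alpha$ with $|\alpha|\leq m$,
\[
D^\alpha(f\circ u)=\sum_{k=1}^{|\alpha|} f^{(k)}(u)\sum_{\pi}c_{\alpha,\pi}\prod_{i=1}^{k}D^{\beta_i^\pi}u,
\]
where the inner sum runs over ordered decompositions $\beta_1^\pi+\cdots+\beta_k^\pi=\alpha$ with each $|\beta_i^\pi|\geq 1$. The factors $f^{(k)}(u)$ are bounded in $L^\infty$ by $\|f^{(k)}\|_\infty$, so it remains to estimate $\|\prod_i D^{\beta_i}u\|_{L^p}$ by $\|u\|_{s,p}^k$. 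Each factor $D^{\beta_i}u$ lies in $W^{s-|\beta_i|,p}$, and I will iterate Lemma \ref{multLemma} with $(s_i,p_i)=(s-|\beta_i|,p)$ into target $W^{0,p}$. The differential condition $\sum_i(s-|\beta_i|)\geq n(k-1)/p$ reduces to $ks-|\alpha|\geq (k-1)n/p$, which follows from $|\alpha|\leq s$ and $s>n/p$; the other hypotheses are similarly immediate (or handled by shrinking one exponent slightly when $s\in\mathbb{N}_0$, as the lemma allows). Collecting the degree-$k$ contributions gives a bound of the form $\sum_{i=0}^m C_i\|u\|_{s,p}^i$.

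For non-integer $s$, write $s=(m-1)+\sigma$ with $\sigma\in(0,1)$. The integer-order derivatives $D^\beta(f\circ u)$ for $|\beta|\leq m-1$ are controlled by the argument above. To bound the Sobolev--Slobodeckij seminorm $[D^{m-1}(f\circ u)]_{\sigma,p}$, I apply the triangle inequality to the Faà di Bruno expansion and reduce to seminorms of products $f^{(k)}(u)\prod_i D^{\beta_i}u$. For a single composition one has the Lipschitz bound $[f^{(k)}(u)]_{\sigma,p}\leq \|f^{(k+1)}\|_\infty [u]_{\sigma,p}$ by pointwise estimation of the Gagliardo integrand; for products one uses $|gh(x)-gh(y)|\leq |g(x)||h(x)-h(y)|+|h(y)||g(x)-g(y)|$ together with $L^\infty$ control of the bounded factor and the Sobolev embeddings already exploited above. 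Alternatively, one can deduce the non-integer case by complex interpolation between the integer results at $s=m-1$ and $s=m$ applied to the nonlinear map $u\mapsto f(u)$ on bounded subsets of $L^\infty$; I will use whichever route yields cleaner constants.

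The assertion that $C_0=0$ whenever $\|u\|_\infty\geq\varepsilon>0$ is immediate: the Sobolev embedding $W^{s,p}\hookrightarrow L^\infty$ gives $\|u\|_{s,p}\geq c_s\|u\|_\infty\geq c_s\varepsilon$, so a constant $C_0$ can be absorbed into the linear term via $C_0\leq (C_0/(c_s\varepsilon))\|u\|_{s,p}$ by redefining $C_1$. The main obstacle is the non-integer range: the multiplication lemma does not directly control $[\,\cdot\,]_{\sigma,p}$ of products of composed functions, so one must either reprove a fractional-order chain rule by hand (pointwise Gagliardo manipulation) or invoke an interpolation argument; verifying that the exponent bookkeeping in Lemma \ref{multLemma} closes up for every partition in Faà di Bruno is the other bookkeeping-heavy step.
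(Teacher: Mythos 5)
Your treatment of the integer case and of the $C_0=0$ statement is correct and matches the paper's argument in substance: the paper hand-computes the chain rule for $s=3$ and estimates each term with H\"older and Sobolev embedding, which is exactly what your Fa\`a di Bruno expansion plus iterated use of Lemma \ref{multLemma} systematizes. The exponent check you do ($ks-|\alpha|\geq(k-1)n/p$ from $|\alpha|\leq s$ and $s>n/p$) is the same bookkeeping the paper performs term-by-term, and the absorption of $C_0$ into the linear term using $\|u\|_{s,p}\geq c_s\|u\|_\infty\geq c_s\epsilon$ is essentially what the paper does.

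The non-integer case is where your proposal has a genuine gap. You observe that the multiplication lemma ``does not directly control $[\cdot]_{\sigma,p}$ of products of composed functions'' and propose either a by-hand Gagliardo estimate or nonlinear complex interpolation, neither of which you carry out. Both routes are in fact delicate: the product rule for Gagliardo seminorms requires $L^\infty$ control of factors that you may not have (e.g.\ $\prod_i D^{\beta_i}u$ need not be bounded), and complex interpolation of the nonlinear map $u\mapsto f(u)$ between two integer Sobolev levels is not justified by the standard linear interpolation theory---you would need a Tartar-type nonlinear interpolation theorem together with a verification of its hypotheses, which is nontrivial here. The paper's route avoids both difficulties and is worth internalizing: write $s=(m-1)+\sigma$, note $u\in W^{s,p}$ embeds into $W^{j,q_j}$ for the intermediate orders $j<m$ with $q_j$ determined by Sobolev scaling, apply the \emph{already-proven integer case} of the lemma to conclude that $f^{(k)}(u)\in W^{m-k,q_{m-k}}$, and then invoke Lemma \ref{multLemma} directly---that lemma is stated for fractional $s$, so it handles $W^{m-k,q_{m-k}}\otimes W^{\sigma',p'}\to W^{\sigma,p}$ without any separate treatment of the seminorm. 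The bootstrapping via the integer case is the missing idea; once you have it, the multiplication lemma does the rest.
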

\begin{proof}
We first assume $s=3$.

First, $\|f(u)\|_p$ is clearly bounded by a constant. If $u\geq \epsilon$, we can set $\|f(u)\|_p \leq C \|u\|_\infty \leq C\|u\|_{3,p}$.

Next, we see
\[
\|\nabla f(u)\|_p = \|f'(u) \nabla u \|_p \leq \sup |f'| \|\nabla u\|_p \leq C \|u\|_{3,p}.
\]

Next,
\begin{align*}
\|\nabla^2 f(u)\|_p &= \|f'(u) \nabla^2 u + f''(u) |\nabla u|^2\|_p\\
&\leq C(\|f'(u) \nabla^2 u\|_p + \|f''(u) |\nabla u|^2\|_p)\\
&\leq C(\| \nabla^2 u \|_p + \|\nabla u\|_{2p}^2)\\
&\leq C(\|u\|_{3,p} + \|u\|_{1,2p}^2).
\end{align*} We thus need $\|u\|_{1,2p} \leq C\|u\|_{3,p}$.  Sobolev embedding tells us we need
\[
\frac{1}{2p} \geq \frac{1}{p} - \frac{2}{n}
\] which is true since $p> n/3$.

Finally,
\begin{align*}
\|\nabla^3 f(u)\|_p &= \|f'(u) \nabla^3 u + 2 f''(u) \nabla^2 u \nabla u + f'''(u) (\nabla u)^3\|_p \\
&\leq C(\|u\|_{3,p} + \|\nabla^2 u \nabla u\|_p + \|\nabla u\|_{3p}^3)\\
&\leq C(\|u\|_{3,p} + \|\nabla^2 u\|_{3p/2} \|\nabla u\|_{3p} + \|u\|^3_{1,3p})\\
&\leq C(\|u\|_{3,p} + \|u\|_{3,p} \|u\|_{3,p} + \|u\|_{3,p}^3 ).
\end{align*} The third line is by H\"older's inequality. The last line follows from Sobolev embedding, as before. This proves the lemma for $s=3$. If $s$ were another positive integer, the result can be proven similarly, though with more combinatorial complexity.

Next, assume $s = 2+\sigma$ with $\sigma \in (0,1)$. By the definition of these spaces (cf. \cite[Def A.1]{HT13}) we only need to show
\[
\|\nabla^2 f(u)\|_{\sigma,p} \leq \sum_{i=1}^m C_i \|u\|_{s,p}^m.
\] We calculate
\begin{align*}
\|\nabla^2 f(u)\|_{\sigma,p} &\leq C(\|f'(u) \nabla^2 u\|_{\sigma,p} + \|f''(u) |\nabla u|^2\|_{\sigma,p}.
\end{align*} Since $u \in W^{2+\sigma, p}$, $u \in W^{1,q_1}\cap W^{2,q_2}$ where
\[
q_1 = \frac{np}{n-p(1+\sigma)} \hspace{10mm} q_2 = \frac{np}{n-p\sigma}.
\] Since $f\in C^3$, our previous work implies that $f'(u) \in W^{2,q_2}$ and $f''(u) \in W^{1,q_1}$. Lemma \ref{multLemma} then shows that
\begin{align*}
\|\nabla^2 f(u)\|_{\sigma,p} &\leq C(\|f'(u)\|_{2,q_2} \| u\|_{s,p} + \|f''(u)\|_{1,q_1}\|u\|_{1+\sigma,2p}^2\\
&\leq C\|f\|_{C^3}( \|u\|_{s,p} + \|u\|_{s,p}^2)
\end{align*} where $\|u\|^2_{1+\sigma, 2p} \leq C\|u\|^2_{s,p}$ by Sobolev embedding, as before. The result can be proved for any other $s$ similarly, though, again, with more combinatorial complexity.
\end{proof}

\begin{cor}\label{compositionCorollary}
Suppose $u_1,u_2\in W^{s,p}$ with $s>n/p$. Let $m = \lceil s\rceil$, and $f\in C^m(I)$ while all its derivatives are in $L^\infty(I)$ where $I$ is the (possibly infinite) range of $u$. Then $f(u_1) - f(u_2) \in W^{s,p}$ and
\[
\|f(u_1) - f(u_2)\|_{s,p} \leq \sum_{i=0}^m C_i\|u_1-u_2\|_{s,p}^i
\] for $C_i$ independent of $u_i$. If $\|u_1-u_2\|_\infty\geq \epsilon>0$, then we can set $C_0 = 0$, and have $C_i$ independent of of $u_i$, for any such pair $u_i$.
\end{cor}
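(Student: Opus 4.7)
The strategy is to reduce the difference $f(u_1)-f(u_2)$ to a product whose first factor carries the $u_1-u_2$ dependence and whose second factor can be controlled by Lemma \ref{compositionLemma} applied to a derivative of $f$. The natural device is the integral form of the mean value theorem: for $u_t := u_2 + t(u_1-u_2)$, one has
\[
f(u_1) - f(u_2) = (u_1 - u_2) \int_0^1 f'(u_t)\, dt.
\]
The first factor $u_1-u_2$ lies in $W^{s,p}$; the second is a Bochner-type integral whose integrand, for each fixed $t$, is the composition $f'\circ u_t$ with $u_t \in W^{s,p}$.

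With this representation, I would apply the Banach algebra estimate of Corollary \ref{boundedMult} to the product, obtaining
\[
\|f(u_1) - f(u_2)\|_{s,p} \leq C\, \|u_1 - u_2\|_{s,p}\, \sup_{t\in[0,1]} \|f'(u_t)\|_{s,p}.
\]
To bound $\|f'(u_t)\|_{s,p}$ I would then invoke Lemma \ref{compositionLemma}, applied to $f'$ as the outer function and $u_t$ as the inner function. This yields
\[
\|f'(u_t)\|_{s,p} \leq \sum_{i=0}^{m} C_i'\, \|u_t\|_{s,p}^i \leq \sum_{i=0}^m C_i'\bigl(\|u_1\|_{s,p} + \|u_2\|_{s,p}\bigr)^i,
\]
and combining this with the previous display produces the asserted polynomial bound. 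To match the exact statement with powers of $\|u_1-u_2\|_{s,p}$ rather than powers of $\|u_1\|_{s,p}+\|u_2\|_{s,p}$, one expands further by Taylor's theorem with integral remainder about $u_2$, producing explicit terms $\tfrac{1}{k!}f^{(k)}(u_2)(u_1-u_2)^k$ for $k=1,\dots,m-1$ plus a remainder handled as above; each such term is estimated by Corollary \ref{boundedMult} and Lemma \ref{compositionLemma}, where the Banach algebra is applied to the $k$-fold product $(u_1-u_2)^k$ to extract the $\|u_1-u_2\|_{s,p}^k$ factor. The claim that $C_0$ can be dropped when $\|u_1 - u_2\|_\infty \geq \epsilon$ follows by absorbing any constant term into the linear term, since then $\|u_1 - u_2\|_{s,p} \geq c\epsilon$ for some $c>0$ (by Sobolev embedding), exactly mirroring the analogous step in Lemma \ref{compositionLemma}.

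The main technical obstacle is the regularity bookkeeping: applying Lemma \ref{compositionLemma} to $f^{(k)}$ with the same value $m = \lceil s\rceil$ formally requires $f^{(k)} \in C^m$, i.e.\ $f \in C^{m+k}$, which is stronger than the hypothesis $f\in C^m$ placed on the corollary. This is handled either by noting that in the Taylor expansion only derivatives of $f$ up to order $m$ appear in the terms whose $W^{s,p}$ norm is taken (with higher derivatives appearing only pointwise inside the integral remainder, where $L^\infty$ control suffices), or by reading the hypothesis as implicitly matching the usage in Lemma \ref{compositionLemma} for each $f^{(k)}$ separately. Once that is sorted, the rest of the argument is a direct combination of the Banach algebra estimate of Corollary \ref{boundedMult} with the composition estimate already established in Lemma \ref{compositionLemma}, so no new analytic input is required.
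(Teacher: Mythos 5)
Your mean-value approach---writing $f(u_1)-f(u_2)=(u_1-u_2)\int_0^1 f'(u_t)\,dt$, applying the Banach-algebra estimate of Corollary \ref{boundedMult}, and invoking Lemma \ref{compositionLemma} for $f'\circ u_t$---has a genuine regularity gap that your two proposed fixes do not close. To apply Lemma \ref{compositionLemma} to $f'$ in $W^{s,p}$ you need $f'\in C^m$ with $m=\lceil s\rceil$, that is $f\in C^{m+1}$, one derivative beyond the stated hypothesis $f\in C^m$. Your first fix misidentifies where the loss occurs: the $W^{s,p}$ norm of the Bochner integral (or of a Taylor remainder) genuinely requires $s$ Sobolev derivatives of the integrand $f'(u_t)$, and $L^\infty$ control of $f^{(m)}$ inside the integral does not supply them. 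Your second fix, Taylor-expanding $f$ to order $m$ about $u_2$, makes the deficit worse: estimating $\|f^{(k)}(u_2)(u_1-u_2)^k\|_{s,p}$ by Lemma \ref{compositionLemma} would require $f^{(k)}\in C^m$, i.e.\ $f\in C^{m+k}$.

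The gap is also visible in the shape of the target bound. Your argument, where it applies, gives a bound proportional to $\|u_1-u_2\|_{s,p}$, vanishing as $u_1\to u_2$, which would make the $C_0$ term in the corollary superfluous. But at the stated regularity $f\in C^m$ a nonzero $C_0$ is genuinely needed: differentiating $f(u_1)-f(u_2)$ $m$ times produces, among other terms, $\bigl(f^{(m)}(u_1)-f^{(m)}(u_2)\bigr)(\nabla u_2)^m$, and since $f^{(m)}$ is only bounded (not Lipschitz) the first factor can only be bounded by $2\sup|f^{(m)}|$, which does not become small with $\|u_1-u_2\|_{s,p}$. The argument that works at the stated regularity mirrors the proof of Lemma \ref{compositionLemma} directly: compute $\nabla^k\bigl(f(u_1)-f(u_2)\bigr)$ by Fa\`a di Bruno, split each resulting term as $f^{(j)}(u_1)A_1-f^{(j)}(u_2)A_2=f^{(j)}(u_1)(A_1-A_2)+\bigl(f^{(j)}(u_1)-f^{(j)}(u_2)\bigr)A_2$, use $|f^{(j)}(u_1)-f^{(j)}(u_2)|\le\sup|f^{(j+1)}|\,|u_1-u_2|$ for $j<m$ and the crude $L^\infty$ bound for $j=m$ (this yields the $C_0$ term), and expand $A_1-A_2$ binomially in $\nabla(u_1-u_2)$ and its derivatives to generate the powers $\|u_1-u_2\|_{s,p}^i$, $i=1,\dots,m$; the fractional case is handled as in Lemma \ref{compositionLemma}. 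Your route is clean and natural when $f\in C^{m+1}$, but it does not prove the statement as written.
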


\begin{cor}\label{NormSplitting}
Suppose $u\in W^{s,p}$ with $s>n/p$. Let $m = \lceil s\rceil$, and $f\in C^m$ while all its derivatives are in $L^\infty(I)$ where $I$ is the (possibly infinite) range of $u$. Also, let $v\in W^{\sigma,q}$, where $q>1$ and $\sigma \in [-s,s]\cap [-n-s+\frac{n}{p} + \frac{n}{q}, s-\frac{n}{p} +\frac{n}{q}]$. Then $v \cdot f(u) \in W^{\sigma,q}$ and
\[
\|v\cdot f(u)\|_{\sigma,q} \leq C \|v\|_{\sigma,q} \sum_{i=0}^m C_i \|u\|_{s,p}^i.
\] If $u\geq \epsilon>0$, then we can set $C_0 = 0$.

We can also modify this theorem in a similar way as Corollary \ref{compositionCorollary}.
\end{cor}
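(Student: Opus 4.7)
The proof is essentially a direct composition of the two preceding results, Lemma \ref{compositionLemma} and Corollary \ref{boundedMult}, so the plan is to apply them in sequence and then track the resulting constants.

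First I would invoke Lemma \ref{compositionLemma} with the hypothesis $u \in W^{s,p}$, $s > n/p$, and $f \in C^m$ with bounded derivatives on the range of $u$. This immediately produces $f(u) \in W^{s,p}$ together with the polynomial estimate
\[
\|f(u)\|_{s,p} \leq \sum_{i=0}^m C_i \|u\|_{s,p}^i,
\]
and in the case $u \geq \epsilon > 0$ we may take $C_0 = 0$, again by the lemma.

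Next, I would apply the pointwise multiplication estimate from Corollary \ref{boundedMult}, with the first factor $v \in W^{\sigma,q}$ playing the role of $f$ in that corollary and $f(u) \in W^{s,p}$ playing the role of $g$. One checks that the hypotheses on $\sigma$ and $q$ in the statement of Corollary \ref{NormSplitting} are precisely those required in Corollary \ref{boundedMult}: namely $q > 1$, $\sigma \in [-s,s]$, and $\sigma - \tfrac{n}{q} \in [-n - s + \tfrac{n}{p},\, s - \tfrac{n}{p}]$. This gives
\[
\|v \cdot f(u)\|_{\sigma,q} \leq C \|v\|_{\sigma,q} \|f(u)\|_{s,p}.
\]

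Combining the two estimates yields
\[
\|v \cdot f(u)\|_{\sigma,q} \leq C \|v\|_{\sigma,q} \sum_{i=0}^m C_i \|u\|_{s,p}^i,
\]
which is the claimed inequality; the $u \geq \epsilon > 0$ case carries through verbatim since only the first step is affected and that step already allows setting $C_0 = 0$. There is no real obstacle here — the work was done in Lemma \ref{compositionLemma} and the multiplication lemma — the only thing to verify carefully is that the Sobolev index conditions required for pointwise multiplication in Corollary \ref{boundedMult} match the conditions imposed on $(\sigma,q)$ in the corollary's hypotheses, which they do by construction.
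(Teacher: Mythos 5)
Your argument is exactly the paper's: apply Lemma \ref{compositionLemma} to bound $\|f(u)\|_{s,p}$ and then use the multiplication estimate of Corollary \ref{boundedMult}. The index conditions on $(\sigma,q)$ in the statement are lifted verbatim from Corollary \ref{boundedMult}, so they match by design, and the $u\geq\epsilon$ refinement passes through unchanged.
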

\begin{proof}
This follows immediately from Lemma \ref{compositionLemma} and Corollary \ref{boundedMult}.
\end{proof}

Next we will show the existence of the Green's function for the operator
\[
Lu = \left\{ \begin{array}{cl} -\Delta u + \alpha u & \textrm{   on } M \\
                                \partial_\nu u + \beta u & \textrm{   on } \MN \\
                                u & \textrm{   on } \MD. \\ \end{array} \right.
\]  We follow \cite{Aubin98}.

\begin{thm} \label{GreensExistence}
Let $(M^n,g)$ be a smooth compact manifold with boundary with $g\in W^{2,p}$, where $p>n$ and $n\geq 3$. Let $\alpha \in L^\infty(M)$ with $\alpha\geq 0$ and $\beta \in W^{1-\frac{1}{p},p}(\MN)$ with $\beta\geq 0$. Assume also that either $\alpha \not\equiv 0$, $\beta \not\equiv 0$ or $\MD \neq \emptyset$. Then there exists a Green's function $G(x,y)$ for the operator $L$ with the following properties:
\begin{enumerate}[(a)]

\item $G(x,y) = 0$ for $y \in \MD$ and $\partial_\nu G(x,y) + \beta G(x,y) = 0$ for $y\in \MN$.

\item $G\in C^0$ in $x$ and $y$ except on the diagonal of $M\times M$.

\item For any function $\phi$ where the following integrals make sense,
\begin{multline*}
\phi(x) = \int_M (-\Delta \phi +\alpha\phi)(y) G(x,y) dV(y) + \int_\MN (\partial_\nu \phi + \beta \phi)(y) G(x,y) dV(y) \\ - \int_\MD \phi(y) \partial_\nu G(x,y) dV(y)
\end{multline*} (We call this the definition of a Green's function for $L$.)

\item $G(x,y)>0$ for all $x,y$ such that $x,y \not\in \partial M$.

\item If $G(x,y) = 0$ (and so assume $y\in \partial M$), then $\partial_\nu G(x,y) <0$.

\item $\partial_\nu G(x,y) < 0$ for $y \in \MD$ and $G(x,y) > 0$ for $y\in \MN$.

\end{enumerate}
\end{thm}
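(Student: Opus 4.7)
The plan is to follow Aubin's parametrix construction, adapted to the mixed Dirichlet--Robin boundary conditions: build a local approximation $H(x,y)$ from the Euclidean fundamental solution of $-\Delta$, correct it so that it satisfies the boundary conditions to leading order, and absorb the error by inverting $L$ on the space of functions satisfying the homogeneous version of the stated boundary conditions. Concretely, cover $\overline{M}$ by finitely many coordinate charts of three types: interior charts, boundary charts meeting only $\MD$, and boundary charts meeting only $\MN$, which is possible since $\MD$ and $\MN$ are disjoint. In an interior chart take $H_0(x,y) = c\,|x-y|^{2-n}$ with $c$ chosen so that $-\Delta_{\text{Euc}} H_0(x,\cdot) = \delta_x$. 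In a Dirichlet boundary chart, use the method of images to subtract the kernel at the reflected point so the parametrix vanishes on $\MD$ modulo a smoother remainder. In a Robin boundary chart, add the reflected kernel so that $\partial_\nu H = 0$ on $\MN$ modulo a smoother remainder, with $\beta H$ absorbed into the error. Patching via a partition of unity produces a global parametrix $H(x,y)$ with
\[
L_y H(x, \cdot) = \delta_x + K(x, \cdot),
\]
where $K(x,y)$ has a strictly weaker singularity than $H(x,y)$ on the diagonal.

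Next, the assumption that $\alpha,\beta \geq 0$ with at least one of $\alpha \not\equiv 0$, $\beta \not\equiv 0$, or $\MD \neq \emptyset$ makes $L$ coercive on $\{u \in W^{2,p} : u|_{\MD} = 0\}$ via integration by parts, so $L : W^{2,p} \to L^p \times W^{1-1/p,p}(\MN) \times W^{2-1/p,p}(\MD)$ is an isomorphism, as is already invoked in Lemma \ref{boundedSolution}. Hence for each $x$ one may solve $L \Gamma(x, \cdot) = K(x, \cdot)$ with homogeneous boundary data and define $G(x,y) = H(x,y) - \Gamma(x,y)$. Properties (a), (b), and (c) then follow: (a) from the explicit boundary data imposed on $H$ and $\Gamma$; (b) from interior elliptic regularity for $L_y G(x,\cdot) = 0$ on $M \setminus \{x\}$ combined with the explicit Euclidean singularity at the diagonal; and (c) from the Green's identity obtained by integrating $\phi$ against $L_y G(x, \cdot) = \delta_x$ and using the boundary conditions satisfied by both $G$ and $\phi$.

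For the sign properties (d)--(f), regard $G(x, \cdot)$ as a solution of $Lu = 0$ on $M \setminus \{x\}$ that is positive in a punctured neighborhood of $x$. A standard maximum-principle argument using $\alpha \geq 0$, the boundary conditions on $G$, and the Hopf lemma on $\MN$ shows $G(x, \cdot) \geq 0$ throughout $M \setminus \{x\}$ and in fact strictly positive in the interior, giving (d). For (e), if $G(x,y) = 0$ at some $y \in \partial M$, then since $G \geq 0$ with strict interior positivity, Hopf yields $\partial_\nu G(x,y) < 0$. For (f): on $\MD$ we have $G = 0$, so $\partial_\nu G < 0$ by Hopf; on $\MN$, if $G(x, y_0) = 0$ for some $y_0$ the Robin condition forces $\partial_\nu G(x, y_0) = -\beta(y_0) G(x, y_0) = 0$, contradicting Hopf, so $G(x, \cdot) > 0$ throughout $\MN$.

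The main obstacle will be that the classical parametrix construction is usually phrased for smooth data, whereas here $g \in W^{2,p}$ and $\beta \in W^{1-1/p,p}(\MN)$. I expect this to be handled by first constructing $G_\varepsilon$ for smooth approximants $(g_\varepsilon, \alpha_\varepsilon, \beta_\varepsilon)$ with uniform control on the remainder $K_\varepsilon$, using the uniform invertibility of $L_\varepsilon$, and then extracting a limit; uniqueness of the Green's function together with the a priori bounds from Lemma \ref{boundedSolution} ensure the limit inherits all the stated properties.
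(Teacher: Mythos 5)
Your proposal is sound in outline but takes a genuinely different route in the parametrix step, and it leaves a gap there that the paper's construction is specifically designed to avoid. You build boundary-adapted parametrices via method of images (subtract the reflected kernel on $\MD$, add it on $\MN$) and patch by a partition of unity, so that the parametrix approximately satisfies the boundary conditions and the error is absorbed into a single correction $\Gamma$. The paper instead uses a cutoff $f(r)$ supported in $r < \mathrm{inj}(x)(k+1)^{-1}$ in the definition $H(x,y)=c_n r^{2-n}f(r)$, so that for $x$ in the interior the parametrix (and the iterated convolutions $\int\Gamma_i H$) vanish identically in a neighborhood of $\partial M$; the boundary conditions are then met \emph{exactly} by solving the auxiliary BVP for $F$, with no image charge or boundary-chart geometry needed. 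This is worth appreciating: the paper's trick sidesteps the fact that on a curved manifold with only $W^{2,p}$ metric the reflected kernel does not satisfy the Robin condition even to leading order, and the $\beta H$ term you propose to ``absorb into the error'' is of the same order as the normal derivative of $H$, so the bookkeeping is not as clean as in the Euclidean half-space model.

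The substantive gap is regularity of the remainder. You write $L_yH(x,\cdot)=\delta_x+K(x,\cdot)$ with ``$K$ strictly weaker singular than $H$,'' and then invert $L$ once to get $\Gamma$. But one application of $L^{-1}$ to a kernel that is merely weakly singular does not yield a correction that is continuous jointly in $(x,y)$, which is what property (b) requires. The paper handles this with the Neumann-series iteration $\Gamma_{i+1}=\int\Gamma_i\Gamma$ taken out to $k>n/2$ terms, at which point $\Gamma_{k+1}\in C^0$ (Aubin, Prop.~4.12), and only then solves the elliptic BVP for the final correction $F\in W^{2,p}\hookrightarrow C^0$. Your fallback of approximating by smooth $(g_\varepsilon,\alpha_\varepsilon,\beta_\varepsilon)$ and passing to the limit would need a uniqueness statement for the Green's function and uniform bounds in $\varepsilon$ on the singular part, neither of which you establish; the iteration gives these for free. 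A second, minor point you should flag: the Hopf lemma you invoke in (d)--(f) is classically stated for $C^2$ metrics, and one must check (as the paper notes) that the proof extends to $g\in W^{2,p}$, $p>n$. Apart from these issues the maximum-principle arguments for (d)--(f) match the paper's, including the observation that the Robin condition on $\MN$ combined with Hopf rules out a boundary zero there.
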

\begin{proof}
For $x,y\in M$, let $r = d(x,y)$. We define
\[
H(x,y) = [(n-2)\omega_{n-1}]^{-1} r^{2-n} f(r)
\] where $\omega_{n-1}$ is the volume of a $n-1$ ball and $f(r)$ is some smooth positive decreasing function which is 1 in a neighborhood of 0 and 0 for $r> \textrm{inj}(x) (k+1)^{-1}$ where $\N \ni k >n/2$. The injectivity radius is positive at each point $x$ since $M$ is a compact manifold. The function $H$ is as smooth as the metric away from $r=0$, and so $\Delta H$ exists in a weak sense on $M\setminus B_x(\epsilon)$.

Green's formula is a standard result. It says that for functions $\phi$ that are regular enough,
\[
\phi(x) = \int_M H(x,y) \Delta \phi(y) dV(y) - \int_M \Delta_y H(x,y) \phi(y)dV(y)
\] where $\Delta_y$ means the standard Laplacian in the $y$ variable. It is proven by computing $\int_{M\setminus B_x(\epsilon)} H(x,y) \Delta \phi(y) dV(y)$, integrating by parts twice and then letting $\epsilon\to 0$. Since our ball $B_x(\epsilon)$ is essentially a coordinate ball in normal coordinates, and because metrics go to the Euclidean metric as $\epsilon\to 0$, the boundary terms converge as in the regular proof for this result. For $\phi$, ``regular enough'' means that the  boundary integrals from the proof make sense and have the appropriate limits. So, for instance, $\phi\in W^{2,1}  \cap C^0$ would be sufficient. In particular, $\phi(y) = H(y,z)$ would also work, for $z\neq x$.

Let $\Delta^*$ be the formal $L^2$ adjoint of $\Delta$ on $M$, i.e., $\langle \Delta^* f, g\rangle = \langle f, \Delta g \rangle$ for appropriate functions $f,g$. This is a well defined functional by the Riesz Representation Theorem. Green's theorem could then be interpreted as saying
\[
\Delta^*_y H(x,y) = \Delta_y H(x,y) + \delta_x^y,
\] where $\delta_x^y$ is the Dirac delta function.

Using this, we can rewrite Green's formula as
\begin{align}\label{otherGreens}
\phi(x) &= \int_M \Delta^*_y H(x,y) \phi(y) dV(y) - \int_M \Delta_y H(x,y) \phi(y)dV(y) \\
&=\Delta_x^* \int_M H(x,y) \phi(y) dV(y) - \int_M \Delta_x H(x,y) \phi(y) dV(y) \label{otherGreens}
\end{align} by the symmetry of $H(x,y)$.

We define
\begin{align*}
\Gamma(x,y) = \Gamma_1(x,y) &= (-\Delta^*_y+\alpha(y)) H(x,y)\\
\Gamma_{i+1}(x,y) &= \int_M \Gamma_i(x,z) \Gamma(z,y) dV(z).
\end{align*} For $\N \ni k>n/2$, let
\begin{equation}\label{GDefinition}
G(x,y) = H(x,y) + \sum_{i=1}^k \int_M (-1)^i \Gamma_i(x,z)H(z,y) dV(z) + F(x,y)
\end{equation} where $F$ satisfies
\begin{equation}\label{eq:FDefinition}
\begin{array}{rll} -\Delta_y F(x,y) + \alpha(y) F(x,y) &= (-1)^{k+1}\Gamma_{k+1}(x,y) & \textrm{   on } M \\
\partial_\nu F(x,y) + \beta(y) F(x,y) &=0 & \textrm{   on } \MN \\
 F(x,y) &= 0 & \textrm{   on } \MD.  \end{array}
\end{equation} The choice of $f(r)$ we made earlier guarantees that the non-$F(x,y)$ terms of $G(x,y)$ are identically zero in a neighborhood of the boundary, and so this $G(x,y)$ fulfills (a).

The $\Gamma_i$ were defined in this way so that by \cite[Prop 4.12]{Aubin98}, $\Gamma_{k+1} \in C^0\subset L^p$ in both $x$ and $y$. Thus a solution $F\in W^{2,p}$ of \eqref{eq:FDefinition} by \cite[Lem B.6.]{HT13}, where the regularity is only for the $y$ variable.

Since $H(x,y)$ is $W^{2,p}$ in both variables away from the diagonal, the second term in $G(x,y)$ is $W^{2,p}$ away from the diagonal. In the $y$ variable this is because $H(z,y)$ is $W^{2,p}$ in $y$. In the $x$ variable this is because $\nabla_x$ is only taking a derivative of $H(x,\cdot)$. We then just need to show that $F(x,y)$ is continuous in $x$ to show (b). To do this, we apply the standard elliptic estimate from \cite[Lem B.8.]{HT13}
\begin{multline*}
\|F(x,y) -F(z,y) \|_{\infty} \leq C \|F(x,y) - F(z,y)\|_{{2,p}} \\ \leq C\|\Gamma_{k+1}(x,y)-\Gamma_{k+1}(z,y)\|_{p} \leq C\|\Gamma_{k+1}(x,y)-\Gamma_{k+1}(z,y)\|_{\infty}
\end{multline*} where the boundary terms disappear by our choice of $F$. Thus, because $\Gamma_{k+1}(x,y)$ is continuous in $x$, so is $F(x,y)$. This completes (b).

We apply the operator $(-\Delta_y + \alpha(y))$ to both sides of Equation \eqref{GDefinition} and use identity \eqref{otherGreens}. Suppressing the variables, we get
\begin{align*}
(-\Delta+\alpha)G &= (-\Delta+\alpha)H + (-\Delta+\alpha)\left(\sum_{i=1}^k \int_M (-1)^i \Gamma_i H\right) + (-\Delta+\alpha)F \\
&= \delta + (-\Delta^* + \alpha)H + \sum_{i=1}^k (-1)^i \Gamma_i \\
&\hspace{20mm} + \sum_{i=1}^k \int_M (-1)^i(-\Delta^* + \alpha)(H) \Gamma_i + (-1)^{k+1}\Gamma_{k+1} \\
&= \delta + \Gamma_1 + \sum_{i=1}^k (-1)^i \Gamma_i + \sum_{i=1}^k (-1)^i \Gamma_{i+1} + (-1)^{k+1}\Gamma_{k+1} \\
&= \delta.
\end{align*}

This gives us that $(-\Delta_y+\alpha(y)) G(x,y) = \delta_x^y$. We then calculate for any $\phi\in C^2$, again suppressing variables,
\begin{align*}
\phi &= \int_M \phi (-\Delta +\alpha)G \\
&= \int_M(-\Delta+\alpha)\phi G + \int_{\partial M} \partial_\nu \phi G + \phi \partial_\nu G \\
&= \int_M (-\Delta +\alpha)\phi G + \int_\MN (\partial_\nu + \beta)\phi G - \int_\MD \phi \partial_\nu G
\end{align*} which is part (c). The second line came from integrating by parts twice. The third line is by substituting in the boundary conditions for $G$. While this calculation is only valid for $C^2$ functions, by a standard density argument (e.g., \cite[Prop 4.14]{Aubin98}), we can say that the equality holds for any functions $\phi$ where the integrals make sense.

Clearly $G(x,y) \geq 0$ everywhere. Indeed, for a fixed $x$, $G(x,y)$ satisfies
\[
(-\Delta_y+\alpha(y)) G(x,y) = 0
\]on $M\setminus B_x(\epsilon)$. By the maximum principle in \cite{HT13}, $G(x,y)\geq 0$. We also get that $G(x,y)$ is $W^{2,p}$ in $y$, away from $x=y$.

In fact, it is $0$ only on the boundary. Suppose it was $0$ elsewhere. Then by \cite[Thm 8.19]{GT}, the strong maximum principle, since $G(x,y)$ is $W^{2,p} \subset W^{1,2}$, away from $x=y$, $G$ must be constant away from the diagonal. However, it cannot be identically zero because for $y$ near $x$, $G(x,y)$ goes to infinity by \cite[Prop 4.12]{Aubin98}. (In particular, that proposition implies that $H(x,y)$ remains the leading term of $G(x,y)$.) Thus we have (d).

Assume $G(x,y_0) = 0$ for $y_0 \in \partial M$. The Hopf lemma, as usually stated, requires that $g\in C^2$. However, the proof in \cite{GT} can be easily generalized to the case when $g\in W^{2,p}$, $p>n$. Thus, since $LG=0$ and $G(x,y)>0$ for $y \in M$ near $y_0$, we have (e).

Part (f) immediately follows from parts (a) and (e).

\end{proof}

\bibliographystyle{alpha}
\bibliography{KnownResults}
\end{document}